 \definecolor{ts}{rgb}{0,0.6,0}
 \definecolor{cf}{rgb}{1.0,0.8,0.0}
 \definecolor{zg}{rgb}{0,0,0.6}
 \definecolor{sg}{rgb}{0.7,0,0.2}
 \definecolor{todo}{rgb}{0.2,0.2,0.3}
 \definecolor{refkey}{rgb}{0.9,0.9,0.9}
 \definecolor{labelkey}{rgb}{0.9,0.9,0.9}
\newcommand{\addresseshere}{%
  \enddoc@text\let\enddoc@text\relax
}
\begin{document}

 	\title[Term structures with multiple curves and stochastic discontinuities]{Term structure modeling for multiple curves \\ with stochastic discontinuities}
 	
 	\author[C. Fontana, Z. Grbac, S. G{\"u}mbel \& T. Schmidt]{Claudio Fontana \and Zorana Grbac \and Sandrine G{\"u}mbel \and Thorsten Schmidt}
 	\address{University of Padova, 
 			Department of Mathematics ``Tullio Levi-Civita'', 
 			via Trieste 63, 35121, Padova, Italy.}
 	\email{fontana@math.unipd.it}
 	\address{Laboratoire de Probabilit\'es, Statistique et Mod\'elisation, Universit\'e Paris Diderot, avenue de France, 75205, Paris, France.}
 	\email{grbac@math.univ-paris-diderot.fr}
 	\address{University of Freiburg, Department of Mathematical Stochastics, Ernst-Zermelo-Str. 1, 79104 Freiburg, Germany.}
 	\email{sandrine.guembel@stochastik.uni-freiburg.de}
 	\address{Freiburg Institute of Advanced Studies (FRIAS), Germany. 
 University of Strasbourg Institute for Advanced Study (USIAS), France. 
 University of Freiburg, Department of Mathematical Stochastics, Ernst-Zermelo-Str. 1, 79104 Freiburg, Germany}
 	\email{thorsten.schmidt@stochastik.uni-freiburg.de}
 	
 	\thanks{The authors are thankful to two anonymous referees, an Associate Editor and the Editor for valuable remarks that helped improving the paper. The financial support from the Europlace Institute of Finance and the DFG project No. SCHM 2160/9-1 is gratefully acknowledged.}
 	\date{\today}
 	
 	\keywords{\noindent HJM model, semimartingale, affine process,  NAFLVR, large financial market, multiple yield curves, stochastic discontinuities, forward rate agreement, market models, Libor rate}
 	
 	\subjclass[2010]{60G44, 60G48, 60G57, 91B70, 91G20, 91G30.\\
 	\indent\textit{JEL classification}: C02, C60, E43, G12.}
 	
 	\begin{abstract} 
 	    We develop a general term structure framework taking stochastic discontinuities explicitly into account. Stochastic discontinuities are a key feature in interest rate markets, as for example the jumps of the term structures in correspondence to monetary policy meetings of the ECB show. We provide a general analysis of multiple curve markets under minimal assumptions in an extended HJM framework and provide a fundamental theorem of asset pricing based on NAFLVR. The approach with stochastic discontinuities permits to embed market models directly, unifying seemingly different modeling philosophies. We also develop a tractable class of models, based on affine semimartingales, going beyond the requirement of stochastic continuity. 
 	\end{abstract}

 	\maketitle

\section{Introduction}

This work aims at providing a general analysis of interest rate markets in the post-crisis environment.
These markets exhibit two key characteristics. The first one is the presence of \emph{stochastic discontinuities}, meaning jumps occurring at predetermined dates. Indeed, a view on historical data of European reference interest rates  (see Figure \ref{fig:ECBratesEONIA}) shows surprisingly regular jumps: many of the jumps occur in correspondence of monetary policy meetings of the European Central Bank (ECB), and the latter take place at pre-scheduled dates. 
This important feature, present in interest rate markets even before the crisis, has been surprisingly neglected by existing stochastic models.

\begin{figure}[h!]
	\centering
	\includegraphics[width=\textwidth]{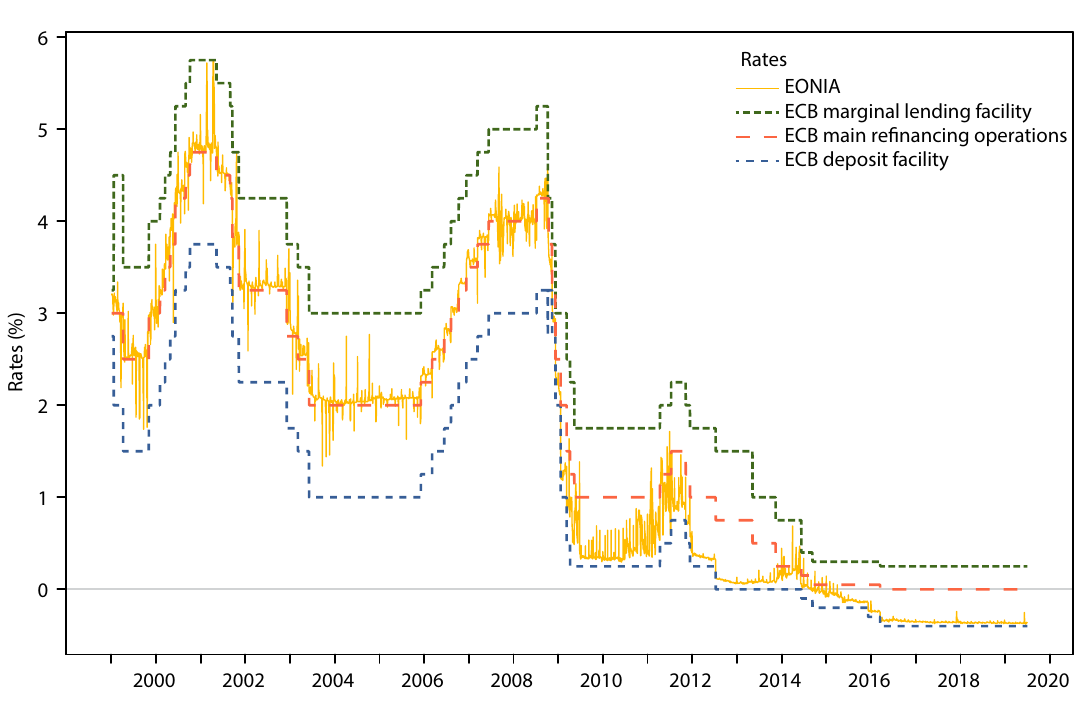}
	\caption{Historical series of the Eonia rate, the ECB deposit facility rate, the ECB marginal lending facility rate and the ECB main refinancing operations rate from January 1999 -- June 2019.	Source: European Central Bank.}
	\label{fig:ECBratesEONIA}
\end{figure}

The second key characteristic is the co-existence of different yield curves associated to different tenors. 
This phenomenon originated with the 2007 -- 2009 financial crisis,   when the spreads between different yield curves reached their peak beyond 200 basis points. Since then the spreads have remained on a non-negligible level, as shown in Figure \ref{fig:IborOISspread}. This was accompanied by a rapid development of interest rate models, treating multiple yield curves at different levels of generality and following different modeling paradigms. The most important curves to be considered in the current economic environment are the overnight indexed swap (OIS) rates and the interbank offered rates (abbreviated as Ibor, such as Libor rates from  the London interbank market) of various tenors. In the European market these are respectively the Eonia-based OIS rates and the Euribor rates.  

\begin{figure}[t]  
	\centering
	\includegraphics[width=\textwidth]{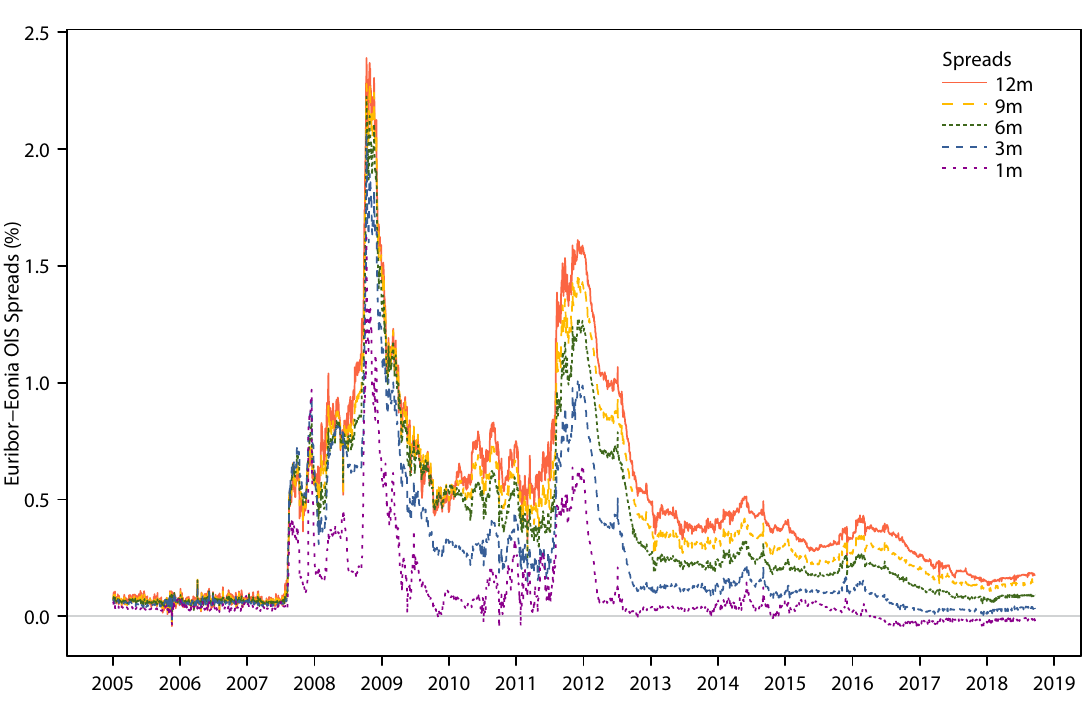}
	\caption{Euribor - Eonia OIS Spread for different maturities (1 month to 12 months) from  January 2005 -- September 2018. Source: Bloomberg and European Central Bank.}
	\label{fig:IborOISspread}
\end{figure} 	

It is our aim to propose a general treatment of markets with multiple yield curves in the light of stochastic discontinuities, meanwhile unifying the existing multiple curve modeling approaches.
The building blocks of this study are {\em OIS zero-coupon bonds} and \emph{forward rate agreements} (FRAs), which constitute the basic traded assets of a multiple curve financial market. 
While OIS bonds are bonds bootstrapped from quoted OIS rates, 
a FRA is an over-the-counter derivative consisting of an exchange of a payment based on a floating rate against a payment based on a fixed rate. FRAs can be regarded as the fundamental components of all interest rate derivatives written on Ibor rates.

The main contributions of the present paper can be outlined as follows:

\begin{itemize}
	\item
	A general forward rate setup for the term structure of FRAs and OIS bond prices inspired by the seminal Heath-Jarrow-Morton (HJM) approach of \cite{HJM}, suitably extended to allow for stochastic discontinuities. We derive a set of necessary and sufficient conditions characterizing risk-neutral measures  with respect to a general num\'eraire process (Theorem \ref{thm:HJM}). This framework unifies and generalizes the existing approaches in the literature.
	\item
	We study market models in general and, on the basis of minimal assumptions, derive necessary and sufficient drift conditions in the presence of stochastic discontinuities (Theorem \ref{thm:Libor}). This approach covers modeling under forward measures as a special case. Moreover, the generality of our forward rate formulation with stochastic discontinuities enables us to directly embed market models.
	\item
	We propose a new class of model specifications, based on affine semimartingales as recently introduced in \cite{KellerResselSchmidtWardenga2018}, going beyond the classical requirement of stochastic continuity. We illustrate the potential for practical applications by means of some simple examples.
	\item
	Finally, we provide a general description of a multiple curve financial market under minimal assumptions and a characterization of absence of arbitrage. We prove the equivalence between the notion of no asymptotic free lunch with vanishing risk (NAFLVR) and the existence of an equivalent separating measure (Theorem \ref{thm:FTAP}). To this effect, we rely on the theory of large financial markets and we extend to multiple curves and to an infinite time horizon the main result of \cite{CuchieroKleinTeichmann}. To the best of our knowledge, this represents the first rigorous formulation of an FTAP in the context of multiple curve financial markets.
\end{itemize}

\subsection{The modeling framework} 

We briefly illustrate the ingredients of our modeling framework, referring to the sections in the sequel for full details. First, forward rate agreements (FRAs) are quoted in terms of forward rates. More precisely, the 
\emph{forward Ibor rate} $L(t,T,\delta)$ at time $t\le T$ with {\em tenor} $\delta$ and settlement date $T$ is given as the unique value of the fixed rate which assigns the FRA value zero at inception $t$. This leads to the fundamental representation of FRA prices
\begin{equation} \label{eq:liborFRA}
	\pifra(t,T,\delta,K) = \delta \big( L(t,T,\delta) - K \big) P(t,T+\delta),
\end{equation}
where $P(t,T+\delta)$ is the price at time $t$ of an OIS zero-coupon bond with maturity $T+\delta$ and $K$ is a fixed rate.
Formula \eqref{eq:liborFRA} implicitly defines the yield curves  $T\mapsto L(t,T,\delta)$ for different tenors $\delta$, thus explaining the terminology multiple yield curves. In the following, we will simply call the associated markets \emph{multiple curve financial markets} (compare with Definition \ref{def:market} below).

The \emph{forward rate formulation} makes some additional assumptions on the yield curves. More specifically, it postulates that the right-hand side of \eqref{eq:liborFRA} admits the representation
\begin{equation}  \label{eq:FRAintro}
	\pifra(t,T,\delta,K) 
	= S_t^\delta  e^{-\int_{(t,T]} f(t,u,\delta) \eta(du) } - e ^{-\int_{(t,T+\delta]} f(t,u) \eta(du)} (1+\delta K).
\end{equation}
Here, $f(t,T)$ denotes the OIS forward rate, so that $P(t,T)=e ^{-\int_{(t,T]} f(t,u) \eta(du)}$, while $f(t,T,\delta)$ is the  $\delta$-tenor forward rate and $S^\delta $ is a multiplicative spread. We extend the usual HJM formulation by considering a measure $\eta$ containing atoms which by no-arbitrage will be precisely related to the set of stochastic discontinuities in the dynamics of forward rates and multiplicative spreads.

Representations \eqref{eq:liborFRA} and \eqref{eq:FRAintro}	constitute two seemingly different starting points for multiple curve modeling: {\em market models} and {\em HJM models}, respectively. In the following, we shall derive  no-arbitrage drift restrictions for both classes. Moreover, we will show that the two classes can be analyzed in a unified setting (see Appendix \ref{app:MM}).

\subsection{Stochastic discontinuities in interest rate markets} \label{sec:discontinuities}

One of the main novelties of our approach consists in the presence of stochastic discontinuities, representing events occurring at announced dates but with a possibly unanticipated informational content.
The importance of jumps at predetermined times is widely acknowledged in the financial literature, see for example \cite{Merton1974}, \cite{Piazzesi2001,Piazzesi2005,Piazzesi2010}, \cite{KimWright}, \cite{DuffieLando2001} (see also the introductory section of \cite{KellerResselSchmidtWardenga2018}). However, to the best of our knowledge, stochastic discontinuities have never been  taken explicitly into account in stochastic models for the term structure of interest rates. This feature is extremely relevant in financial markets. For instance, the Governing Council (GC) of the European Central Bank (ECB) holds its monetary policy meetings on a regular basis at predetermined dates, which are publicly known for about two years ahead. At such dates the GC takes its monetary policy decisions and determines whether the main ECB interest rates will change. In turn, these key interest rates are principal determinants of the Eonia rate, as illustrated by Figure \ref{fig:ECBratesEONIA}.

A closer inspection of Figure \ref{fig:ECBratesEONIA} reveals the presence of two different types of stochastic discontinuities in the Eonia rate. 
On the one hand, there are structural jumps in correspondence to monetary policy decisions. This type of discontinuity is evidenced by a step-like jump of the Eonia rate in correspondence to a new level of the ECB lending rate (see Figure \ref{fig:spikes}, right panel). 
On the other hand, there are spiky jumps which are unrelated to the monetary policy and occur at the end of the maintenance periods of banks' deposits. Indeed, in the Eurosystem banks are required to hold deposits on accounts with their national central bank over fixed maintenance periods. Banks who fail to keep sufficient reserves during the period need to borrow in the interbank market before the close of the maintenance period, thereby generating a temporary liquidity pressure in interbank lending which leads to a jump in the Eonia rate (see e.g. \cite{Beirne} and \cite{HT13}). This second type of stochastic discontinuity is evidenced by the spikes in the left panel of Figure \ref{fig:spikes}. 
More formally, we distinguish these two kinds of stochastic discontinuities as follows: jumps of \emph{type~I} are step-like jumps to a new level and jumps of \emph{type~II} are upward/downward jumps followed  by a fast continuous decay/ascent to the pre-jump level.

Our framework allows for the possibility of both type I and type II stochastic discontinuities.
In addition, by relaxing the classical assumption that the term structure of bond prices is absolutely continuous (see equation \eqref{eq:FRAintro}), we also allow for discontinuities in time-to-maturity at predetermined dates. 
In a credit risky setting, term structures with stochastic discontinuities have been recently studied in \cite{GehmlichSchmidt2016MF} and \cite{FontanaSchmidt2018}.
Finally, besides stochastic discontinuities as described above, we also allow for totally inaccessible jumps, representing events occurring as a surprise to the market and generated by a general random measure with absolutely continuous compensator. Such jumps have been already considered in several multiple curve models (see e.g. \cite{CrepeyGrbacNguyen12} and \cite{CuchieroFontanaGnoatto2016}).

\begin{figure}[t]
	\includegraphics[width=\textwidth]{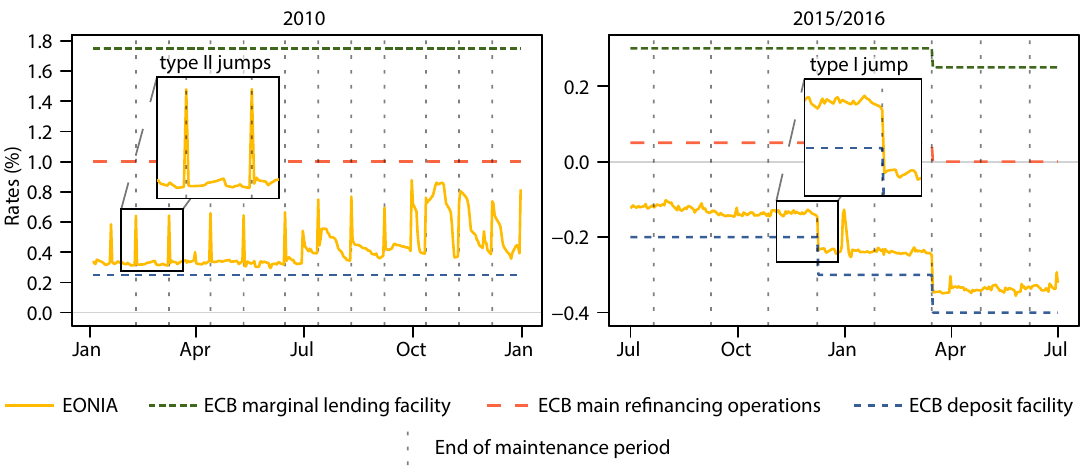}
	\caption{Eonia and ECB rates from January 2010 -- December 2010 (left panel) and  July 2015 -- June 2016 (right panel). The exposed discontinuities on the left panel are of type II, while the exposed discontinuity on the right panel is of type~I. Source: European Central Bank.}
	\label{fig:spikes}
\end{figure}

\subsection{Overview of the literature}
\label{literature-overview}

The literature on multiple curve models has witnessed a tremendous growth over the last years. Therefore, we only give an overview of the contributions that are the most related to the present paper, referring to the volume of \cite{BianchettiMorini13} and the monographs by \cite{Henrard2014} and \cite{GrbacRunggaldier} for further references and a guide to post-crisis interest rate markets. 
Multiplicative spreads for modeling multiple curves have been first considered in \cite{Henrard2007}. 
Adopting a short rate approach, an insightful empirical analysis has been conducted by \cite{FilipovicTrolle13}, showing that spreads can be decomposed into credit and liquidity components.
The extended HJM approach developed in Section \ref{sec:TSM} generalizes the framework of \cite{CuchieroFontanaGnoatto2016}, who consider It\^o semimartingales as driving processes and,  therefore, do not allow for stochastic discontinuities (see Remark \ref{rem:CFG} for a detailed comparison). 
HJM models taking into account multiple curves have been proposed in \cite{Crepey2015}  with L\'evy processes as drivers and in \cite{MoreniPallavicini14} in a Gaussian framework. 
In the market model setup, the extension to multiple curves was pioneered by \cite{Mercurio2010} and further developed in \cite{MercurioXie2012}. More recently, \cite{Grbac2015affine} have developed an affine market model in a forward rate setting, further generalized by \cite{CuchieroFontanaGnoatto2016b}.
All these models, both HJM and market models, can be easily embedded in the general framework proposed in this paper.

\subsection{Outline of the paper}

In Section \ref{sec:generalview}, we introduce the basic assets in a multiple curve financial market.  The general multi-curve framework inspired by the HJM philosophy, extended to allow for stochastic discontinuities, is developed and fully characterized in Section \ref{sec:TSM}. In Section \ref{sec:market models}, we introduce and analyze general market models with multiple curves. 
In Section \ref{sec:affine}, we propose a flexible class of models based on affine semimartingales, in a setup which allows for stochastic discontinuities.
In Section \ref{sec:NAFLVR}, we prove a version of the fundamental theorem of asset pricing for multiple curve financial markets, by relying on the theory of large financial markets. 
Finally, the two appendices contain some technical results and a result on the embedding of market models into the extended HJM framework.

\section{A general analysis of multiple curve financial markets}
\label{sec:generalview}

In this section, we provide a general description of a multiple curve market under minimal assumptions.
We assume that the interbank offered rates (Ibor) are quoted for a finite set of tenors $\cD:=\{\delta_1,\dots,\delta_m\}$, with $0< \delta_1 < \ldots < \delta_m$. Typically,  about seven tenors, ranging from 1 day to 12 months, are available in the market. 	
For a tenor $\delta\in\cD$, the Ibor rate for the time interval $[T, T+\delta]$ fixed at time $T$ is denoted by $L(T, T, \delta)$.  		
For $0\leq t\leq T<+\infty$, we denote by $P(t, T)$ the price at date $t$ of an OIS zero-coupon bond with maturity $T$. 

\begin{definition}
	A {\em forward rate agreement} (FRA) with tenor $\delta$, settlement date $T$, strike $K$ and unitary notional amount, is a contract in which a payment based on the Ibor rate $L(T,T,\delta)$ is exchanged against a payment based on the fixed rate $K$ at maturity $T+\delta$. The price of  a FRA contract at date $t\le T+\delta$ is denoted by $\pifra(t, T, \delta, K)$ and the payoff at maturity $T+\delta$ is given by
	\begin{align}	\label{eq:payoff_FRA}
		\pifra(T+\delta,T,\delta,K) 
		= \delta L(T,T,\delta) - \delta K.
	\end{align}
\end{definition}

The two addends in \eqref{eq:payoff_FRA} are typically referred to as floating leg and fixed leg, respectively. We define the multiple curve financial market as follows.

\begin{definition}	\label{def:market}
	The {\em multiple curve financial market} is the financial market containing the following two sets of traded assets:
	\begin{enumerate}
		\item[(i)] 
		OIS zero-coupon bonds, for all maturities $T\ge 0$;
		\item[(ii)] 
		FRAs, for all tenors $\delta\in\cD$, all settlement dates $T\ge 0$ and all strikes $K\in\R$.
	\end{enumerate}
\end{definition}

The assets included in Definition \ref{def:market} represent the quantities that we assume to be {\em tradable} in the financial market.
We emphasize that, in the post-crisis environment, FRA contracts have to be considered on top of OIS bonds as they cannot be perfectly replicated by the latter, due to the risks implicit in interbank transactions.

We work under the {\em standing assumption} that FRA prices are determined by a linear valuation functional. This assumption is standard in interest rate modeling and is also coherent with the fact that we consider {\em clean prices}, i.e., prices which do not model explicitly counterparty and liquidity risk (the counterparty and liquidity risk of the interbank market as a whole is of course present in Ibor rates, recall Figure \ref{fig:IborOISspread}). Clean prices are fundamental quantities in interest rate derivative valuation and they also form the basis for the computation of XVA adjustments, see Section 1.2.3 in \cite{GrbacRunggaldier} and \cite{Brigo_et_al2018}.

Recalling \eqref{eq:payoff_FRA}, the value of the fixed leg of a FRA at time $t\le T+\delta$ is given by $\delta K P(t,T+\delta)$. 
Hence, we obtain that $\pifra(t,T,\delta,K)$ is an affine function of $K$.

\begin{definition}
	\label{Ibor-rate}
	The \emph{forward Ibor rate} $L(t,T,\delta)$ at $t\in[0,T]$ for tenor $\delta\in\cD$ and maturity $T>0$ is the unique value $K$ satisfying $\pifra(t,T,\delta,K)=0$.
\end{definition}

Due to the affine property of FRA prices combined with the above definition, the fundamental representation 
\begin{equation*}
	\pifra(t,T,\delta,K) = \delta\bigl(L(t,T,\delta)-K\bigr)P(t,T+\delta),
\end{equation*}
follows immediately for $t\leq T$, while for $t\in[T,T+\delta]$ we have of course $$\pifra(t,T,\delta,K)=\delta(L(T,T,\delta)-K)P(t,T+\delta).$$

Starting from this expression, under no additional assumptions, we can decompose the value of the floating leg of the FRA into a  {\em multiplicative spread} and a tenor-dependent {\em discount factor}. Indeed, setting $\bar{K}(\delta):=1+\delta K$, we can write
\begin{align} 
	\pifra(t,T,\delta,K) 
	&= \bigl(1+\delta L(t,T,\delta)\bigr) P(t,T+\delta) - \bar K(\delta) P(t,T+\delta) \notag\\
	&=: S_t^\delta  P(t,T,\delta) - \bar{K}(\delta)P(t,T+\delta), \label{eq:FRA}
\end{align}
where $S^{\delta}_t$ represents a multiplicative spread and $P(t,T,\delta)$ a discount factor satisfying $P(T,T,\delta)=1$, for all $T\geq0$ and $\delta\in\cD$. 
More precisely, it holds that
\begin{align*}
	S_t^\delta  
	= P(t,t+\delta)\bigl(1+\delta L(t,t,\delta)\bigr)
	= \frac{1+\delta L(t,t,\delta)}{1+\delta F(t,t,\delta)},
\end{align*}
where $F(t,t,\delta)$ denotes the simply compounded OIS rate at date $t$ for the period $[t,t+\delta]$. 
The discount factor $P(t,T,\delta)$ is therefore given by
\[
P(t,T,\delta) = \frac{P(t,T+\delta)}{P(t,t+\delta)}\frac{1+\delta L(t,T,\delta)}{1+\delta L(t,t,\delta)}. \]
We shall sometimes refer to $P(\cdot,T,\delta)$ as {\em $\delta$-tenor bonds}. These bonds essentially span the term structure, while $S^\delta $ accounts for the counterparty and liquidity risks in the interbank market, which do not vanish as $t \to T$.

\begin{remark}	\label{rem:pre_crisis}
	In the classical pre-crisis single curve setup, the FRA price is  given by the textbook formula
	\[
	\pifra(t, T, \delta,  K) = P(t, T) - P(t, T+\delta)\bar{K}(\delta).
	\] 
	The single curve setting can be recovered from our approach by  setting $S^\delta \equiv1$ and $P(t,T,\delta):=P(t,T)$, for all $\delta\in\cD$ and $0\leq t\leq T<+\infty$. 
	This also highlights that, in a single curve setup, FRA prices are fully determined by OIS bond prices. 
\end{remark}

\begin{remark}
	\label{rem:FX_analogy}
	Representation \eqref{eq:FRA} allows for a natural interpretation via a {\em foreign exchange analogy}, following some ideas going back to \cite{Bianchetti10}. Indeed, Ibor rates can be thought of as simply compounded rates in a foreign economy, with the currency risk playing the role of the counterparty and liquidity risks of interbank transactions. 
	In this perspective, $P(t,T,\delta)$ represents the price at date $t$ (in units of the foreign currency) of a foreign zero-coupon bond with maturity $T$, while $S^{\delta}_t$ represents the spot exchange rate between the foreign and the domestic currencies. 
	The quantity $S^{\delta}_tP(t,T,\delta)$ appearing in \eqref{eq:FRA} corresponds to the value at date $t$ (in units of the domestic currency) of a payment of one unit of the foreign currency at maturity $T$. 
	In view of Remark \ref{rem:pre_crisis}, the pre-crisis scenario assumes the absence of currency risk, in which case  $S^{\delta}_tP(t,T,\delta)=P(t,T)$.
	Related foreign exchange interpretations of multiplicative spreads have been discussed in \cite{CuchieroFontanaGnoatto2016}, \cite{Macrina2018} and \cite{NguyenSeifried2015}. 
\end{remark}

With the additional assumption that OIS and $\delta$-tenor bond prices are of HJM form, we obtain our second fundamental representation \eqref{eq:FRAintro}. In the following, we will show that such a representation allows for a precise characterization of arbitrage-free multiple curve markets and leads to interesting specifications by means of  affine semimartingales. 	

\section{An extended HJM approach to term structure modeling}\label{sec:TSM}

In this section, we present a general framework for modeling the term structures of OIS bonds and FRA contracts, 
inspired by the seminal work by \cite{HJM}. 
We work in an infinite time horizon (models with a finite time horizon $T<+\infty$ can be treated by stopping the relevant processes at $T$).
As mentioned in the introduction, a key feature of the proposed framework is that we allow for the presence of stochastic discontinuities, occurring in correspondence of a countable set of predetermined dates $(T_n)_{n\in\N}$, with $T_{n+1}>T_n$, for every $n\in\N$, and $\lim_{n\rightarrow+\infty}T_n=+\infty$.

We assume that the stochastic basis $(\Omega,\ccF,\bbF,\Q)$ supports a $d$-dimensional Brownian motion $W = (W_t)_{t\geq0}$ together with an integer-valued random measure $\mu(dt,dx)$ on $\R_+\times E$, with compensator $\nu(dt,dx)=\lambda_t(dx)dt$, where $\lambda_t(dx)$ is a kernel from $(\Omega\times\R_+,\mathcal{P})$ into $(E,\cB_E)$, with $\mathcal{P}$ denoting the predictable sigma-field on $\Omega\times\R_+$ and $(E,\cB_E)$ a Polish space with its Borel sigma-field.
We refer to \cite{JacodShiryaev} for all unexplained notions related to stochastic calculus.

As a first ingredient, we assume the existence of a general {\em  num\'eraire} process $X^0 = (X^0_t)_{t\geq0}$, given by a strictly positive semimartingale admitting the representation
\beq	\label{eq:num}
X^0 = \mathcal{E}\bigl(B + H\cdot W + L\ast (\mu - \nu)\bigr),
\eeq 
where $H = (H_t)_{t\geq0}$ is an $\R^d$-valued progressively measurable process so that $\int_0^T\|{H}_s\|^2 ds < + \infty$ a.s. for all $T>0$ and $L:\Omega\times\R_+\times E\rightarrow(-1,+\infty)$ is a $\mathcal{P}\otimes\cB_E$-measurable function satisfying $\int_0^T\int_E(L^2(t,x)\wedge|L(t,x)|)\lambda_t(dx)dt<+\infty$ a.s. for all $T>0$.
Note that, in view of \cite[Theorem II.1.33]{JacodShiryaev}, the last condition is necessary and sufficient for the well-posedness of the stochastic integral $L\ast(\mu-\nu)$.
The process $B=(B_t)_{t\geq0}$ is assumed to be a finite variation process of the form
\beq  \label{eq:num_FV}
B_t = \int_0^tr_sds + \sum_{n\in\N}\Delta B_{T_n}\ind{T_n \leq t}, 
\qquad\text{ for all }t\geq0,
\eeq
where $r = (r_t)_{t\geq0}$ is an adapted process satisfying $\int_0^T|r_s|ds< + \infty$ a.s. for all $T>0$ and $\Delta B_{T_n}$ is an $\ccF_{T_n}$-measurable random variable taking values in $(-1,+\infty)$, for each $n\in\N$. Note that this specification of $X^0$ explicitly allows for jumps at times $(T_n)_{n \in \N}$, the stochastic discontinuity points of $X^0$.
The assumption that $\lim_{n\rightarrow+\infty}T_n=+\infty$ ensures that the summation in \eqref{eq:num_FV} involves only a finite number of terms, for every $t\geq0$.

\begin{remark}	\label{rem:setting_generality}
	Requiring minimal assumptions on $X^0$ enables us to unify  different modeling approaches. Usually, it is postulated that $X^0 = \exp(\int_0^{\cdot}\rois_sds)$, with $\rois$ representing the OIS short rate.
	In the setting considered here, $X^0$ can also be generated by a sequence of OIS bonds rolled over at dates $(T_n)_{n\in\N}$, compare \cite[Definition 5]{KleinSchmidtTeichmann2016} for a precise notion. This allows to avoid the unnecessary assumption of  existence of a bank account.
	In market models, the usual choice for $X^0$ is the OIS-bond with the longest available maturity, see Remark \ref{rem:termBond}.
	Moreover, it is also possible to choose $\Q$ as the physical probability measure and $X^0$ as the growth-optimal portfolio. By this, we cover the  {\em benchmark approach} to term structure modeling (see \cite{BrutiLiberati_et_al2010} and \cite{PlatenHeath}).
		While these examples refer to situations where the num\'eraire $X^0$ is tradable, {\em we do not necessarily assume that $X^0$ represents the price process of a traded asset or portfolio} (with the exception of Section \ref{sec:NAFLVR}). This generality yields additional flexibility, since $X^0$ may also represent a state-price density or pricing kernel in the spirit of \cite{Con92}, embedding a choice of the discounting asset and a probability change into a single process (compare also with Remark \ref{rem:deflators}). As explained below, the focus of Sections \ref{sec:TSM}--\ref{sec:affine} will be on deriving necessary and sufficient conditions for the local martingale property of $X^0$-denominated prices under $\Q$.
\end{remark}

The reference probability measure $\Q$ is said to be a {\em risk-neutral measure} for the multiple curve financial market with respect to $X^0$ if the $X^0$-denominated price process of every asset included in Definition \ref{def:market} is a $\Q$-local martingale.
One of our main goals consists in deriving necessary and sufficient conditions for $\Q$ to be a risk-neutral measure.
In Section \ref{sec:NAFLVR}, under the additional assumption that the num\'eraire $X^0$ is tradable, we will prove a fundamental theorem characterizing absence of arbitrage in the sense of NAFLVR, for which the existence of a risk-neutral measure is a sufficient condition (see Remark \ref{rem:ELMM}). 

In view of representation \eqref{eq:FRA}, modeling a multiple curve financial market requires the specification of multiplicative spreads $S^{\delta}$ and $\delta$-tenor bond prices, for $\delta\in\cD$.
The  {\em multiplicative spread} process $S^\delta  = (S_t^\delta )_{t\geq0}$ is assumed to be a strictly positive semimartingale, for each $\delta\in\cD$. Similarly as in \eqref{eq:num}, we assume that  $S^\delta $ admits the representation
\beq  \label{eq:spread}
S^\delta 
= S_0^\delta \, \mathcal{E}\bigl(A^{\delta} + H^{\delta}\cdot W 
+ L^{\delta} \ast (\mu - \nu)\bigr),
\eeq
for every $\delta\in\cD$, where $A^{\delta}$, $H^{\delta}$ and $L^{\delta}$ satisfy the same requirements of the processes $B$, $H$ and $L$, respectively, appearing in \eqref{eq:num}.
In line with \eqref{eq:num_FV}, we furthermore assume that
\beq \label{eq:spread_FV}
A_t^{\delta} = \int_0^t\alpha_s^{\delta}ds + \sum_{n\in\N}\Delta A^{\delta}_{T_n}\ind{T_n \leq t}, 
\qquad\text{ for all }t\geq0,
\eeq
where $(\alpha_t^{\delta})_{t\geq0}$ is an adapted  process satisfying $\int_0^T|\alpha_s^{\delta}|ds< + \infty$ a.s., for all $\delta\in\cD$ and $T>0$, and $\Delta A_{T_n}^{\delta}$ is an $\ccF_{T_n}$-measurable random variable taking values in $(-1,+\infty)$, for each $n\in\N$ and $\delta\in\cD$.

We let $P(t, T, 0): = P(t, T)$, for all $0 \leq t \leq T<+\infty$.
We assume that, for every $T\in\R_+$ and $\delta\in\cD_0:=\cD\medcup\{0\}$, the $\delta$-tenor bond price process $(P(t, T, \delta))_{0 \leq t \leq T}$ is of the form
\beq	\label{eq:PtT}
P(t, T, \delta) = \exp\left( - \int_{(t, T]} f(t, u, \delta) \eta(du)\right), 
\qquad\text{ for all }0\leq t\leq T,
\eeq
where
\beq	\label{eq:eta}
\eta(du) = du + \sum_{n\in\N}\delta_{T_n}(du).
\eeq
Note that $\eta([0,T])<+\infty$, for all $T>0$.
We adopt the convention $\int_{(T, T]} f(T, u, \delta) \eta(du) = 0$, for all $T\in\R_+$ and $\delta\in\cD_0$.
For every  $T\in\R_+$ and $\delta\in\cD_0$, we assume that the {\em forward rate} process $(f(t,T,\delta))_{0\leq t\leq T}$  satisfies
\begin{align}\label{eq:fwd_rate}
	f(t,T,\delta) &=  f(0,T,\delta) + \int_0^t a(s,T,\delta) ds 
	+ V(t, T, \delta)
	+ \int_0^t b(s,T,\delta) dW_s \nonumber\\
	&\quad + \int_0^t  \int_E g(s, x, T,\delta) \bigl(\mu(ds,dx) - \nu(ds,dx)\bigr),
\end{align}
for all $0\leq t\leq T$, where $V(\cdot, T, \delta) = V(t, T, \delta)_{0 \leq t \leq T}$ is a pure jump adapted process of the form
\[
V(t, T, \delta) = \sum_{n\in\N} \Delta V(T_n,T,\delta) \ind{T_n \leq t},
\qquad\text{ for all }0\leq t\leq T,
\]
with $\Delta V(t,T,\delta) = 0$ for all $0 \leq T < t <+\infty$.
Moreover, for all $n\in\N$, $T\in\R_+$ and $\delta\in\cD_0$, we also assume that $\int_0^T|\Delta V(T_n,u,\delta)|du<+\infty$.

\begin{remark}\label{rem:disc_HJM}
	\begin{enumerate}
		\item The above framework allows for a general modeling of type I and type II stochastic discontinuities (see Section \ref{sec:discontinuities}), as we illustrate by means of explicit examples in Section \ref{sec:affine}. Moreover, the dependence on $\delta$ in equations \eqref{eq:spread}--\eqref{eq:fwd_rate} allows the discontinuities to have a different impact on different yield curves. This is consistent with the typical market behavior, which shows a dampening of the discontinuities over longer tenors.
		\item The discontinuity dates $(T_n)_{n\in\N}$ play two distinct but equally important roles. On the one hand, they represent stochastic discontinuities in the dynamics of all relevant processes. On the other hand, they represent discontinuity points in maturity of bond prices (see equation \eqref{eq:PtT}). As shown in Theorem \ref{thm:HJM} below, absence of arbitrage will imply a precise relation between these two aspects.
	\end{enumerate}
\end{remark}

\begin{assumption}\label{ass}
	The following conditions hold a.s. for every $\delta\in\cD_0$:
	\begin{enumerate}[(i)]
		\item 
		the {\em initial forward curve} $T\mapsto f(0, T, \delta)$ is $\cF_0\otimes\cB(\R_+)$-measurable, real-valued and satisfies $\int_0^T|f(0, u, \delta)|du < + \infty$, for all $T\in\R_+$;
		\item
		the {\em drift process} $a(\cdot, \cdot,\delta): \Omega \times\R_+^2\rightarrow\R$ is a real-valued, progressively measurable process, in the sense that the restriction
		$$ a(\cdot, \cdot, \delta)|_{[0,t]}:\Omega\times[0,t]\times\R_+\rightarrow\R$$ is $\mathcal{F}_t\otimes \mathcal{B}([0,t])\otimes \mathcal{B}(\R_+)$-measurable, for every $t\in\R_+$. Moreover, it satisfies for all $0 \leq T < t <+\infty$ that $a(t,T,\delta) = 0$, and
		\begin{equation*}
			\int_0^T\int_0^u |a(s,u,\delta)|ds\,\eta(du) < + \infty,
			\qquad\text{ for all }T>0;
		\end{equation*}
		\item 
		the {\em volatility process} $b(\cdot, \cdot, \delta):\Omega\times\R_+^2\rightarrow\R^d$ is an $\R^d$-valued progressively measurable process, in the sense that the restriction 
		$$ b(\cdot, \cdot, \delta)|_{[0,t]}:\Omega\times[0,t]\times\R_+\rightarrow\R^d $$is $\mathcal{F}_t\otimes \mathcal{B}([0,t])\otimes \mathcal{B}(\R_+)$-measurable, for every $t\in\R_+$. Moreover, it satisfies for all $0 \leq T < t <+\infty$ that $b(t,T,\delta) = 0$, and
		\[
		\sum_{i=1}^d\int_0^T\left(\int_0^u|b^i(s, u, \delta)|^2ds\right)^{1/2}\eta(du) < + \infty,
		\qquad\text{ for all }T>0;
		\]
		\item the {\em jump function} $g(\cdot, \cdot, \cdot, \delta):\Omega\times\R_+\times E\times\R_+\rightarrow\R$ is a $\mathcal{P}\otimes \cB_E\otimes\mathcal{B}(\R_+)$-measurable real-valued function satisfying $g(t, x, T, \delta) = 0$ for all $x \in E$ and $0 \leq T < t <+\infty$. Moreover, it satisfies
		\begin{align*}
			\int_{0}^{T}\int_E  \int_0^T |g(s,x,u,\delta)|^2 \eta(du)\nu(ds,dx) < +\infty,
			\qquad\text{ for all }T>0.    	
		\end{align*}
	\end{enumerate}
\end{assumption}

Assumption \ref{ass} implies that the integrals appearing in the forward rate equation \eqref{eq:fwd_rate} are well-defined for $\eta$-a.e. $T\in\R_+$. Moreover, the integrability requirements appearing in conditions (ii)--(iv) of Assumption \ref{ass} ensure that we can apply ordinary and stochastic Fubini theorems, in the versions of \cite{Veraar12} for the Brownian motion and Proposition A.2 in \cite{Bjoerk1997} for the compensated random measure.
The mild measurability requirement in conditions (ii)--(iii) holds if $a(\cdot,\cdot,\delta)$ and $b(\cdot,\cdot,\delta)$ are $\mathcal{P}_{rog}\otimes\mathcal{B}(\R_+)$-measurable, for every $\delta\in\cD_0$, with $\mathcal{P}_{rog}$ denoting the progressive sigma-algebra on $\Omega\times\R_+$, see \cite[Remark 2.1]{Veraar12}.

\begin{remark}	\label{rem:measure}
	There is no loss of generality in taking a single measure $\eta$ instead of different measures $\eta^\delta$ for each tenor $\delta\in\cD_0$. Indeed, dependence on the tenor can be embedded in our framework by suitably specifying each forward rate $f(t,T,\delta)$ in \eqref{eq:fwd_rate} and using a common measure $\eta = \sum_{\delta \in \cD_0}\eta^\delta$.
\end{remark}

For all $0 \leq t \leq T <+\infty$, $\delta\in\cD_0$ and $x \in E$, we set
\begin{align*}
	\bar{a}(t, T, \delta) &: = \int_{[t, T]} a(t, u, \delta) \eta(du),\\	
	\bar{b}(t, T, \delta) &: = \int_{[t, T]} b(t, u, \delta) \eta(du), \\
	\bar{V}(t, T, \delta) &: = \int_{[t, T]} \Delta V(t, u, \delta) \eta(du),\\
	\bar{g}(t, x, T, \delta) &: = \int_{[t, T]} g(t, x, u, \delta) \eta(du).
\end{align*}

As a first result, the following lemma (whose proof is postponed to Appendix \ref{app:proofs}) gives a semimartingale representation of the process $P(\cdot, T, \delta)$.

\begin{lemma}	\label{lem:bond_prices}
	Suppose that Assumption \ref{ass} holds. Then, for every $T\in\R_+$ and $\delta\in\cD_0$, the process $(P(t, T, \delta))_{0 \leq t \leq T}$ is a semimartingale and admits the representation
	\begin{align}  \label{eq:dec_PtT}
		P(t, T, \delta) = \exp\bigg(&-\int_0^T f(0, u, \delta)\eta(du) 
		- \int_0^t \bar{a}(s, T, \delta) ds  - \sum_{n\in\N} \bar{V}(T_n, T, \delta)\ind{T_n \leq t}\nonumber\\
		&
		- \int_0^t \bar{b}(s, T, \delta) dW_s  - \int_0^t \int_E \bar{g}(s, x, T,\delta) \bigl(\mu(ds,dx) - \nu(ds,dx)\bigr)\nonumber \\
		& + \int_0^t f(u, u, \delta)\eta(du) \bigg),
		\qquad\qquad\text{ for all }0\leq t\leq T.
	\end{align}                         
\end{lemma}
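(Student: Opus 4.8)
The plan is to obtain the semimartingale decomposition of $\log P(\cdot,T,\delta)$ explicitly and then conclude that $P(\cdot,T,\delta)=\exp(\log P(\cdot,T,\delta))$ is a semimartingale. Fix $T\in\R_+$ and $\delta\in\cD_0$ and set $I(t):=\int_{(t,T]}f(t,u,\delta)\,\eta(du)$, so that $P(t,T,\delta)=e^{-I(t)}$ by \eqref{eq:PtT}. The crucial first step is a \emph{freezing} observation: for $u\le t$, the support conditions in Assumption \ref{ass}(ii)--(iv) (namely $a(s,u,\delta)=b(s,u,\delta)=g(s,x,u,\delta)=0$ for $u<s$) together with $\Delta V(T_n,u,\delta)=0$ for $u<T_n$ allow, term by term in \eqref{eq:fwd_rate}, every integral up to time $t$ to be truncated at $u$; hence $f(t,u,\delta)=f(u,u,\delta)$ whenever $u\le t$. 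Splitting $(0,T]=(0,t]\sqcup(t,T]$ then gives
\begin{equation*}
I(t)=\int_{(0,T]}f(t,u,\delta)\,\eta(du)-\int_0^t f(u,u,\delta)\,\eta(du),
\end{equation*}
where the second term already reproduces the short-rate contribution $+\int_0^t f(u,u,\delta)\,\eta(du)$ appearing in \eqref{eq:dec_PtT}.

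It then remains to decompose $J(t):=\int_{(0,T]}f(t,u,\delta)\,\eta(du)$. The reason for isolating the \emph{fixed} integration region $(0,T]$, rather than the $t$-dependent region $(t,T]$, is that the outer integral $\int_{(0,T]}\eta(du)$ no longer interacts with the upper limit of the stochastic integrals, so their interchange produces genuine stochastic integrals whose integrands do not depend on $t$. Substituting \eqref{eq:fwd_rate} into $J(t)$ and applying the ordinary and stochastic Fubini theorems (for the $dW$-integral and the $(\mu-\nu)$-integral, respectively, as referenced after Assumption \ref{ass}), I would move $\int_{(0,T]}\eta(du)$ inside each integral in $s$. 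The support conditions again collapse $\int_{(0,T]}$ to $\int_{[s,T]}$ and $\int_{[T_n,T]}$, so that, by the definitions of $\bar a$, $\bar b$, $\bar V$ and $\bar g$, the four terms are recognized as
\begin{equation*}
\int_0^t\bar a(s,T,\delta)\,ds,\quad \sum_{n\in\N}\bar V(T_n,T,\delta)\ind{T_n\le t},\quad \int_0^t\bar b(s,T,\delta)\,dW_s,\quad \int_0^t\!\!\int_E\bar g(s,x,T,\delta)\,(\mu-\nu)(ds,dx),
\end{equation*}
while the term $f(0,u,\delta)$ integrates to the constant $\int_{(0,T]}f(0,u,\delta)\,\eta(du)$.

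Collecting these identities yields $\log P(t,T,\delta)=-I(t)=-J(t)+\int_0^t f(u,u,\delta)\,\eta(du)$, which is precisely the exponent in \eqref{eq:dec_PtT}. Since this exponent is a sum of a c\`adl\`ag finite-variation process, a Brownian stochastic integral and a compensated jump integral, it is a semimartingale; applying It\^o's formula to the $C^2$ function $\exp$ then shows that $P(\cdot,T,\delta)$ is itself a semimartingale, which completes the argument.

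I expect the main obstacle to be the rigorous justification of the two stochastic Fubini theorems. One must check that the integrability hypotheses of Assumption \ref{ass}(iii)--(iv), together with the progressive-measurability requirement in (ii)--(iii), are exactly the conditions under which $b(s,\cdot,\delta)$ and $g(s,x,\cdot,\delta)$ (paired with the measure $\eta$) satisfy the hypotheses of the cited results, and that the resulting processes $\bar b(\cdot,T,\delta)$ and $\bar g(\cdot,\cdot,T,\delta)$ are themselves admissible integrands for $W$ and for $\mu-\nu$. By comparison, the ordinary-integral computation and the freezing argument are routine.
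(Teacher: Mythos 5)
Your proposal is correct and follows essentially the same route as the paper's own proof: substitute the forward-rate dynamics \eqref{eq:fwd_rate}, split the integration region into $(0,T]$ and $(0,t]$, and use the ordinary and stochastic Fubini theorems referenced after Assumption \ref{ass} together with the support conditions $a(s,u,\delta)=b(s,u,\delta)=g(s,x,u,\delta)=0$ and $\Delta V(T_n,u,\delta)=0$ for $u<s$ (resp.\ $u<T_n$) to identify the terms $\bar a$, $\bar b$, $\bar V$, $\bar g$ and the short-rate contribution $\int_0^t f(u,u,\delta)\,\eta(du)$. The only organizational difference is that the paper carries out this computation for $t<T$ only (its identity is derived from the integral over $(t,T]$, which is vacuous at $t=T$) and must then treat the boundary case $t=T$ by a separate jump-matching argument, $\Delta G(T_n,T_n,\delta)=\Delta F(T_n,T_n,\delta)$, proved by induction over $n$; your ``freeze first, then apply Fubini on the fixed region $(0,T]$'' ordering makes the identity valid for all $t\le T$ simultaneously, so this extra step is not needed.
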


The $\delta$-tenor bond price process $(P(t, T, \delta))_{0\leq t\leq T}$ admits an equivalent representation as a stochastic exponential, which will be used in the following. The following corollary is a direct consequence of Lemma \ref{lem:bond_prices} and \cite[Theorem II.8.10]{JacodShiryaev}, using the fact that $\mu(\{T_n\}\times E)=0$ a.s., for all $n\in\N$.

\begin{corollary} 	\label{cor:bond_stoch_exp}
	Suppose that Assumption \ref{ass} holds. Then, for every $T\in\R_+$ and $\delta\in\cD_0$, the process $P(\cdot,T,\delta)=(P(t, T, \delta))_{0 \leq t \leq T}$ admits the representation
	\begin{align*} 
		\begin{aligned}
			P(\cdot, T, \delta) = \mathcal{E} \bigg(
			& - \int_0^T f(0, u, \delta)\eta(du) 
			- \int_0^\cdot \bar{a}(s, T, \delta) ds 
			+ \frac{1}{2} \int_0^\cdot \|\bar{b}(s, T, \delta)\|^2 ds \\
			& - \int_0^\cdot \bar{b}(s, T, \delta) dW_s 
			- \int_0^\cdot \int_E \bar{g}(s, x, T,\delta) \bigl(\mu(ds,dx) - \nu(ds,dx)\bigr) \\
			& + \int_0^\cdot \int_E \bigl(e^{-\bar{g}(s, x, T, \delta)}-1 + \bar{g}(s, x, T, \delta)\bigr) \mu(ds,dx)  \\
			& + \int_0^\cdot f(u, u, \delta)du +\sum_{n\in\N} \bigl(e^{-\bar{V}(T_n, T, \delta) + f(T_n,T_n,\delta)} -1 \bigl)\Ind_{\dbraco{T_n,+\infty}} \bigg).
		\end{aligned}
	\end{align*}
\end{corollary}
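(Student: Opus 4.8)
The plan is to apply the general relation between the ordinary and the stochastic exponential of a semimartingale, namely \cite[Theorem II.8.10]{JacodShiryaev}, to the exponent furnished by Lemma \ref{lem:bond_prices}. Writing $P(\cdot,T,\delta) = \exp(X)$, where $X$ is the semimartingale in the exponent of \eqref{eq:dec_PtT}, that theorem yields $\exp(X) = \mathcal{E}(Y)$ with
\[
Y = X + \tfrac12\langle X^c\rangle + \sum_{s\le\cdot}\bigl(e^{\Delta X_s} - 1 - \Delta X_s\bigr),
\]
so the entire argument reduces to identifying the continuous local martingale part $X^c$ and the jumps $\Delta X$ of $X$, and then collecting terms.

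First I would read off from \eqref{eq:dec_PtT} that the continuous local martingale part is $X^c = -\int_0^\cdot\bar{b}(s,T,\delta)\,dW_s$, whence $\langle X^c\rangle_\cdot = \int_0^\cdot\|\bar{b}(s,T,\delta)\|^2\,ds$, which produces the $+\frac12\int_0^\cdot\|\bar{b}\|^2\,ds$ term. The continuous finite-variation pieces $-\int_0^\cdot\bar{a}\,ds$ and the absolutely continuous part $\int_0^\cdot f(u,u,\delta)\,du$ of $\int_0^\cdot f(u,u,\delta)\eta(du)$, obtained by splitting $\eta = du + \sum_{n}\delta_{T_n}$ as in \eqref{eq:eta}, together with the compensated integral $-\int_0^\cdot\int_E\bar{g}\,(\mu-\nu)$, pass over to $Y$ unchanged.

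The core of the argument is the bookkeeping of the jumps, which come from two disjoint families. The jumps of the random-measure integral occur at the jump times of $\mu$ and equal $-\bar{g}(s,x,T,\delta)$; summing the correction $e^{\Delta X}-1-\Delta X$ over these times gives $\int_0^\cdot\int_E\bigl(e^{-\bar{g}}-1+\bar{g}\bigr)\mu(ds,dx)$. The jumps at the predetermined dates $T_n$ contribute $-\bar{V}(T_n,T,\delta)$ via the $\bar{V}$-sum and $+f(T_n,T_n,\delta)$ via the atom of $\eta$ at $T_n$, so that $\Delta X_{T_n} = -\bar{V}(T_n,T,\delta)+f(T_n,T_n,\delta)$; adding the correction term $e^{\Delta X_{T_n}}-1-\Delta X_{T_n}$ cancels this linear contribution and leaves precisely $\sum_n\bigl(e^{-\bar{V}(T_n,T,\delta)+f(T_n,T_n,\delta)}-1\bigr)\Ind_{\dbraco{T_n,+\infty}}$. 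Crucially, the two jump families do not overlap because $\mu(\{T_n\}\times E)=0$ a.s., so no cross terms arise and the sum over all jumps splits cleanly into the two stated contributions.

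The main obstacle, and the only delicate point, is exactly this jump bookkeeping at the predetermined times: one must correctly incorporate the atom of $\eta$ when computing $\Delta X_{T_n}$ and then verify that combining the raw jump with the Dol\'eans--Dade compensation term cancels the linear pieces to produce the clean exponential form. The deterministic initial term $-\int_0^T f(0,u,\delta)\eta(du) = \log P(0,T,\delta)$ is handled routinely: being $\mathcal{F}_0$-measurable, it encodes the initial value $\mathcal{E}(\cdot)_0 = P(0,T,\delta)$ and simply passes into the representation. Finally I would assemble the continuous, $\mu$-driven and $T_n$-driven terms to match the stated formula.
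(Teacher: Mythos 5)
Your proposal is correct and follows exactly the route the paper takes: the paper derives this corollary in one line as a direct consequence of Lemma \ref{lem:bond_prices} and \cite[Theorem II.8.10]{JacodShiryaev}, using precisely the fact that $\mu(\{T_n\}\times E)=0$ a.s.\ so that the random-measure jumps and the predetermined-date jumps never overlap. Your explicit bookkeeping — in particular the cancellation of the linear terms $-\bar{V}(T_n,T,\delta)+f(T_n,T_n,\delta)$ against the Dol\'eans--Dade correction at each $T_n$, leaving $e^{-\bar{V}(T_n,T,\delta)+f(T_n,T_n,\delta)}-1$ — is exactly the computation the paper leaves implicit.
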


We are now in a position to state the central result of this section, which provides necessary and sufficient conditions for the reference probability measure $\Q$ to be a risk-neutral measure with respect to the num\'eraire $X^0$.
We recall that a random variable $\xi$ on $(\Omega,\ccF,\Q)$ is said to be {\em sigma-integrable} with respect to a sigma-field $\ccG\subseteq\ccF$ if there exists a sequence of measurable sets $(\Omega_n)_{n\in\N}\subseteq\ccG$ increasing to $\Omega$ such that $\xi\,\Ind_{\Omega_n}\in L^1(\Q)$ for every $n\in\N$, see Definition 1.15 in \cite{he1992semimartingale}. 
A random variable $\xi$ is sigma-finite with respect to $\ccG$ if and only if the generalized conditional expectation $\E^{\Q}[\xi|\ccG]$ is a.s. finite.
For convenience of notation, let $\alpha_t^0 := 0$, $H_t^0:=0$, $L^0(t,x):= 0$ and $\Delta A_{T_n}^0 := 0$ for all $n\in\N$, $t\in\R_+$ and $x\in E$, so that $S^0: = \mathcal{E}(A^0 + H^0 \cdot W + L^0\ast (\mu - \nu))\equiv1$. 
Let 
\begin{align*}
	\Lambda(s,x,T,\delta) := \frac{1+L^{\delta}(s,x)}{1+L(s,x)}e^{-\bar{g}(s,x,T,\delta)}+L(s,x)-L^{\delta}(s,x)+\bar{g}(s,x,T,\delta)-1.
\end{align*}

\begin{theorem} \label{thm:HJM}
	Suppose that Assumption \ref{ass} holds. Then $\Q$ is a risk-neutral measure with respect to $X^0$ if and only if, for every $\delta\in\cD_0$,
	\begin{align}	\label{eq:jumps_int}
		\int_0^T\int_E \big| \Lambda(s,x,T,\delta)\big| \, \lambda_s(dx)ds < +\infty
	\end{align}
	a.s. for every $T\in\R_+$ and, for every $n\in\N$ and $T\geq T_n$, the random variable
	\beq	\label{eq:jump_sigma_int}
	\frac{1+\Delta A_{T_n}^{\delta}}{1+\Delta B_{T_n}}e^{-\int_{(T_n,T]}\Delta V(T_n,u,\delta)\eta(du)}
	\eeq
	is sigma-integrable with respect to $\ccF_{T_n-}$, and the following four conditions hold a.s.:
	\begin{enumerate}
		\item[(i)] 
		for a.e. $t\in\R_+$, it holds that
		\[
		r_t - \alpha_t^{\delta} =  f(t,t,\delta) -  H_t^{\top}H_t^{\delta} + \|H_t\|^2 
		+ \int_E\frac{L(t,x)}{1+L(t,x)}\bigl(L(t,x)-L^{\delta}(t,x)\bigr)\lambda_t(dx);
		\]
		\item[(ii)]
		for every $T\in\R_+$ and for a.e. $t\in[0,T]$, it holds that
		\begin{align*}
			\bar{a}(t,T,\delta)
			&= \frac{1}{2} \|\bar{b}(t,T,\delta)\|^2 
			+ \bar{b}(t,T,\delta)^{\top}\bigl(H_t - H_t^{\delta}\bigr) \\
			&\quad + \int_E\left(\frac{1+L^{\delta}(t,x)}{1+L(t,x)}\bigl(e^{-\bar{g}(t,x,T,\delta)}-1\bigr)+\bar{g}(t,x,T,\delta)\right)\lambda_t(dx);
		\end{align*}
		\item[(iii)]
		for every $n\in\N$, it holds that
		\[
		\E^{\Q}\left[\frac{1+\Delta A_{T_n}^{\delta}}{1+\Delta B_{T_n}}\Bigg|\ccF_{T_n-}\right] = e^{-f(T_n-,T_n,\delta)};
		\]
		\item[(iv)]
		for every $n\in\N$ and $T\geq T_n$, it holds that
		\[
		\E^{\Q}\left[\frac{1+\Delta A_{T_n}^{\delta}}{1+\Delta B_{T_n}}\left(e^{-\int_{(T_n,T]}\Delta V(T_n,u,\delta)\eta(du)}-1\right)\Bigg|\ccF_{T_n-}\right] = 0.
		\]
	\end{enumerate}
\end{theorem}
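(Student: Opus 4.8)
The plan is to reduce risk-neutrality to a single local-martingale requirement and then to read off conditions (i)--(iv) by splitting the drift of the relevant process into an absolutely continuous part, the compensator of its totally inaccessible jumps, and its jumps at the announced dates $(T_n)_{n\in\N}$. By \eqref{eq:FRA} together with the conventions $P(\cdot,T,0)=P(\cdot,T)$ and $S^0\equiv1$, every $X^0$-denominated price of an asset from Definition \ref{def:market} is a linear combination of the processes
\[
Z^\delta(\cdot,T):=\frac{S^\delta\,P(\cdot,T,\delta)}{X^0},\qquad \delta\in\cD_0,\ T\in\R_+,
\]
the OIS bonds corresponding to $\delta=0$ and the $X^0$-denominated FRA being $Z^\delta(\cdot,T)-\bar K(\delta)\,Z^0(\cdot,T+\delta)$. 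Thus $\Q$ is risk-neutral if and only if each $Z^\delta(\cdot,T)$ is a $\Q$-local martingale.

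Next I would express $Z^\delta(\cdot,T)$ as a stochastic exponential. Combining \eqref{eq:num}, \eqref{eq:spread} and the representation of $P(\cdot,T,\delta)$ from Corollary \ref{cor:bond_stoch_exp} through Yor's formula $\mathcal{E}(U)\mathcal{E}(V)=\mathcal{E}(U+V+[U,V])$ and the analogous identity for the reciprocal of a stochastic exponential, one writes $Z^\delta(\cdot,T)=Z^\delta(0,T)\,\mathcal{E}(Y^{\delta,T})$ for an explicit semimartingale $Y^{\delta,T}$. A (suitably integrable) stochastic exponential is a local martingale exactly when its exponent is, so the requirement becomes that the predictable finite-variation part of $Y^{\delta,T}$ vanish. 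I would split this drift according to the driving noise: (a) the $ds$-absolutely continuous part, gathering $r$, $\alpha^\delta$, $\bar a$, the Gaussian covariation $\bar b^\top(H-H^\delta)$, the It\^o term $\tfrac12\|\bar b\|^2$ and the diagonal $f(t,t,\delta)$; (b) the compensator $\lambda_t(dx)\,dt$ of the $\mu$-driven jumps, whose integrand assembles precisely into $\Lambda(s,x,T,\delta)$; and (c) the jumps at the dates $T_n$, where neither $W$ nor $\mu$ charges $\{T_n\}$.

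For (a) and (b) I would use that, for $\eta$-a.e.\ $t$, the diagonal values $\bar a(t,t,\delta)$, $\bar b(t,t,\delta)$ and $\bar g(t,x,t,\delta)$ vanish, and that $\Lambda$ splits as $\Lambda(s,x,t,\delta)=\tfrac{L}{1+L}(L-L^\delta)$ on the diagonal plus the $T$-dependent remainder $\tfrac{1+L^\delta}{1+L}(e^{-\bar g}-1)+\bar g$. Letting $T\downarrow t$ isolates the $T$-independent balance, which is condition (i); subtracting it from the full drift equation leaves the HJM restriction (ii), whose compensator integral is finite by \eqref{eq:jumps_int}. For (c) I would compute the multiplicative jump of $Z^\delta(\cdot,T)$ at $T_n$. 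Since $\Delta X^0_{T_n}=X^0_{T_n-}\Delta B_{T_n}$ and $\Delta S^\delta_{T_n}=S^\delta_{T_n-}\Delta A^\delta_{T_n}$, while the bond factor jumps by $\exp(-\bar V(T_n,T,\delta)+f(T_n,T_n,\delta))$, the identities $f(T_n,T_n,\delta)=f(T_n-,T_n,\delta)+\Delta V(T_n,T_n,\delta)$ and $\bar V(T_n,T,\delta)=\Delta V(T_n,T_n,\delta)+\int_{(T_n,T]}\Delta V(T_n,u,\delta)\eta(du)$ collapse the jump ratio to
\[
\frac{Z^\delta(T_n,T)}{Z^\delta(T_n-,T)}=\frac{1+\Delta A^\delta_{T_n}}{1+\Delta B_{T_n}}\exp\!\Big(f(T_n-,T_n,\delta)-\int_{(T_n,T]}\Delta V(T_n,u,\delta)\eta(du)\Big).
\]
At a predictable time the local-martingale property forces the $\ccF_{T_n-}$-conditional expectation of this ratio to equal one; the case $T=T_n$ gives (iii), and subtracting it gives (iv). The sigma-integrability of \eqref{eq:jump_sigma_int} makes these generalized conditional expectations a.s.\ finite, and the $\ccF_{T_n-}$-measurability of $f(T_n-,T_n,\delta)$ lets it be pulled out of the conditioning.

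The main obstacle I anticipate is the bookkeeping in step two: tracking all quadratic-covariation and reciprocal corrections so that the $\mu$-compensator term assembles exactly into $\Lambda$ and the Gaussian term into $\bar b^\top(H-H^\delta)+\tfrac12\|\bar b\|^2$, and then justifying rigorously, rather than formally, the equivalence between vanishing predictable drift and the local-martingale property. For the sufficiency direction this needs a localization argument together with \eqref{eq:jumps_int}--\eqref{eq:jump_sigma_int} to guarantee that all compensators and generalized conditional expectations involved are well defined.
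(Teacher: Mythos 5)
Your proposal is correct and follows essentially the same route as the paper's own proof: reduce risk-neutrality to the local martingale property of $S^\delta P(\cdot,T,\delta)/X^0$, write this ratio as a stochastic exponential via Yor's formula (Corollary \ref{cor:stoch_exp_property} and Corollary \ref{cor:bond_stoch_exp}), compensate the absolutely continuous drift, the $\mu$-driven jumps (which assemble into $\Lambda$) and the jumps at the predictable dates $T_n$, and then extract (i) and (iii) by evaluating at $T=t$ and $T=T_n$ and (ii) and (iv) by subtraction. Even the technical points you flag — locally integrable variation of the jump parts, sigma-integrability via generalized conditional expectations at the predictable times, and the vanishing of the resulting predictable finite-variation local martingale — are exactly the ingredients the paper handles via \cite[Lemma I.3.11, Theorem I.3.18, Corollary I.3.16]{JacodShiryaev} and \cite[Theorem 5.29]{he1992semimartingale}.
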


\begin{remark}  \label{rem:condition1}
	By considering separately the cases $\delta=0$ and $\delta\in\cD$, we can obtain a more explicit version of condition {\em (i)} of Theorem \ref{thm:HJM}, which is equivalent to the validity of the two conditions, for every $\delta\in\cD$ and a.e. $t\in\R_+$,
	\begin{align}
		r_t &= f(t,t,0) + \|H_t\|^2 + \int_E \frac{L^2(t,x)}{1+L(t,x)}\lambda_t(dx);  \label{eq:HJM_r}\\
		\alpha_t^{\delta} &= f(t,t,0) - f(t,t,\delta)
		+  H_t^{\top}H_t^{\delta}  
		+ \int_E \frac{L^{\delta}(t,x)L(t,x)}{1+L(t,x)}\lambda_t(dx).
		\label{eq:HJM_s}
	\end{align}
\end{remark}

The conditions of Theorem \ref{thm:HJM} together with Remark \ref{rem:condition1}  admit the following  interpretation.
First, for $\delta=0$ condition {\em (i)} requires that the drift rate $r_t$ of the num\'eraire process $X^0$ equals the short end of the instantaneous yield $f(t,t,0)$ on OIS bonds, plus two additional terms accounting for the volatility of $X^0$ itself.\footnote{Note that, at the present level of generality, the rate $r_t$ does not represent a riskless rate of return.} For $\delta\neq0$, condition {\em (i)} requires that, at the short end, the instantaneous yield $\alpha^{\delta}_t+f(t,t,\delta)$ on the floating leg of a FRA equals the instantaneous yield $f(t,t,0)$ plus a risk premium determined by the covariation between the num\'eraire process $X^0$ and the multiplicative spread process $S^{\delta}$.

Second, condition {\em (ii)} is a generalization of the well-known HJM drift condition. In particular, if $\cD=\emptyset$ and the process $X^0$ does not have local martingale components, then condition {\em (ii)} reduces to the  drift restriction established in Proposition 5.3 of \cite{Bjoerk1997} for single-curve jump-diffusion models.

Finally, conditions {\em (iii)} and {\em (iv)} are new and specific to our setting with stochastic discontinuities. Together, they correspond to excluding the possibility that, at some predetermined date $T_n$, prices of $X^0$-denominated assets  exhibit jumps whose size can be predicted on the basis of the information contained in $\cF_{T_n-}$.
Indeed, such a possibility would violate absence of arbitrage (compare with \cite{FPP18}).

\begin{proof}\textit{of Theorem \ref{thm:HJM} } 
	Recall that $P(t, T, 0) = P(t, T)$, $0 \leq t \leq T<+\infty$.
	By definition, $\Q$ is a risk-neutral measure with respect to $X^0$ if and only if the processes $P(\cdot, T)/X^0$ and $\pifra(\cdot, T, \delta, K)/X^0$ are $\Q$-local martingales, for every $T\in\R_+$, $\delta\in\cD$ and $K\in\R$.
	In view of \eqref{eq:FRA} and using the notational convention introduced above, this holds if and only if the process $S^\delta P(\cdot, T, \delta)/X^0$ is a $\Q$-local martingale, for every $T\in\R_+$ and $\delta\in\cD_0$.
	An application of Corollary \ref{cor:stoch_exp_property} together with Corollary \ref{cor:bond_stoch_exp} and equations \eqref{eq:num}--\eqref{eq:spread_FV} yields
	\begin{align}	\label{eq:prepdynSPX}
		\frac{S^\delta P(\cdot,T,\delta)}{X^0}
		= & S_0^\delta P(0,T,\delta)\nonumber\\
		&\cdot
		\mathcal{E}\left(\int_0^{\cdot}k_s(T,\delta)ds  + K^{(1)}(T,\delta) + K^{(2)}(T,\delta) + M(T,\delta)\right),
	\end{align}
	where $(k_t(T,\delta))_{0\leq t\leq T}$ is an adapted process given by
	\[
	k_t(T,\delta) 
	:=
	\alpha_t^{\delta} - r_t -\bar{a}(t,T,\delta) + \frac{1}{2} \|\bar{b}(t,T,\delta)\|^2 
	+ f(t,t,\delta) 
	 + \bar{b}(t,T,\delta)^{\top}\bigl(H_t - H_t^{\delta}\bigr) 
	-  H_t^{\top}H_t^{\delta} + \|H_t\|^2,
	\]
	$(K^{(1)}_t(T,\delta))_{0\leq t\leq T}$ is a pure jump finite variation  process given by
	\begin{align*}
		K^{(1)}_t(T,\delta) &:=
		\int_0^t \int_E \left(e^{-\bar{g}(s, x, T, \delta)}-1 + \bar{g}(s, x, T, \delta)\right) \mu(ds,dx) 	\\
		&\quad + \int_0^\cdot \int_E \frac{L(s, x)}{1+L(s, x)} \left(-L^{\delta}(s, x)  - \bigl(e^{-\bar{g}(s, x, T, \delta)} - 1\bigr) + L(s, x) \right) \mu(ds,dx)\\
		&\quad + \int_0^\cdot \int_E \frac{ L^{\delta}(s, x)}{1+L(s, x)} \left(e^{-\bar{g}(s, x, T, \delta)} - 1\right) \mu(ds,dx)	\\
		&\;= \int_0^t\int_E \Lambda(s,x,T,\delta) \mu(ds,dx),
	\end{align*}
	and $(K^{(2)}_t(T,\delta))_{0\leq t\leq T}$ is a pure jump finite variation  process given by
	\[	\begin{aligned}
	K^{(2)}_t(T,\delta)
	&:= \sum_{n\in\N} \ind{T_n\leq t}
	\left( \frac{\Delta A_{T_n}^{\delta} }{1 + \Delta B_{T_n}} 
	+ \frac{1}{1 + \Delta B_{T_n}}\left(e^{-\bar{V}(T_n, T, \delta) + f(T_n, T_n, \delta)} - 1\right)\right.\\
	&\quad \left.- \frac{\Delta B_{T_n}}{1 + \Delta B_{T_n}}
	+ \frac{\Delta A_{T_n}^{\delta} }{1 + \Delta B_{T_n}}
	\left(e^{-\bar{V}(T_n, T, \delta) + f(T_n, T_n, \delta)} - 1\right) \right)	\\
	&= \sum_{n\in\N} \ind{T_n\leq t}
	\left(\frac{1+\Delta A_{T_n}^{\delta}}{1+\Delta B_{T_n}}e^{-\int_{(T_n,T]}\Delta V(T_n,u,\delta)\eta(du) + f(T_n-,T_n,\delta)}-1\right),
	\end{aligned}	\]
	where in the last equality we made use of \eqref{eq:fwd_rate} together with the definition of the process $\bar{V}$. 
	The process $M(T,\delta)=(M_t(T,\delta))_{0\leq t\leq T}$ appearing in \eqref{eq:prepdynSPX} is the local martingale
	\[	\begin{aligned}
	M_t(T,\delta) &:= \int_0^t \bigl( H_s^{\delta} - H_s -\bar{b}(s, T, \delta)  \bigr) dW_s \\
	&\quad + \int_0^t \int_E \bigl(L^{\delta}(s, x)- L(s,x) -\bar{g}(s, x, T,\delta)\bigr) \bigl(\mu(ds,dx) - \nu(ds,dx)\bigr).
	\end{aligned}	\]
	Note that the set $\{\Delta K^{(1)}(T,\delta)\neq0\}\medcap\{\Delta K^{(2)}(T,\delta)\neq0\}$ is evanescent for every $T\in\R_+$ and $\delta\in\cD_0$, as a consequence of the fact that $\mu(\{T_n\}\times E)=0$ a.s. for all $n\in\N$.
	
	Suppose that $S^\delta P(\cdot,T,\delta)/X^0$ is a $\Q$-local martingale, for every $T\in\R_+$ and $\delta\in\cD_0$. 
	In this case, \eqref{eq:prepdynSPX} implies that the finite variation process $\int_0^{\cdot}k_s(T,\delta)ds+K^{(1)}(T,\delta)+K^{(2)}(T,\delta)$ is also a $\Q$-local martingale. By means of \cite[Lemma I.3.11]{JacodShiryaev}, this implies that the pure jump finite variation process $K^{(1)}(T,\delta)+K^{(2)}(T,\delta)$ is of locally integrable variation.
	Since the two processes $K^{(1)}(T,\delta)$ and $K^{(2)}(T,\delta)$ do not have common jumps, it holds that 
	$$|\Delta K^{(i)}(T,\delta)|\leq|\Delta K^{(1)}(T,\delta)+\Delta K^{(2)}(T,\delta)|,  \quad \text{ for } i=1,2.
	$$
	As a consequence of this fact, both processes $K^{(1)}(T,\delta)$ and $K^{(2)}(T,\delta)$ are of locally integrable variation.
	Noting that 
	$$ 
	K^{(2)}(T,\delta)=\sum_{n\in\N}\Delta K_{T_n}^{(2)}(T,\delta)\Ind_{\dbraco{T_n,+\infty}},
	$$
	Theorem 5.29 of \cite{he1992semimartingale} implies that the random variable $\Delta K^{(2)}_{T_n}(T,\delta)$ is sigma-integrable with respect to $\ccF_{T_n-}$, for every $n\in\N$. This is equivalent to the sigma-integrability  of
	\beq	\label{eq:jump_at_Tn}
	\frac{1+\Delta A_{T_n}^{\delta}}{1+\Delta B_{T_n}}e^{-\int_{(T_n,T]}\Delta V(T_n,u,\delta)\eta(du) + f(T_n-,T_n,\delta)}
	\eeq
	with respect to $\ccF_{T_n-}$.
	Since $f(T_n-,T_n,\delta)$ is $\ccF_{T_n-}$-measurable, the sigma-integrability of \eqref{eq:jump_at_Tn} with respect to $\ccF_{T_n-}$ can be equivalently stated as the sigma-integrability of \eqref{eq:jump_sigma_int} with respect to $\ccF_{T_n-}$.
	Moreover, the fact that $K^{(1)}(T,\delta)$ is of locally integrable variation is equivalent to the a.s. finiteness of the integral
	\[
	\int_0^T\int_E \bigl|\Lambda(s,x,T,\delta) \bigr|\, \lambda_s(dx)ds,
	\]
	thus proving the integrability conditions \eqref{eq:jumps_int}, \eqref{eq:jump_sigma_int}.
	Having established that the two processes $K^{(1)}(T,\delta)$ and $K^{(2)}(T,\delta)$ are of locally integrable variation, we can take their compensators (dual predictable projections), see \cite[Theorem I.3.18]{JacodShiryaev}. 
	This leads to 
	\beq	\label{eq:SPX_pred}
	\begin{aligned}
		\frac{S^\delta P(\cdot,T,\delta)}{X^0}
		= S_0^\delta P(0,T,\delta)\,\mathcal{E}\left(\int_0^{\cdot}\hat{k}_s(T,\delta)ds+\widehat{K}^{(2)}(T,\delta) + M'(T,\delta)\right),
	\end{aligned}	
	\eeq
	where 
	\begin{equation} \label{eq:comp_0}
		M'(T,\delta):= M(T,\delta)+K^{(1)}(T,\delta)+K^{(2)}(T,\delta)
		-\int_0^{\cdot}\bigl(\hat{k}_s(T,\delta)-k_s(T,\delta)\bigr)ds
		-\widehat{K}^{(2)}(T,\delta)
	\end{equation}  
	is a local martingale, $(\hat{k}_t(T,\delta))_{0\leq t\leq T}$ is an adapted process given by
	\begin{equation*}
		\hat{k}_t(T,\delta) = k_t(T,\delta) 
		+ \int_E \Lambda(t,x,T,\delta) \, \lambda_t(dx)
	\end{equation*}
	and, in view of \cite[Theorem 5.29]{he1992semimartingale},  $\widehat{K}^{(2)}(T,\delta)=\sum_{n\in\N}\Delta\widehat{K}^{(2)}_{T_n}(T,\delta)\Ind_{\dbraco{T_n,+\infty}}$ is a pure jump finite variation predictable process with
	\begin{align*}
		\Delta\widehat{K}^{(2)}_{T_n}(T,\delta)
		= e^{f(T_n-,T_n,\delta)}\E^{\Q}\left[\frac{1+\Delta A_{T_n}^{\delta}}{1+\Delta B_{T_n}}e^{-\int_{(T_n,T]}\Delta V(T_n,u,\delta)\eta(du)} \Bigg|\ccF_{T_n-}\right]-1,
	\end{align*}
	for all $n\in\N$. If $S^\delta P(\cdot,T,\delta)/X^0$ is a $\Q$-local martingale, then by equation \eqref{eq:SPX_pred} the process $\int_0^{\cdot}\hat{k}_s(T,\delta)ds+\widehat{K}^{(2)}(T,\delta)$ must be null (up to an evanescent set), being a predictable local martingale of finite variation, see \cite[Corollary I.3.16]{JacodShiryaev}. In particular, analyzing separately its absolutely continuous and discontinuous parts, this holds if and only if $\hat{k}_t(T,\delta)=0$ outside of a set of $(\Q\otimes dt)$-measure zero and $\Delta\widehat{K}^{(2)}_{T_n}(T,\delta)=0$ a.s. for every $n\in\N$.
	Let us first consider the absolutely continuous part
	\begin{align*}
		0 & = \hat{k}_t(T,\delta) \\
		&= \alpha_t^{\delta} - r_t -\bar{a}(t,T,\delta) + \frac{1}{2} \|\bar{b}(t,T,\delta)\|^2 + f(t,t,\delta) \\
		& \quad + \bar{b}(t,T,\delta)^{\top}\bigl(H_t - H_t^{\delta}\bigr) -  H_t^{\top}H_t^{\delta} + \|H_t\|^2 + \int_E \Lambda(t,x,T,\delta) \, \lambda_t(dx).
	\end{align*}
	The integral appearing in the last line is a.s. finite for a.e. $t\in[0,T]$ as a consequence of \eqref{eq:jumps_int}.
	Taking $T=t$ leads to the requirement
	\[
	r_t - \alpha_t^{\delta} =  f(t,t,\delta) -  H_t^{\top}H_t^{\delta} + \|H_t\|^2 
	+ \int_E \frac{L(t,x)}{1+L(t,x)}\bigl(L(t,x)-L^{\delta}(t,x)\bigr)\lambda_t(dx),
	\]
	for a.e. $t\in\R_+$, which gives condition {\em (i)} in the statement of the theorem. In turn, inserting this last condition into the equation $\hat{k}_t(T,\delta)=0$ directly leads to condition {\em (ii)}.
	Considering then the pure jump part, the condition $\Delta\widehat{K}^{(2)}_{T_n}(T,\delta)=0$ a.s., for all $n\in\N$, leads to
	\beq	\label{eq:general_jump_Tn}
	\E^{\Q}\left[\frac{1+\Delta A_{T_n}^{\delta}}{1+\Delta B_{T_n}}e^{-\int_{(T_n,T]}\Delta V(T_n,u,\delta)\eta(du)}\Bigg|\ccF_{T_n-}\right] = e^{-f(T_n-,T_n,\delta)} 
	\eeq
	a.s. for all $n\in\N$.
	Condition {\em (iii)} in the statement of the theorem is obtained by taking $T=T_n$, while condition {\em (iv)} follows by inserting condition {\em (iii)} into \eqref{eq:general_jump_Tn}.
	
	Conversely, if the integrability conditions \eqref{eq:jumps_int}, \eqref{eq:jump_sigma_int} are satisfied then the finite variation processes $K^{(1)}(T,\delta)$ and $K^{(2)}(T,\delta)$ appearing in \eqref{eq:prepdynSPX} are of locally integrable variation. One can therefore take their compensators and obtain representation \eqref{eq:SPX_pred}. It is then easy to verify that, if the four conditions {\em (i)}--{\em (iv)} hold, then the processes $\hat{k}(T,\delta)$ and $\widehat{K}^{(2)}(T,\delta)$ appearing in \eqref{eq:SPX_pred} are null, up to an evanescent set. This proves the local martingale property of $S^\delta P(\cdot,T,\delta)/X^0$, for every $T\in\R_+$ and $\delta\in\cD_0$.
\end{proof}

\begin{remark}
	\label{rem:FX_analogy2}
	The foreign exchange analogy introduced in Remark \ref{rem:FX_analogy} carries over to the conditions established in Theorem \ref{thm:HJM}. In particular, in the special case where $H_t=L(t,x)=0$, for all $(t,x)\in\R_+\times E$, it can be easily verified that conditions {\em (i)}--{\em (ii)} reduce exactly to the HJM conditions established in \cite{Koval2005} in the context of multi-currency HJM semimartingale models.
\end{remark}

\subsection{The OIS bank account as num\'eraire}

In HJM models, the num\'eraire is usually chosen as the OIS bank account $\exp(\int_0^{\cdot}\rois_sds)$, with $\rois$ denoting the OIS short rate. In this context, an application of Theorem \ref{thm:HJM} enables us to characterize all equivalent local martingale measures (ELMMs, see Section \ref{sec:NAFLVR}) with respect to the OIS bank account num\'eraire.
To this effect, let $\Q'$ be a probability measure on $(\Omega,\ccF)$ equivalent to $\Q$ and denote by $Z^\prime$ its density process, i.e., $Z'_t=d\Q'|_{\ccF_t}/d\Q|_{\ccF_t}$, for all $t\geq0$. 
We denote the expectation with respect to $\Q'$ by $\E^{\Q'}$ and  assume that
\beq	\label{eq:stoch_exp_Z}
Z' = \mathcal{E}\bigg(-\theta\cdot W - \psi\ast(\mu-\nu)-\sum_{n\in\N}Y_n\Ind_{\dbraco{T_n,+\infty}}\bigg),
\eeq
for an $\R^d$-valued progressively measurable process $\theta=(\theta_t)_{t\geq0}$ satisfying the integrability condition $\int_0^T\|\theta_s\|^2ds<+\infty$ a.s. for all $T>0$, a $\mathcal{P}\otimes\mathcal{B}_E$-measurable function $\psi:\Omega\times\R_+\times E\rightarrow(-\infty,+1)$ satisfying the integrability condition $\int_0^T\int_E(|\psi(s,x)|\wedge\psi^2(s,x))\lambda_s(dx)ds<+\infty$ a.s. for all $T>0$, and a family $(Y_n)_{n\in\N}$ of random variables taking values in $(-\infty,+1)$ such that $Y_n$ is $\ccF_{T_n}$-measurable and $\E^{\Q}[Y_n|\ccF_{T_n-}]=0$, for all $n\in\N$. Denote 
\begin{equation*}
	\Lambda^*(s,x,T,\delta) = \bigl(1-\psi(s,x)\bigr)\bigl((1+L^{\delta}(s,x))e^{-\bar{g}(s,x,T,\delta)}-1\bigr)  -L^{\delta}(s,x)+\bar{g}(s,x,T,\delta).
\end{equation*}

\begin{corollary}	\label{cor:ELMM_Q}
	Suppose that Assumption \ref{ass} holds. Let $\Q'$ be a probability measure on $(\Omega,\ccF)$ equivalent to $\Q$, with density process $Z'$ given  in \eqref{eq:stoch_exp_Z}.
	Assume furthermore that $\int_0^T\int_{\{\psi(s,x)\in[0,1]\}}\psi^2(s,x)/(1-\psi(s,x))\lambda_s(dx)ds<+\infty$ a.s. for all $T>0$.
	Then, $\Q'$ is an ELMM with respect to the num\'eraire $\exp(\int_0^{\cdot}\rois_sds)$ if and only if, for every $\delta\in\cD_0$,
	\begin{align}\label{eq:Q_int}
		\int_0^T\int_E \bigl| \Lambda^*(s,x,T,\delta)\bigr|\, \lambda_s(dx)ds< +\infty
	\end{align}
	a.s. for every $T\in\R_+$ and, for every $n\in\N$ and $T\geq T_n$, the random variable
	\[
	\bigl(1+\Delta A_{T_n}^{\delta}\bigr)
	e^{-\int_{(T_n,T]}\Delta V(T_n,u,\delta)\eta(du)}
	\]
	is sigma-integrable under $\Q'$ with respect to $\ccF_{T_n-}$, and the following conditions hold a.s.:
	\begin{enumerate}
		\item[(i)] 
		for a.e. $t\in\R_+$, it holds that
		\begin{align*}
			\rois_t &= f(t,t,0),	\\
			\alpha_t^{\delta} &=  f(t,t,0) - f(t,t,\delta) +  \theta_t^{\top}H_t^{\delta} + \int_E\psi(t,x)L^{\delta}(t,x)\lambda_t(dx);
		\end{align*}
		\item[(ii)]
		for every $T\in\R_+$ and for a.e. $t\in[0,T]$, it holds that
		\begin{align*}
			\bar{a}(t,T,\delta)
			&= \frac{1}{2} \|\bar{b}(t,T,\delta)\|^2 
			+ \bar{b}(t,T,\delta)^{\top}\bigl(\theta_t - H_t^{\delta}\bigr) \\
			&\quad+ \int_E\left(\bigl(1-\psi(t,x)\bigr)\bigl(1+L^{\delta}(t,x)\bigr)\bigl(e^{-\bar{g}(t,x,T,\delta)}-1\bigr)+\bar{g}(t,x,T,\delta)\right)\lambda_t(dx);
		\end{align*}
		\item[(iii)]
		for every $n\in\N$, it holds that
		\[
		\E^{\Q'}\bigl[\Delta A_{T_n}^{\delta}\bigr|\ccF_{T_n-}\bigr] = e^{-f(T_n-,T_n,\delta)} -1;
		\]
		\item[(iv)]
		for every $n\in\N$ and $T\geq T_n$, it holds that
		\[
		\E^{\Q'}\Bigl[(1+\Delta A_{T_n}^{\delta})\left(e^{-\int_{(T_n,T]}\Delta V(T_n,u,\delta)\eta(du)}-1\right)\Bigr|\ccF_{T_n-}\Bigr] = 0.
		\]
	\end{enumerate}
\end{corollary}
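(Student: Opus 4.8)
The plan is to deduce the statement from Theorem~\ref{thm:HJM} by reinterpreting the measure change encoded in $Z'$ as a change of num\'eraire under $\Q$. Set $N:=\exp(\int_0^{\cdot}\rois_s\,ds)$ and define $X^0:=N/Z'$. Since $Z'$ is a strictly positive density process and $N$ is strictly positive and continuous, $X^0$ is a strictly positive semimartingale with $X^0_0=1$. For any asset price $\Pi$ from Definition~\ref{def:market}, the process $\Pi/N$ is a $\Q'$-local martingale if and only if $(\Pi/N)Z'=\Pi/X^0$ is a $\Q$-local martingale, by the standard characterization of local martingales under an equivalent change of measure. Hence $\Q'$ is an ELMM with respect to $N$ if and only if $\Q$ is a risk-neutral measure with respect to $X^0$, so it suffices to rewrite $X^0$ in the form \eqref{eq:num}--\eqref{eq:num_FV} and to translate the conditions of Theorem~\ref{thm:HJM} through the resulting dictionary.

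First I would compute the stochastic logarithm of $X^0$. As $N$ is continuous and of finite variation, $[N,(Z')^{-1}]=0$ and one gets $X^0=\mathcal{E}\bigl(\int_0^{\cdot}\rois_s\,ds+\mathcal{L}((Z')^{-1})\bigr)$, where $\mathcal{L}$ denotes the stochastic logarithm. Writing $Z'=\mathcal{E}(U)$ with $U=-\theta\cdot W-\psi\ast(\mu-\nu)-\sum_n Y_n\Ind_{\dbraco{T_n,+\infty}}$ and applying the inverse-stochastic-exponential formula $\mathcal{L}(\mathcal{E}(U)^{-1})=-U+\langle U^c\rangle+\sum\frac{(\Delta U)^2}{1+\Delta U}$, and then splitting the resulting integral against $\mu$ into its compensated part plus a drift, $X^0$ takes the required form with $H_t=\theta_t$, $L(t,x)=\psi(t,x)/(1-\psi(t,x))$, $\Delta B_{T_n}=Y_n/(1-Y_n)$, and $r_t=\rois_t+\|\theta_t\|^2+\int_E \frac{L^2(t,x)}{1+L(t,x)}\,\lambda_t(dx)$. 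The key algebraic identities are $1+L=1/(1-\psi)$ (equivalently $L/(1+L)=\psi$) and $1+\Delta B_{T_n}=1/(1-Y_n)$; note also that $\frac{L^2}{1+L}=\frac{\psi^2}{1-\psi}$, so the additional assumption on $\psi$ is \emph{exactly} what guarantees both that $\int_0^T\int_E(L^2\wedge|L|)\lambda_s(dx)\,ds<+\infty$ (so that $X^0$ is an admissible num\'eraire) and that the drift $r$ is well-defined.

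With this dictionary the translation of the drift conditions is immediate. By Remark~\ref{rem:condition1}, condition~\emph{(i)} of Theorem~\ref{thm:HJM} for $\delta=0$ becomes \eqref{eq:HJM_r}, which after cancelling the common terms $\|\theta_t\|^2+\int_E\frac{L^2}{1+L}\lambda_t$ collapses to $\rois_t=f(t,t,0)$; for $\delta\in\cD$ it becomes \eqref{eq:HJM_s}, which via $L/(1+L)=\psi$ is precisely the stated equation for $\alpha^{\delta}$. Substituting $H=\theta$ and $\frac{1+L^{\delta}}{1+L}=(1-\psi)(1+L^{\delta})$ turns condition~\emph{(ii)} of the theorem into condition~\emph{(ii)} of the corollary. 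For the continuous integrability requirement, a direct computation yields
\[
\Lambda(s,x,T,\delta)=\Lambda^*(s,x,T,\delta)+\frac{L^2(s,x)}{1+L(s,x)},
\]
and since $\frac{L^2}{1+L}=\frac{\psi^2}{1-\psi}\ge 0$ is integrable under the running assumptions, \eqref{eq:jumps_int} and \eqref{eq:Q_int} are equivalent.

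It remains to handle the predetermined dates, where the crux is a Bayes relation at each atom. Since $\Delta Z'_{T_n}=-Z'_{T_n-}Y_n$, we have $Z'_{T_n}/Z'_{T_n-}=1-Y_n$, and $\E^{\Q}[1-Y_n\mid\ccF_{T_n-}]=1$ because $\E^{\Q}[Y_n\mid\ccF_{T_n-}]=0$; consequently $\E^{\Q'}[\xi\mid\ccF_{T_n-}]=\E^{\Q}[(1-Y_n)\xi\mid\ccF_{T_n-}]$ for nonnegative $\ccF_{T_n}$-measurable $\xi$, in the sense of generalized conditional expectations. Using $1/(1+\Delta B_{T_n})=1-Y_n$, this converts the sigma-integrability requirement and conditions~\emph{(iii)}--\emph{(iv)} of Theorem~\ref{thm:HJM} directly into their counterparts in the corollary. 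The main obstacle I anticipate lies precisely in these last technical points: verifying rigorously that $X^0$ meets all the structural and integrability requirements of \eqref{eq:num}--\eqref{eq:num_FV} (in particular $L\in(-1,+\infty)$ and the $L^2\wedge|L|$ integrability, where the extra assumption on $\psi$ is essential), and making the Bayes relation at the atoms precise at the level of generalized conditional expectations and sigma-integrability, which calls for the constraint $\E^{\Q}[Y_n\mid\ccF_{T_n-}]=0$ together with a result such as Theorem~5.29 of \cite{he1992semimartingale} for the purely discontinuous predictable part.
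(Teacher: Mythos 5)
Your proposal is correct and follows essentially the same route as the paper: both define $X^0:=\exp(\int_0^{\cdot}\rois_s ds)/Z'$, compute its stochastic-exponential representation (your inverse-stochastic-exponential formula is exactly what the paper's Corollary \ref{cor:stoch_exp_property} delivers), read off $H=\theta$, $L=\psi/(1-\psi)$, $\Delta B_{T_n}=Y_n/(1-Y_n)$ and $r_t=\rois_t+\|\theta_t\|^2+\int_E\psi^2/(1-\psi)\,\lambda_t(dx)$, and then translate the conditions of Theorem \ref{thm:HJM} via the Bayes relation $\E^{\Q'}[\xi|\ccF_{T_n-}]=\E^{\Q}[(1-Y_n)\xi|\ccF_{T_n-}]$ at the atoms. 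Your explicit identity $\Lambda=\Lambda^*+L^2/(1+L)$ is a nice spelled-out version of the equivalence of \eqref{eq:Q_int} and \eqref{eq:jumps_int}, which the paper leaves as "easily checked."
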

\begin{proof}
	By means of Bayes' formula, $\Q'$ is an ELMM if and only if $Z'S^\delta P(\cdot,T,\delta)e^{-\int_0^{\cdot}\rois_sds}$ is a local martingale under $\Q$, for every $T\in\R_+$ and $\delta\in\cD_0$.
	The result therefore follows by applying Theorem \ref{thm:HJM} with respect to  $X^0:=e^{\int_0^{\cdot}r^{{\rm OIS}}_sds}/Z'$. By applying Lemma \ref{cor:stoch_exp_property}, we obtain that
	\begin{align*}
		X^0 
		&= \mathcal{E}\bigg(\int_0^{\cdot}\Bigl(\rois_s+\|\theta_s\|^2
		+\int_E\frac{\psi^2(s,x)}{1-\psi(s,x)}\lambda_s(dx)\Bigr)ds 
		+ \theta\cdot W \\
		&\qquad\quad
		+ \frac{\psi}{1-\psi}\ast(\mu-\nu)
		+ \sum_{n\in\N}\frac{Y_n}{1-Y_n}\Ind_{\dbraco{T_n,+\infty}}\bigg).
	\end{align*}
	Note that $\int_0^T\int_E\psi^2(s,x)/(1-\psi(s,x))\lambda_s(dx)ds<+\infty$ a.s., as a consequence of the assumption that $\int_0^T\int_{\{\psi(s,x)\in[0,1]\}}\psi^2(s,x)/(1-\psi(s,x))\lambda_s(dx)ds<+\infty$ a.s.  together with the elementary inequality $x^2/(1-x)\leq|x|\wedge x^2$, for $x\leq 0$.
	The process $X^0$ is of the form \eqref{eq:num}, \eqref{eq:num_FV} with 
	$$
	r_t=\rois_t+\|\theta_t\|^2+\int_E\frac{\psi^2(t,x)}{1-\psi(t,x)}\lambda_t(dx), $$
	$H=\theta$, $L=\psi/(1-\psi)$ and $\Delta B_{T_n}=Y_n/(1-Y_n)$.
	Since 
	$$
	\int_0^T\int_E\frac{\psi^2(s,x)}{1-\psi(s,x)}\lambda_s(dx)ds<+\infty
	$$
	a.s., for all $T>0$, it can be easily checked that condition \eqref{eq:Q_int} is equivalent to \eqref{eq:jumps_int}.
	The corollary then follows from Theorem \ref{thm:HJM} noting that, for any $\ccF_{T_n}$-measurable random variable $\xi$ which is sigma-integrable under $\Q'$ with respect to $\ccF_{T_n-}$, it holds that
	\begin{equation*}
		\E^{\Q'}[\xi|\ccF_{T_n-}] 
		 = \frac{\E^{\Q}[Z'_{T_n}\xi|\ccF_{T_n-}]}{Z'_{T_n-}}
		= \E^{\Q}\bigl[(1-Y_n)\xi\big|\ccF_{T_n-}\bigr]
		 = \E^{\Q}\left[\frac{\xi}{1+\Delta B_{T_n}}\bigg|\ccF_{T_n-}\right],
	\end{equation*}
	where we have used the fact that $Z'_{T_n}=Z'_{T_n-}(1-Y_n)$, for every $n\in\N$.
\end{proof}

\begin{remark}	\label{rem:deflators}
	The proof of Corollary \ref{cor:ELMM_Q} permits to obtain a characterization of all {\em equivalent local martingale deflators} for the multiple curve financial market, i.e., all strictly positive $\Q$-local martingales $Z$ of the form \eqref{eq:stoch_exp_Z} such that $ZS^\delta P(\cdot,T,\delta)e^{-\int_0^{\cdot}\rois_sds}$ is a $\Q$-local martingale, for every $T\in\R_+$ and $\delta\in\cD_0$.
\end{remark}

\begin{remark} \label{rem:CFG}
	The HJM framework of \cite{CuchieroFontanaGnoatto2016}
	can be recovered as a special case with no stochastic discontinuities, setting  $\eta (du) = du$  in \eqref{eq:eta}, taking the OIS bank account as num\'eraire and a jump measure $\mu$ generated by a given It\^o semimartingale. \cite{CuchieroFontanaGnoatto2016} show that  most of the existing multiple curve models are covered by their framework, which a fortiori implies that they can be easily embedded in our framework. 
\end{remark}

 		\section{General market models} \label{sec:market models}

In this section, we consider market models and develop a general arbitrage-free framework for modeling Ibor rates.
As shown in Appendix~\ref{app:MM}, market models can be embedded into the extended HJM framework considered in Section \ref{sec:TSM}, in the spirit of \cite{BGM}. This is possible due to the fact that the measure $\eta(du)$  in the term structure equation \eqref{eq:PtT} may contain atoms. However, it turns out to be simpler to directly study market models as follows.

In the spirit of market models, and differently from Definition \ref{def:market}, in this section we assume that only finitely many assets are traded. For each $\delta\in\cD$, let $\cT^{\delta}=\{T^{\delta}_0,\ldots,T^{\delta}_{N^{\delta}}\}$ be the set of settlement dates of traded FRA contracts associated to tenor $\delta$, with $T^{\delta}_0=T_0$ and $T^{\delta}_{N^{\delta}}=T^*$, for $0\leq T_0<T^*<+\infty$. We consider an equidistant tenor structure, i.e. $T^{\delta}_i-T^{\delta}_{i-1}=\delta$, for all $i=1,\ldots,N^{\delta}$ and $\delta\in\cD$.
Let us also define $\cT:=\medcup_{\delta\in\cD}\cT^{\delta}$, corresponding to the set of all traded FRAs.
The starting point of our approach is representation \eqref{eq:liborFRA},
\beq\label{eq:liborFRA2}
\pifra(t, T, \delta, K) = \delta \big( L(t, T, \delta) - K \big) P(t, T + \delta),
\eeq
for $\delta\in\cD$, $T\in\cT^{\delta}$, $t\in[0,T]$ and $K\in\R$.
The financial market contains OIS zero-coupon bonds for all maturities $T\in\cT^0:=\cT\medcup\{T^*+\delta_i:i=1,\ldots,m\}$\footnote{Note that we need to consider an extended set of maturities for OIS bonds since the payoff of a FRA contract with settlement date $T$ and tenor $\delta$ takes place at date $T+\delta$.} as well as FRA contracts for all $\delta\in\cD$, $T\in\cT^{\delta}$ and $K\in\R$.

Let $(\Omega,\cF,\bbF,\Q)$ be a filtered probability space supporting a $d$-dimensional Brownian motion $W$ and a random measure $\mu$, as described in Section \ref{sec:TSM}.
We assume that, for every tenor $\delta\in\cD$ and maturity $T\in\cT^{\delta}$, the forward Ibor rate $L(\cdot,T,\delta)=(L(t,T,\delta))_{0\leq t\leq T}$ satisfies
\begin{align}
L(t,T,\delta)  &=  L(0,T,\delta) + \int_0^t a^L(s,T,\delta) ds + \sum_{n \in \mathbb{N}} \Delta L(T_n,T,\delta) \ind{ T_n \leq t}  \nonumber\\
&\quad+ \int_0^t b^L(s,T,\delta) dW_s
+ \int_0^t \int_E g^L(s, x, T, \delta) \bigl(\mu(ds,dx) - \nu(ds,dx)\bigr).
\label{def:Ldyn}
\end{align}
In the above equation, $a^L(\cdot,T,\delta)=(a^L(t,T,\delta))_{0\leq t\leq T}$ is a real-valued adapted process  satisfying $\int_0^T|a^L(s,T,\delta)|ds<+\infty$ a.s., $b^L(\cdot,T,\delta)=(b^L(t,T,\delta))_{0\leq t\leq T}$ is a progressively measurable $\R^d$-valued process satisfying the integrability condition $\int_0^T\|b^L(s,T,\delta)\|^2ds<+\infty$ a.s., $(\Delta L(T_n,T,\delta))_{n\in\N}$ is a family of random variables such that $\Delta L(T_n,T,\delta)$ is $\ccF_{T_n}$-measurable, for each $n\in\N$, and $g^L(\cdot,\cdot,T,\delta):\Omega\times[0,T]\times E\rightarrow\R$ is a $\mathcal{P}\otimes\cB_E$-measurable function that satisfies
$$\int_0^T\int_E \left(\bigl(g^L(s,x,T,\delta)\bigr)^2 \wedge |g^L(s,x,T,\delta)|\right) \lambda_s(dx) ds < + \infty \quad \text{a.s.}
$$  
The dates $(T_n)_{n\in\N}$ represent the stochastic discontinuities occurring in the market.
We assume that OIS bond prices are of the form \eqref{eq:PtT} for $\delta=0$, for all $T\in\cT^0$, with the associated forward rates $f(t,T,0)$ being as in \eqref{eq:fwd_rate}.

The main goal of this section consists in deriving necessary and sufficient conditions for a reference probability measure $\Q$ to be a risk-neutral measure with respect to a general num\'eraire $X^0$ of the form \eqref{eq:num} for the financial market where FRA contracts and OIS zero-coupon bonds are traded, and FRA prices are modeled via \eqref{eq:liborFRA2} and \eqref{def:Ldyn} for the discrete set $\cT$ of settlement dates. 
We recall that 
\begin{align*}
\bar{b}(t, T+\delta, 0) & = \int_{[t, T+\delta]} b(t, u, 0) \eta(du), \\
\bar{g}(t, x, T+\delta, 0) & = \int_{[t, T+\delta]} g(t, x, u, 0) \eta(du).
\end{align*}

\begin{theorem}\label{thm:Libor}
	Suppose that Assumption \ref{ass} holds for $\delta=0$ and for all maturities $T\in\cT^0$.
	Then $\Q$ is a risk-neutral measure with respect to $X^0$ if and only if all the conditions of Theorem \ref{thm:HJM} are satisfied for $\delta=0$ and for all  $T\in\cT^0$, and, for every $\delta\in\cD$,
	\beq	\label{eq:jumps_int_lib}
	\int_0^T \int_E \Bigl|g^L(s,x,T,\delta)\left(\frac{e^{-\bar{g}(s,x,T + \delta , 0)}}{1+L(s,x)} - 1\right)\Bigr|\lambda_s(dx)ds  < +\infty
	\eeq
	a.s. for all $T\in\cT^{\delta}$, and, for each $n\in\N$ and $\cT^{\delta}\owns T\geq T_n$, the random variable
	\beq	\label{eq:jump_sigma_int_lib}
	\frac{\Delta L(T_n,T,\delta)}{1+\Delta B_{T_n}}e^{-\int_{(T_n,T + \delta]}\Delta V(T_n,u,0)\eta(du)}
	\eeq
	is sigma-integrable with respect to $\ccF_{T_n-}$, and the following two conditions hold a.s.:
	\begin{enumerate}
		\item[(i)]
		for all $T\in\cT^{\delta}$ and a.e. $t\in[0,T]$, it holds that
		\begin{align*}
		a^L(t,T,\delta) & = b^L(t, T, \delta)^\top\bigl(H_t + \bar{b}(t, T + \delta, 0)\bigr)\\		
		&\quad - \int_E g^L(t, x, T, \delta) 
		\left(\frac{e^{-\bar{g}(t, x, T + \delta, 0)}}{1+L(t, x)}   - 1 \right)  \lambda_t(dx);			\end{align*}
		\item[(ii)]
		for all $n\in\N$ and $\cT^{\delta}\owns T\geq T_n$, it holds that
		\[
		\E^{\Q}\left[\frac{\Delta L(T_n,T,\delta) }{1 + \Delta B_{T_n}}e^{-\int_{(T_n,T + \delta]}\Delta V(T_n, u, 0)\eta(du)} \biggr| \ccF_{T_n -}\right] = 0.
		\]
	\end{enumerate}
\end{theorem}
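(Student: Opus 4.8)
My plan is to reduce the assertion to the local-martingale property of a single product process and then identify its drift by integration by parts, reusing the bond dynamics already computed in the proof of Theorem~\ref{thm:HJM}. From \eqref{eq:liborFRA2} we have
\[
\frac{\pifra(\cdot,T,\delta,K)}{X^0}=\delta\,\frac{L(\cdot,T,\delta)\,P(\cdot,T+\delta)}{X^0}-\delta K\,\frac{P(\cdot,T+\delta)}{X^0},
\]
which is affine in $K$. Since every $T\in\cT^{\delta}$ satisfies $T+\delta\in\cT^0$, the assumed validity of the conditions of Theorem~\ref{thm:HJM} for $\delta=0$ and all maturities in $\cT^0$ already guarantees that $N:=P(\cdot,T+\delta)/X^0$ is a strictly positive $\Q$-local martingale. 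Hence $\pifra(\cdot,T,\delta,K)/X^0$ is a $\Q$-local martingale for every $K\in\R$ if and only if the single process $L(\cdot,T,\delta)\,N$ is a $\Q$-local martingale, and it remains to characterize this property for each $\delta\in\cD$ and $T\in\cT^{\delta}$.

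To do so I first record the jump structure of $N$. By Corollary~\ref{cor:bond_stoch_exp}, \eqref{eq:num} and the OIS conditions, $N=N_0\,\mathcal{E}(\mathcal{M})$ for a $\Q$-local martingale $\mathcal{M}$ whose continuous part is $-\int_0^{\cdot}\bigl(H_s+\bar{b}(s,T+\delta,0)\bigr)^{\top}dW_s$, whose jump at a point $(s,x)$ of $\mu$ with $s\notin\{T_n\}$ equals $e^{-\bar{g}(s,x,T+\delta,0)}/(1+L(s,x))-1$, and whose jump at $T_n$ equals $1+\Delta\mathcal{M}_{T_n}-1$ with $1+\Delta\mathcal{M}_{T_n}=(1+\Delta B_{T_n})^{-1}e^{-\bar{V}(T_n,T+\delta,0)+f(T_n,T_n,0)}$; these follow by dividing the jump factors of $P(\cdot,T+\delta)$ (namely $e^{-\bar{g}}$ and $e^{-\bar{V}+f}$) by those of $X^0$ (namely $1+L$ and $1+\Delta B$). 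Exactly as in the proof of Theorem~\ref{thm:HJM}, I use the identity $-\bar{V}(T_n,T+\delta,0)+f(T_n,T_n,0)=-\int_{(T_n,T+\delta]}\Delta V(T_n,u,0)\eta(du)+f(T_n-,T_n,0)$.

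Next I apply the product rule $d(LN)=L_-\,dN+N_-\,dL+d[L,N]$, in which $L_-\,dN$ is a $\Q$-local martingale, and collect the predictable finite-variation part of $N_-\,dL+d[L,N]$, separating its absolutely continuous part from the part carried by $\{T_n\}$ (these do not interact, since $\mu(\{T_n\}\times E)=0$). Assuming first that $LN$ is a local martingale, this finite-variation part is a local martingale, hence of locally integrable variation; arguing as in Theorem~\ref{thm:HJM}, this forces the integrability \eqref{eq:jumps_int_lib} for the $\mu$-covariation term and, via Theorem~5.29 in \cite{he1992semimartingale}, the sigma-integrability \eqref{eq:jump_sigma_int_lib} for the $T_n$-jumps, so that the compensators exist. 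A predictable finite-variation local martingale vanishes (\cite[Corollary~I.3.16]{JacodShiryaev}), and I analyse its two parts separately. Dividing the absolutely continuous part by the strictly positive $N_-$ yields
\[
a^L(t,T,\delta)-b^L(t,T,\delta)^{\top}\bigl(H_t+\bar{b}(t,T+\delta,0)\bigr)+\int_E g^L(t,x,T,\delta)\Bigl(\frac{e^{-\bar{g}(t,x,T+\delta,0)}}{1+L(t,x)}-1\Bigr)\lambda_t(dx)=0,
\]
which is exactly condition~(i). The part supported on $\{T_n\}$ is the compensator of $\sum_n N_{T_n-}\,\Delta L(T_n,T,\delta)\,\bigl(1+\Delta\mathcal{M}_{T_n}\bigr)\Ind_{\dbraco{T_n,+\infty}}$; factoring out the strictly positive, $\ccF_{T_n-}$-measurable quantity $N_{T_n-}\,e^{f(T_n-,T_n,0)}$ and using the identity above, its vanishing is equivalent to condition~(ii). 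The converse follows by reading the same computation backwards, starting from \eqref{eq:jumps_int_lib}, \eqref{eq:jump_sigma_int_lib} and conditions~(i)--(ii).

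The main obstacle is the rigorous treatment of the stochastic-discontinuity jumps at the $T_n$: establishing that the associated pure-jump process is of locally integrable variation so that its dual predictable projection exists — which is precisely where the sigma-integrability \eqref{eq:jump_sigma_int_lib} enters — and then reducing the resulting conditional-expectation requirement to the clean form~(ii) by cancelling the $\ccF_{T_n-}$-measurable factor $N_{T_n-}e^{f(T_n-,T_n,0)}$. The continuous-drift identification and the $\mu$-covariation bookkeeping are routine adaptations of the corresponding computation in Theorem~\ref{thm:HJM}.
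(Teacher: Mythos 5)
Your proposal is correct and follows essentially the same route as the paper's proof: split the risk-neutral property into the OIS part (handled by Theorem \ref{thm:HJM} with $\delta=0$, $T\in\cT^0$) and the FRA part, reduce the latter to the local martingale property of $L(\cdot,T,\delta)P(\cdot,T+\delta)/X^0$, apply integration by parts against the stochastic-exponential representation of the discounted bond, and identify the absolutely continuous drift and the $T_n$-supported compensator (via Theorem 5.29 of He--Wang--Yan and the vanishing of predictable finite-variation local martingales), with the converse obtained by reversing the computation. The only differences are presentational: the paper factors $N_{t-}$ out of the whole differential at the outset, whereas you carry it along and cancel the $\ccF_{T_n-}$-measurable factor $N_{T_n-}e^{f(T_n-,T_n,0)}$ at the compensator stage, which is equivalent.
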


Condition {\em (i)} of Theorem \ref{thm:Libor} is a drift restriction for the Ibor rate process. In the context of a continuum of traded maturities, as in Theorem \ref{thm:HJM}, this condition can be separated into a condition on the short end and an  HJM-type drift restriction (see conditions {\em (i)} and {\em (ii)} in Theorem \ref{thm:HJM}). Condition {\em (ii)}, similarly to conditions {\em (iii)}, {\em (iv)} of Theorem \ref{thm:HJM}, corresponds to requiring that, for each $n\in\N$, the size of the jumps occurring at date $T_n$ in FRA prices cannot be predicted on the basis of the information contained in $\cF_{T_n-}$.

\begin{proof}
	In view of representation \eqref{eq:liborFRA2}, $\Q$ is a risk-neutral measure with respect to $X^0$ if and only if $P(\cdot,T)/X^0$ is a $\Q$-local martingale, for every $T\in\cT^0$, and $L(\cdot,T,\delta)P(\cdot,T+\delta)/X^0$ is a $\Q$-local martingale, for every $\delta\in\cD$ and $T\in\cT^{\delta}$.
	Considering first the OIS bonds, Theorem \ref{thm:HJM} implies that $P(\cdot,T)/X^0$ is a $\Q$-local martingale, for every $T\in\cT^{0}$, if and only if conditions \eqref{eq:jumps_int}, \eqref{eq:jump_sigma_int} as well as conditions {\em (i)}--{\em (iv)} of Theorem \ref{thm:HJM} are satisfied for $\delta=0$ and for all  $T\in\cT^0$.
	Under these conditions, equation \eqref{eq:SPX_pred} for $\delta=0$ gives that 
	\beq	\label{eq:PX}
	\frac{P(\cdot,T)}{X^0} = P(0,T)\,
	\cE\bigl(M'(T,0)\bigr),
	\eeq
	for every $T\in\cT^{0}$, where the local martingale $M'(T,0)$ is given by
	\begin{align*}
	M'(T,0)
	&= K^{(2)}(T,0)
	- \int_0^{\cdot}\bigl(H_s+\bar{b}(s,T,0)\bigr)dW_s	\\
	&\quad + \int_0^{\cdot}\int_E \left(\frac{e^{-\bar{g}(s,x,T,0)}}{1+L(s,x)}-1\right)\bigl(\mu(ds,dx)-\nu(ds,dx)\bigr),
	\end{align*}
	as follows from equation \eqref{eq:comp_0}, with
	\[
	K^{(2)}(T,0) = \sum_{n\in\N}\left(\frac{e^{-\int_{(T_n,T]}\Delta V(T_n,u,0)\eta(du)+f(T_n-,T_n,0)}}{1+\Delta B_{T_n}}-1\right)\Ind_{\dbraco{T_n,+\infty}}.
	\]
	By relying on \eqref{def:Ldyn} and \eqref{eq:PX}, we can compute
	\begin{align}
	&d\left(L(t,T,\delta)\frac{P(t,T+\delta)}{X^0_t}\right)	\notag
	= \frac{P(t-,T+\delta)}{X^0_{t-}}\\
	&\cdot\biggl(dL(t,T,\delta) + L(t-,T,\delta)d M'_t(T+\delta,0) + d\bigl[L(\cdot,T,\delta),M'(T+\delta,0)\bigr]_t\biggr)	\notag\\
	& = \frac{P(t-,T+\delta)}{X^0_{t-}}\biggl(M''_t(T,\delta) + j_t(T,\delta)dt + dJ^{(1)}_t(T,\delta) + dJ^{(2)}_t(T,\delta)\biggr),
	\label{eq:LPX}
	\end{align}
	where $M''(T,\delta)=(M''_t(T,\delta))_{0\leq t\leq T}$ is a local martingale given by
	\begin{align*}
	M''_t(T,\delta) &:= \int_0^tL(s-,T,\delta)dM'_s(T+\delta,0) + \int_0^tb^L(s,T,\delta)dW_s \\
	&\quad + \int_0^t\int_E g^L(s,x,T,\delta)\bigl(\mu(ds,dx)-\nu(ds,dx)\bigr),
	\end{align*}
	$j(T,\delta)=(j_t(T,\delta))_{0\leq t\leq T}$ is an adapted real-valued process given by
	\[
	j_t(T,\delta) = a^L(t,T,\delta) - b^L(t,T,\delta)^{\top}\bigl(H_t+\bar{b}(t,T+\delta,0)\bigr),
	\]
	$J^{(1)}(T,\delta)=(J^{(1)}_t(T,\delta))_{0\leq t\leq T}$ is a pure jump finite variation process given by
	\[
	J^{(1)}_t(T,\delta) = \int_0^t\int_E g^L(s,x,T,\delta)\left(\frac{e^{-\bar{g}(s,x,T+\delta,0)}}{1+L(s,x)}-1\right)\mu(ds,dx),
	\]
	and $J^{(2)}(T,\delta)=(J^{(2)}_t(T,\delta))_{0\leq t\leq T}$ is a pure jump finite variation process given by
	\[
	J^{(2)}_t(T,\delta) = \sum_{n\in\N}\ind{T_n\leq t}\frac{\Delta L(T_n,T,\delta)}{1+\Delta B_{T_n}}e^{-\int_{(T_n,T+\delta]}\Delta V(T_n,u,0)\eta(du)+f(T_n-,T_n,0)}.
	\]
	If $L(\cdot,T,\delta)P(\cdot,T+\delta)/X^0$ is a local martingale, for every $\delta\in\cD$ and $T\in\cT^{\delta}$, then \eqref{eq:LPX} implies that the processes $J^{(1)}(T,\delta)$ and $J^{(2)}(T,\delta)$ are of locally integrable variation. Similarly as in the proof of Theorem \ref{thm:HJM}, this implies the validity of conditions \eqref{eq:jumps_int_lib} and \eqref{eq:jump_sigma_int_lib}, due to Theorem 5.29 in \cite{he1992semimartingale}. 
	Let us denote by $\widehat{J}^{(i)}(T,\delta)$ the compensator of $J^{(i)}(T,\delta)$, for $i\in\{1,2\}$, $\delta\in\cD$ and $T\in\cT^{\delta}$.
	We have that
	\begin{align*}
	\widehat{J}^{(1)}(T,\delta) &= \int_0^{\cdot}\int_E g^L(s,x,T,\delta)\left(\frac{e^{-\bar{g}(s,x,T+\delta,0)}}{1+L(s,x)}-1\right)\lambda_s(dx)ds,	\\
	\widehat{J}^{(2)}(T,\delta) &= \sum_{n\in\N}\bigg(\E^{\Q}\left[\frac{\Delta L(T_n,T,\delta)}{1+\Delta B_{T_n}}e^{-\int_{(T_n,T+\delta]}\Delta V(T_n,u,0)\eta(du)}\bigg|\ccF_{T_n-}\right] e^{f(T_n-,T_n,0)}\Ind_{\dbraco{T_n,+\infty}}\bigg).
	\end{align*}
	The local martingale property of $L(\cdot,T,\delta)P(\cdot,T+\delta)/X^0$ together with equation \eqref{eq:LPX} implies that the predictable finite variation process
	\beq	\label{eq:comp_LPX}
	\int_0^{\cdot}j_s(T,\delta)ds + \widehat{J}^{(1)}(T,\delta) + \widehat{J}^{(2)}(T,\delta)
	\eeq
	is null (up to an evanescent set), for every $\delta\in\cD$ and $T\in\cT^{\delta}$. Considering separately the absolutely continuous and discontinuous parts, this implies the validity of conditions {\em (i)}, {\em (ii)} in the statement of the theorem.
	
	Conversely, by Theorem \ref{thm:HJM}, if conditions \eqref{eq:jumps_int}, \eqref{eq:jump_sigma_int} as well as conditions {\em (i)}--{\em (iv)} of Theorem \ref{thm:HJM} are satisfied for $\delta=0$ and for all  $T\in\cT^0$, then $P(\cdot,T)/X^0$ is a $\Q$-local martingale, for all $T\in\cT^0$. Furthermore, if conditions \eqref{eq:jumps_int_lib}, \eqref{eq:jump_sigma_int_lib} are satisfied and conditions {\em (i)}, {\em (ii)} of the theorem hold, then the process given in \eqref{eq:comp_LPX} is null. In turn, by equation \eqref{eq:LPX}, this implies that $L(\cdot,T,\delta)P(\cdot,T+\delta)/X^0$ is a $\Q$-local martingale, for every $\delta\in\cD$ and $T\in\cT^{\delta}$, thus proving that $\Q$ is a risk-neutral measure with respect to $X^0$.
\end{proof}

\begin{remark}\label{rem:termBond}
	In  market models, the num\'eraire is usually chosen as the OIS zero-coupon bond with the longest available maturity  $T^*$ ({\em terminal bond}). In addition, the reference probability measure $\Q$ is the associated $T^*$-forward  measure, see Section 12.4 in \cite{MusielaRutkowski}. 
	Exploiting the generality of the process $X^0$, this setting can be easily accommodated within our framework. Indeed, if $\int_0^{T^*}\int_E|e^{-\bar{g}(s,x,T^*,0)}-1+\bar{g}(s,x,T^*,0)|\lambda_s(dx)ds<+\infty$ a.s., Corollary \ref{cor:bond_stoch_exp} shows that $X^0=P(\cdot,T^*)/P(0,T^*)$ holds as long as the processes appearing in \eqref{eq:num} and \eqref{eq:num_FV} are specified as
	\begin{align*}
	H_t &= -\bar{b}(t,T^*,0),\\
	L(t,x) &= e^{-\bar{g}(t,x,T^*,0)}-1,\\
	\Delta B_{T_n} &= e^{-\int_{(T_n,T^*]}\Delta V(T_n,u,0)\eta(du)+f(T_n-,T_n,0)}-1,\\
	r_t &= f(t,t,0)-\bar{a}(t,T^*,0)+\frac{1}{2}\|\bar{b}(t,T^*,0)\|^2\\
	&\quad+\int_E\bigl(e^{-\bar{g}(t,x,T^*,0)}-1+\bar{g}(t,x,T^*,0)\bigr)\lambda_t(dx).
	\end{align*}
	Under this specification, a direct application of Theorem \ref{thm:Libor} yields necessary and sufficient conditions for $\Q$ to be a risk-neutral measure with respect to the terminal OIS bond as num\'eraire.
\end{remark}

\subsection{Martingale modeling}
Typically, market models start directly from the assumption that each Ibor rate $L(\cdot,T,\delta)$ is a martingale under the $(T+\delta)$-forward measure $\Q^{T+\delta}$ associated to the num\'eraire $P(\cdot,T+\delta)$.
In our context, this assumption is generalized into a {\em local martingale} requirement under the $(T+\delta)$-forward measure, whenever the latter is well-defined. 
More specifically, suppose that $P(\cdot,T+\delta)/X^0$ is a true martingale and define the $(T+\delta)$-forward measure by $d\Q^{T+\delta}|_{\ccF_{T+{\delta}}}:=(P(0,T+\delta)X^0_{T+\delta})^{-1}d\Q|_{\ccF_{T+{\delta}}}$.
As a consequence of Girsanov's theorem (see \cite[Theorem III.3.24]{JacodShiryaev}) and equation \eqref{eq:PX}, the forward Ibor rate $L(\cdot,T,\delta)$ satisfies under the measure $\Q^{T+\delta}$
\begin{align}
L(t,T,\delta) & = L(0,T,\delta) + \int_0^t a^{L,T+\delta}(s,T,\delta) ds + \sum_{n \in \mathbb{N}} \Delta L(T_n,T,\delta) \ind{ T_n \leq t}  \nonumber\\
&\quad 
+ \int_0^t b^L(s,T,\delta) dW^{T+\delta}_s+ \int_0^t \int_E g^L(s, x, T, \delta) \bigl(\mu(ds,dx) - \nu^{T+\delta}(ds,dx)\bigr),
\label{def:Ldyn_fwdmeas}
\end{align}
for some adapted real-valued process $a^{L,T+\delta}(\cdot,T,\delta)$, where the process $W^{T+\delta}$ is a $\Q^{T+\delta}$-Brownian motion defined by $W^{T+\delta}:=W+\int_0^{\cdot}(H_s+\bar{b}(s,T+\delta,0))ds$ and the compensator $\nu^{T+\delta}(ds,dx)$ of the random measure $\mu(ds,dx)$ under $\Q^{T+\delta}$ is given by
\[
\nu^{T+\delta}(ds,dx) = \frac{e^{-\bar{g}(s,x,T+\delta,0)}}{1+L(s,x)}\lambda_s(dx)ds.
\]
In this context, Theorem \ref{thm:Libor} leads to the following proposition, which provides a characterization of the local martingale property of forward Ibor rates under forward measures.

\begin{proposition}\label{prop:Liborchangeofmeasure}
	Suppose that Assumption \ref{ass} holds for $\delta=0$ and for all $T\in\cT^0$. Assume furthermore that $P(\cdot,T)/X^0$ is a true $\Q$-martingale, for every $T\in\cT^0$. 
	Then the following are equivalent:
	\begin{enumerate}
		\item[(i)]
		$\Q$ is a risk-neutral measure;
		\item[(ii)]
		$L(\cdot,T,\delta)$ is a local martingale under $\Q^{T+\delta}$, for every $\delta\in\cD$ and $T\in\cT^{\delta}$;
		\item[(iii)]
		for every $\delta\in\cD$ and $T\in\cT^{\delta}$, it holds that
		\[
		a^{L,T+\delta}(t,T,\delta) = 0,
		\]
		outside a subset of $\Omega\times[0,T]$ of $(\Q\otimes dt)$-measure zero, and, for every $n\in\N$ and $\cT^{\delta}\owns T\geq T_n$, the random variable $\Delta L(T_n,T,\delta)$ 
		satisfies
		\[
		\E^{\Q^{T+\delta}}\left[\Delta L(T_n,T,\delta)|\ccF_{T_n-}\right] = 0
		\text{ a.s.}
		\]
	\end{enumerate}
\end{proposition}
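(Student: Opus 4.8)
The plan is to establish the two biconditionals (i) $\Leftrightarrow$ (ii) and (ii) $\Leftrightarrow$ (iii): the first through a change of measure via Bayes' formula, the second by reading off the canonical decomposition of $L(\cdot,T,\delta)$ under $\Q^{T+\delta}$ from \eqref{def:Ldyn_fwdmeas}. Throughout, the standing assumption that $P(\cdot,T)/X^0$ is a true $\Q$-martingale for every $T\in\cT^0$ serves a double purpose: it makes the OIS part of the risk-neutral condition hold automatically, and it guarantees that each forward measure $\Q^{T+\delta}$ is well-defined, with strictly positive density process $Z^{T+\delta}=P(\cdot,T+\delta)/(P(0,T+\delta)X^0)$ and with the $\Q^{T+\delta}$-dynamics \eqref{def:Ldyn_fwdmeas} at our disposal.

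For (i) $\Leftrightarrow$ (ii), I would start from the FRA representation \eqref{eq:liborFRA2}, which yields $\pifra(\cdot,T,\delta,K)/X^0 = \delta\bigl(L(\cdot,T,\delta)-K\bigr)P(\cdot,T+\delta)/X^0$. Since $P(\cdot,T+\delta)/X^0$ is already a $\Q$-martingale by assumption, asking that this be a $\Q$-local martingale for \emph{every} $K\in\R$ is equivalent to asking that $L(\cdot,T,\delta)P(\cdot,T+\delta)/X^0$ be a $\Q$-local martingale. As the OIS bond conditions hold by hypothesis, condition (i) therefore reduces to this single-process property for every $\delta\in\cD$ and $T\in\cT^{\delta}$. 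Bayes' formula applied with the density process $Z^{T+\delta}$ then shows that $L(\cdot,T,\delta)P(\cdot,T+\delta)/X^0$ is a $\Q$-local martingale if and only if $L(\cdot,T,\delta)$ is a $\Q^{T+\delta}$-local martingale on $[0,T]$, which is precisely (ii).

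For (ii) $\Leftrightarrow$ (iii), I would note from \eqref{def:Ldyn_fwdmeas} that the integrals against $W^{T+\delta}$ and against $\mu-\nu^{T+\delta}$ are $\Q^{T+\delta}$-local martingales, so that $L(\cdot,T,\delta)$ is a $\Q^{T+\delta}$-local martingale if and only if the finite-variation process $\int_0^{\cdot}a^{L,T+\delta}(s,T,\delta)\,ds + \sum_{n\in\N}\Delta L(T_n,T,\delta)\ind{T_n\leq\cdot}$ is one. Exactly as in the proofs of Theorems \ref{thm:HJM} and \ref{thm:Libor}, the local martingale property forces this process to be of locally integrable variation; its dual predictable projection, computed via \cite[Theorem 5.29]{he1992semimartingale}, equals $\int_0^{\cdot}a^{L,T+\delta}(s,T,\delta)\,ds + \sum_{n\in\N}\E^{\Q^{T+\delta}}[\Delta L(T_n,T,\delta)\,|\,\ccF_{T_n-}]\ind{T_n\leq\cdot}$. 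Since a predictable finite-variation local martingale null at the origin is evanescent by \cite[Corollary I.3.16]{JacodShiryaev}, separating its absolutely continuous and purely discontinuous parts gives $a^{L,T+\delta}(\cdot,T,\delta)=0$ outside a $(\Q\otimes dt)$-null set together with $\E^{\Q^{T+\delta}}[\Delta L(T_n,T,\delta)\,|\,\ccF_{T_n-}]=0$, i.e. (iii); the converse direction is immediate, since (iii) makes the compensator vanish.

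The main obstacle I anticipate is the careful handling of the fixed-time jumps $\Delta L(T_n,T,\delta)$ at the predetermined dates $(T_n)_{n\in\N}$: one must verify their sigma-integrability with respect to $\ccF_{T_n-}$ so that both the generalized conditional expectations in (iii) and the dual predictable projection of the pure-jump part are well-defined, in complete analogy with the arguments running through \cite[Theorem 5.29]{he1992semimartingale} in the proofs of Theorems \ref{thm:HJM} and \ref{thm:Libor}. A secondary point deserving attention is confirming that the assumed true-martingale property of the OIS bonds does deliver the stochastic-exponential representation \eqref{eq:PX}, and hence the explicit $\Q^{T+\delta}$-dynamics \eqref{def:Ldyn_fwdmeas}, on which the entire argument rests.
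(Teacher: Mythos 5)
Your proposal is correct and follows essentially the same route as the paper's own proof: reduce (i) to the local martingale property of $L(\cdot,T,\delta)P(\cdot,T+\delta)/X^0$ under $\Q$, obtain (i) $\Leftrightarrow$ (ii) by Bayes' rule for the forward-measure density, and obtain (ii) $\Leftrightarrow$ (iii) from \eqref{def:Ldyn_fwdmeas} together with \cite[Theorem 5.29]{he1992semimartingale}. Your write-up merely spells out the compensator and sigma-integrability details that the paper leaves implicit in its citations.
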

\begin{proof}
	Under these assumptions, $\Q$ is a risk-neutral measure if and only if $L(\cdot,T,\delta)P(0,T+\delta)/X^0$ is a local martingale under $\Q$, for every $\delta\in\cD$ and $T\in\cT^{\delta}$. The equivalence $(i)\Leftrightarrow(ii)$ then follows from the conditional version of Bayes' rule (see \cite[Proposition III.3.8]{JacodShiryaev}), while the equivalence $(ii)\Leftrightarrow(iii)$ is a direct consequence of equation \eqref{def:Ldyn_fwdmeas} together with \cite[Theorem 5.29]{he1992semimartingale}.
\end{proof}
	
\section{Affine specifications}\label{sec:affine}

One of the most successful classes of processes in term-structure modeling is the class of affine processes. This class combines a great flexibility in capturing the important features of interest rate markets with a remarkable analytical tractability, see e.g. \cite{DuffieKan}, \cite{DuffieFilipovicSchachermayer}, as well as \cite{Filipovic2009} for a textbook account.
In the literature, affine processes are by definition stochastically continuous and, therefore, do not allow for jumps at predetermined dates. In view of our modeling objectives, we need a suitable generalization of the notion of affine process. To this effect, \cite{KellerResselSchmidtWardenga2018} have recently introduced {\em affine semimartingales} by dropping the requirement of stochastic continuity. Related results on affine processes with stochastic discontinuities in credit risk may be found in \cite{GehmlichSchmidt2016MF}. In the present section, we aim at showing how the class of affine semimartingales leads to flexible and tractable multiple curve models with stochastic discontinuities.

We consider a countable set $\bbT=\{T_n : n\in\N \}$ of discontinuity dates, with $T_{n+1}>T_n$, for every $n\in\N$, and $\lim_{n\rightarrow+\infty}T_n=+\infty$.
We assume that the filtered probability space $(\Omega,\cF,\bbF,\Q)$ supports a $d$-dimensional special semimartingale $X=(X_t)_{t\geq0}$ which is further assumed to be an {\em affine semimartingale} in the sense of \cite{KellerResselSchmidtWardenga2018} and to admit the canonical decomposition
\begin{equation*} 
X = X_0 + B^X+ X^c + x\ast\left(\mu^X-\nu^X\right),
\end{equation*}
where $B^X$ is a finite variation predictable process, $X^c$ is a continuous local martingale with quadratic variation $C^X$ and $\mu^X-\nu^X$ is the compensated jump measure of $X$. 
Let $B^{X,c}$ be the continuous part of $B^X$ and $\nu^{X,c}$ the continuous part of the random measure $\nu^X$, in the sense of \cite[\textsection~II.1.23]{JacodShiryaev}.
In view of \cite[Theorem 3.2]{KellerResselSchmidtWardenga2018}, under weak additional assumptions it holds that
\begin{equation}\label{affsemchar}\begin{aligned} 
B^{X,c}_t(\omega) &= \int_0^t \Big( \beta_0 (s)  + \sum_{i=1}^d X^i_{s-}(\omega) \beta_i(s)   \Big) ds, \\
C^{X}_t(\omega) &= \int_0^t \Big( \alpha_0 (s) + \sum_{i=1}^d X^i_{s-}(\omega) \alpha_i (s)  \Big) ds, \\
\nu^{X,c}(\omega,dt,dx) &= \Bigl(\mu_0(t,dx)+\sum_{i=1}^d X^i_{t-}(\omega)\mu_i(t,dx)\Bigr)dt,\\
\int_{\R^d} \bigl(e^{\langle u,x \rangle}  - 1\bigr) \nu^X(\omega,\{t\},dx) &= \left( \exp\Big( \gamma_0(t,u) + \sum_{i=1}^d \langle X^i_{t-}(\omega), \gamma_i (t,u) \rangle \Big) - 1\right).
\end{aligned} \end{equation}
In \eqref{affsemchar}, we have that $\beta_i:\R_+\rightarrow\R^d$ and $\alpha_i:\R_+\rightarrow\R^{d\times d}$, for $i=0,1,\ldots,d$, $\gamma_0:\R_+\times\mathbb{C}^d\rightarrow\mathbb{C}_{-}$ and $\gamma_i:\R_+\times\mathbb{C}^d\rightarrow\mathbb{C}^d$, for $i=1,\ldots,d$.
$\mu_i(t,dx)$ is a Borel measure on $\R^d\setminus\{0\}$ for all  $i=0,1,\ldots,d$, such that for all $t\in\R_+$, $\int_{\R^d\setminus\{0\}}(1+|x|^2)\mu_i(t,dx)<+\infty$.
Finally, we assume that $\nu^X(\{t\}\times\R^d)$ vanishes a.s. outside the set of stochastic discontinuities $(T_n)_{n\in\N}$.

We use the affine semimartingale $X$ as the driving process of a multiple curve model, as presented in Section \ref{sec:TSM}. In particular, we focus here on modeling the $\delta$-tenor bond prices $P(t,T,\delta)$ and the multiplicative spreads $S^{\delta}_t$ in such a way that the resulting model is {\em affine} in the sense of the following definition, which extends the approach of \cite[Section 5.3]{KellerResselSchmidtWardenga2018}.

\begin{definition}  \label{def:affine}
	The multiple curve model is said to be {\em affine} if
	\begin{align}\label{eq:fS}
	f(t,T,\delta) &= f(0,T,\delta) + \int_0^t\varphi(s,T,\delta) dX_s,
	& \text{ for all }\delta\in\cD_0, \\
	S_t^\delta &= S^{\delta}_0\exp\left(\int_0^t\psi^\delta_s dX_s\right),
	& \text{ for all }\delta\in\cD,
	\end{align}
	for all $0\leq t\leq T<+\infty$, where $\varphi:\Omega\times\R^2_+\times\cD_0\rightarrow\R^d$ and $\psi^\delta:\Omega\times\R_+\times\cD\rightarrow\R^d$ are predictable processes such that, for every $i=1,\ldots,d$ and $T\in\R_+$,
	\[
	\psi^\delta\in L(X)
	\qquad\text{and}\qquad
	\int_0^T|\psi^\delta_t||dB^{X,c}_t|<+\infty\text{ a.s.},
	\quad\text{ for all }\delta\in\cD,
	\]
	and, for all $\delta\in\cD_0$ and $T\in\R_+$,
	\begin{align*}
	\left(\int_0^T|\varphi^i(\cdot,u,\delta)|^2\eta(du)\right)^{1/2}&\in L(X^i)
	\qquad\text{and}\qquad \\
	\int_0^T\int_0^T|\varphi(t,u,\delta)|\eta(du)|dB^{X,c}_t|&<+\infty\text{ a.s.},		
	\end{align*}
	with $L(X)$ denoting the set of $\R^d$-valued predictable processes which are integrable with respect to $X$ in the semimartingale sense, and similarly for $L(X^i)$.
	The measure $\eta$ is specified as in equation \eqref{eq:eta}.
\end{definition}

For all $0 \leq t \leq T <+\infty$ and $\delta\in\cD_0$, let us also define
\begin{align*}
\bar \varphi(t,T,\delta) := \int_{[t,T]}\varphi(t,u,\delta) \eta(du).
\end{align*}
We furthermore assume that $\int_0^Te^{(\psi^{\delta}_t)^\top x}\ind{(\psi^{\delta}_t)^{\top}x>1}\nu^{X,c}(dt,dx)<+\infty$ a.s., for all $T\in\R_+$, which ensures that $S^{\delta}$ is a special semimartingale (see \cite[Proposition II.8.26]{JacodShiryaev}).
To complete the specification of the model, we suppose that $X^0$ takes the form
\begin{equation}  \label{eq:num_affine}
X^0_t = \exp\Big(\int_0^t r_s ds + \sum_{n\in\N} \psi_{T_n}^\top \Delta X_{T_n}\ind{T_n\leq t} \Big),
\qquad\text{ for all }t\geq0,
\end{equation}
where $(r_t)_{t\geq0}$ is an adapted real-valued process satisfying $\int_0^T|r_t|dt<+\infty$ a.s., for all $T\in\R_+$, and $\psi_{T_n}$ is a $d$-dimensional $\cF_{T_n-}$-measurable random vector, for all $n\in\N$.

We aim at characterizing when $\Q$ is a risk-neutral measure for an affine multiple curve model. By Remark \ref{rem:condition1}, we see that a necessary condition is that
\begin{equation}  \label{eq:rf}
r_t = f(t,t,0),
\qquad\text{ for a.e. }t\geq0.
\end{equation}

Under the present assumptions and in the spirit of Theorem \ref{thm:HJM}, the following proposition provides sufficient conditions for $\Q$ to be a risk-neutral measure for the affine multiple curve model introduced above. For convenience of notation we let $\psi^0_t := 0$ for all $t\in \R_+$ and $S^0_0 := 1$, so that $S^0:= S^0_0\exp(\int_0^\cdot\psi^0_s dX_s) \equiv 1$.

\begin{proposition}\label{prop:affine}
	Consider an affine multiple curve model as in Definition \ref{def:affine} and satisfying \eqref{eq:rf}. Assume furthermore that
	\begin{equation}  \label{eq:aff_integrability}
	\int_0^T\int_{\R^d\setminus\{0\}}\left|e^{(\psi^\delta_s)^{\top}x}\bigl(e^{-\bar{\varphi}(s,T,\delta)^{\top}x}-1\bigr)+\bar{\varphi}(s,T,\delta)^{\top}x\right|\nu^{X,c}(ds,dx)<+\infty
	\quad\text{a.s.}
	\end{equation}
	for every $\delta\in\cD_0$ and $T\in\R_+$.
	Then $\Q$ is a risk-neutral measure with respect to $X^0$ given as in \eqref{eq:num_affine} if the following three conditions hold a.s. for every $\delta\in\cD_0$:
	\begin{enumerate}[(i)]
		\item 
		for a.e. $t\in\R_+$, it holds that
		\begin{align*}
		r_t - f(t,t,\delta) &= 
		(\psi^\delta_t)^\top\left( \beta_0(t)+\sum_{i=1}^d X^i_{t-}\beta_i(t) \right) + \frac{1}{2}(\psi^\delta_t)^{\top}\left( \alpha_0(t)+\sum_{i=1}^d X^i_{t-}\alpha_i(t) \right)\psi^\delta_t \\
		& \quad+ \int_{\R^d\setminus\{0\}}\left(e^{(\psi^\delta_t)^{\top}x}-1-(\psi^\delta_t)^{\top}x\right)\left(\mu_0(t,dx)+\sum_{i=1}^dX^i_{t-}\mu_i(t,dx)\right);
		\end{align*}
		\item 
		for every $T\in\R_+$,  a.e. $t\in[0,T]$ and for every $i=0,1,\ldots,d$, it holds that
		\begin{align} \label{eq:affineI}
		\bar \varphi(t,T,\delta)^\top \beta_i(t) 
		&= \bar \varphi(t,T,\delta)^\top \alpha_i(t) \left( \half \bar \varphi(t,T,\delta ) - \psi^\delta_t\right)  \notag\\
		&\quad+ \int_{\R^d\setminus\{0\}}\left(e^{(\psi^\delta_t)^{\top}x}\left(e^{-\bar{\varphi}(t,T,\delta)^{\top}x}-1\right)+\bar{\varphi}(t,T,\delta)^{\top}x\right)\mu_i(t,dx);
		\end{align}
		\item
		for every $n\in\N$ and $T\geq T_n$, it holds that
		\begin{align*}
		-f(T_n-,T_n,\delta)
		&=\gamma_0\Bigl(T_n,\psi^{\delta}_{T_n}-\psi_{T_n}-\int_{(T_n,T]}\varphi(T_n,u,\delta)\eta(du)\Bigr) \\
		&\quad+ \sum_{i=1}^d\Bigl\langle X^i_{T_n-},\gamma_i\Bigl(T_n,\psi^\delta_{T_n}-\psi_{T_n}-\int_{(T_n,T]}\varphi(T_n,u,\delta)\eta(du)\Bigr)\Bigr\rangle.
		\end{align*}
	\end{enumerate}
\end{proposition}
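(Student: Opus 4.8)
The plan is to derive the statement from Theorem \ref{thm:HJM} by realizing the affine model of Definition \ref{def:affine} as a special case of the general HJM framework of Section \ref{sec:TSM}. The first step is to read off the coefficients $a,b,g,\Delta V$ of the forward rate, $\alpha^\delta,H^\delta,L^\delta,\Delta A^\delta$ of the spread and $r,H,L,\Delta B$ of the num\'eraire in \eqref{eq:num}--\eqref{eq:fwd_rate} from the driving affine semimartingale $X$. Since the random measure of Section \ref{sec:TSM} must have an absolutely continuous compensator, I would identify $\mu$ with the restriction of $\mu^X$ to $\R_+\setminus\bbT$, so that $\nu=\nu^{X,c}$ and $\lambda_s(dx)=\mu_0(s,dx)+\sum_{i=1}^dX^i_{s-}\mu_i(s,dx)$, while the jumps of $X$ at the predetermined dates $T_n$ are absorbed into the predetermined-jump terms $\Delta V,\Delta A^\delta,\Delta B$. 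Writing $\beta^X_t:=\beta_0(t)+\sum_iX^i_{t-}\beta_i(t)$ and $\alpha^X_t:=\alpha_0(t)+\sum_iX^i_{t-}\alpha_i(t)$, the dynamics $df(t,T,\delta)=\varphi(t,T,\delta)^\top dX_t$ give $a(t,T,\delta)=\varphi(t,T,\delta)^\top\beta^X_t$, $g(t,x,T,\delta)=\varphi(t,T,\delta)^\top x$ (hence $\bar g(t,x,T,\delta)=\bar\varphi(t,T,\delta)^\top x$) and $\Delta V(T_n,T,\delta)=\varphi(T_n,T,\delta)^\top\Delta X_{T_n}$, while the ordinary exponentials defining $S^\delta$ and $X^0$ yield, via It\^o's formula, $L^\delta(t,x)=e^{(\psi^\delta_t)^\top x}-1$, $\Delta A^\delta_{T_n}=e^{(\psi^\delta_{T_n})^\top\Delta X_{T_n}}-1$ and $\Delta B_{T_n}=e^{\psi_{T_n}^\top\Delta X_{T_n}}-1$, together with $H=0$ and $L=0$ since $X^0$ has neither a continuous martingale part nor totally inaccessible jumps. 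A preliminary (routine) task is to check that the integrability conditions of Definition \ref{def:affine} guarantee Assumption \ref{ass} for this translation, so that Theorem \ref{thm:HJM} applies.

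Having fixed this dictionary, I would substitute it into conditions (i)--(iv) of Theorem \ref{thm:HJM}. By stochastic Fubini and the affine form of $C^X$ one gets $\|\bar b\|^2=\bar\varphi^\top\alpha^X\bar\varphi$ and $\bar b^\top H^\delta=\bar\varphi^\top\alpha^X\psi^\delta$, and computing the stochastic logarithm of $S^\delta$ gives its drift rate $\alpha^\delta_t$ explicitly. With $H=L=0$, condition (i) of Theorem \ref{thm:HJM} collapses to $r_t-f(t,t,\delta)=\alpha^\delta_t$, which is exactly condition (i) here and, for $\delta=0$, reproduces the assumed \eqref{eq:rf}. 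For condition (ii), inserting $\bar a=\bar\varphi^\top\beta^X$, the covariation terms above, $\bar g=\bar\varphi^\top x$, $1+L^\delta=e^{(\psi^\delta)^\top x}$ and $\lambda_t(dx)=\mu_0(t,dx)+\sum_iX^i_{t-}\mu_i(t,dx)$ turns the HJM drift restriction into an identity that is \emph{affine in} $X_{t-}$, whose constant term and coefficient of each $X^i_{t-}$ are precisely \eqref{eq:affineI} for $i=0$ and for $i=1,\ldots,d$. The crucial observation is that requiring each of these $d+1$ equations, i.e. \eqref{eq:affineI} for every $i=0,\ldots,d$, to vanish \emph{separately} is sufficient for the aggregate restriction; this is exactly why the statement gives sufficient, rather than necessary and sufficient, conditions.

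For the predetermined dates, the dictionary turns $\frac{1+\Delta A^\delta_{T_n}}{1+\Delta B_{T_n}}e^{-\int_{(T_n,T]}\Delta V(T_n,u,\delta)\eta(du)}$ into $e^{u_n^\top\Delta X_{T_n}}$ with $u_n=\psi^\delta_{T_n}-\psi_{T_n}-\int_{(T_n,T]}\varphi(T_n,u,\delta)\eta(du)$. Combining conditions (iii) and (iv) of Theorem \ref{thm:HJM} into the single identity \eqref{eq:general_jump_Tn} and invoking the last line of \eqref{affsemchar}, which gives $\E^{\Q}[e^{u^\top\Delta X_{T_n}}\mid\ccF_{T_n-}]=\exp(\gamma_0(T_n,u)+\sum_i\langle X^i_{T_n-},\gamma_i(T_n,u)\rangle)$, I would rewrite the required conditional-expectation equality as $\gamma_0(T_n,u_n)+\sum_i\langle X^i_{T_n-},\gamma_i(T_n,u_n)\rangle=-f(T_n-,T_n,\delta)$, which is exactly condition (iii). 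Taking $T=T_n$ recovers Theorem \ref{thm:HJM}(iii), and the general case $T\ge T_n$ then yields (iv) by subtraction; moreover the nonnegativity of $e^{u_n^\top\Delta X_{T_n}}$ together with the finiteness of its conditional expectation secures the sigma-integrability requirement \eqref{eq:jump_sigma_int}. Finally, substituting $L=0$, $L^\delta=e^{(\psi^\delta)^\top x}-1$ and $\bar g=\bar\varphi^\top x$ into the function $\Lambda$ reduces the integral in \eqref{eq:jumps_int} to exactly the assumed \eqref{eq:aff_integrability}.

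The main obstacle I anticipate is the bookkeeping at the predetermined dates: one must correctly split $\mu^X$ into its atomic part on $\bbT$ (whose contribution is carried by $\Delta V,\Delta A^\delta,\Delta B$ as genuine, non-compensated jumps) and its remainder with absolutely continuous compensator (playing the role of $\mu-\nu$), and then verify that the affine conditional-jump transform in \eqref{affsemchar} correctly evaluates the expectation over $\Delta X_{T_n}$, which is $\ccF_{T_n}$-measurable, whereas $\varphi(T_n,\cdot,\delta)$, $\psi^\delta_{T_n}$ and $\psi_{T_n}$ are $\ccF_{T_n-}$-measurable. The remaining steps---verifying Assumption \ref{ass}, the stochastic Fubini manipulations producing $\bar a,\bar b,\bar g$, and the stochastic-logarithm computation of $\alpha^\delta$---are routine given the integrability hypotheses of Definition \ref{def:affine}.
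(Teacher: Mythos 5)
Your proposal is correct and follows essentially the same route as the paper's proof: you translate the affine specification into the coefficients of the extended HJM framework (num\'eraire with $H=0$, $L=0$, spread with $L^{\delta}(t,x)=e^{(\psi^{\delta}_t)^{\top}x}-1$ off $\bbT$, forward rates with $\bar g(t,x,T,\delta)=\bar\varphi(t,T,\delta)^{\top}x$, and the predetermined jumps carried by $\Delta V$, $\Delta A^{\delta}$, $\Delta B$), then apply Theorem \ref{thm:HJM} and read conditions \emph{(i)}--\emph{(iii)} off its conditions \emph{(i)}--\emph{(iv)} via the affine transform \eqref{affsemchar}, exactly as the paper does. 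Your remarks that the per-coefficient form of \eqref{eq:affineI} is only sufficient for the aggregate affine drift identity, and that the sigma-integrability requirement follows from the a.s.\ finiteness of the conditional jump transform, also match the paper (the latter point is left implicit there, so your treatment is if anything slightly more complete).
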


\begin{proof}
	For all $\delta\in\cD_0$, the present integrability assumptions ensure that $\psi^\delta \cdot X$ and $S^{\delta}$ are special semimartingales.
	Hence, \cite[Theorem II.8.10]{JacodShiryaev} implies that  $S^{\delta}$ admits a stochastic exponential representation of the form \eqref{eq:spread}, \eqref{eq:spread_FV}, with
	\begin{align*}
	\alpha^{\delta}_t
	&= (\psi^\delta_t)^\top\left( \beta_0(t)+\sum_{i=1}^d X^i_{t-}\beta_i(t) \right)
	+ \frac{1}{2}(\psi^\delta_t)^{\top}\left( \alpha_0(t)+\sum_{i=1}^d X^i_{t-}\alpha_i(t) \right)\psi^\delta_t  \\
	&\quad+ \int_{\R^d\setminus\{0\}}\left(e^{(\psi^\delta_t)^{\top}x}-1-(\psi^\delta_t)^{\top}x\right) \left(\mu_0(t,dx)+\sum_{i=1}^dX^i_{t-}\mu_i(t,dx)\right),\\
	\Delta A^{\delta}_{T_n} &= e^{(\psi^\delta_{T_n})^{\top}\Delta X_{T_n}}-1,
	\qquad\text{ for all }n\in\N,
	\end{align*}
	and $L^{\delta}(t,x) = (e^{(\psi^\delta_t)^{\top}x}-1)\Ind_{J^c}(t)$, for all $(t,x)\in\R_+\times\R^d\setminus\{0\}$, where we define the set $J^c:=\R_+\setminus\bbT$.
	Due to \eqref{eq:num_affine}, condition {\em (i)} of Theorem \ref{thm:HJM} reduces to $a^{\delta}_t=f(t,t,0)-f(t,t,\delta)$, for a.e. $t\in\R_+$ and $\delta\in\cD$ (see also equation \eqref{eq:HJM_s} in Remark \ref{rem:condition1}), from which condition {\em (i)} directly follows.
	The integrability conditions appearing in Definition \ref{def:affine} enable us to apply the stochastic Fubini theorem in the version of Theorem IV.65 of \cite{Protter} and, moreover, ensure that $\varphi(\cdot,T,\delta)\cdot X$ is a special semimartingale, for every $\delta\in\cD_0$ and $T\in\R_+$. This permits to obtain a representation of $P(t,T,\delta)$ as in Lemma \ref{lem:bond_prices}, namely
	\begin{align*}
	P(t,T,\delta) &= \exp\bigg(-\int_0^Tf(0,u,\delta)\eta(du) - \int_0^t\bar{\varphi}(s,T,\delta)dB^{X,c}_s \\
	&\qquad\qquad - \sum_{n\in\N}\bar{\varphi}(T_n,T,\delta)^{\top}\Delta X_{T_n}\ind{T_n\leq t} 
	- \int_0^t\bar{\varphi}(s,T,\delta)dX^c_s 
	\\
	&\qquad\qquad 	-\int_0^t\int_{\R^d\setminus\{0\}}\bar{\varphi}(s,T,\delta)^{\top}x\Ind_{J^c}(s)\bigl(\mu^X(ds,dx)-\nu^X(ds,dx)\bigr)\\
	&\qquad\qquad + \int_0^tf(u,u,\delta)\eta(du)\bigg).
	\end{align*}
	In view of the affine structure \eqref{affsemchar} and comparing with \eqref{eq:dec_PtT}, it holds that
	\begin{align*}
	\bar{a}(t,T,\delta) &= \bar{\varphi}(t,T,\delta)^{\top}\Bigl(\beta_0 (t)  + \sum_{i=1}^d X^i_{t-} \beta_i(t)\Bigr),\\
	\|\bar{b}(t,T,\delta)\|^2 &= \bar{\varphi}(t,T,\delta)^{\top} \Bigl(\alpha_0 (t)  + \sum_{i=1}^d X^i_{t-} \alpha_i(t)\Bigr)\bar{\varphi}(t,T,\delta),\\
	\bar{b}(t,T,\delta)^{\top}H^{\delta}_t &= \bar{\varphi}(t,T,\delta)^{\top}\Bigl(\alpha_0 (t)  + \sum_{i=1}^d X^i_{t-} \alpha_i(t)\Bigr)\psi^\delta_t,
	\end{align*}
	and $\bar{g}(t,x,T,\delta) = \bar{\varphi}(t,T,\delta)^{\top}x\Ind_{J^c}(t)$, for all $0\leq t\leq T<+\infty$, $\delta\in\cD_0$ and $x\in\R^d\setminus\{0\}$.
	In the present setting condition {\em (ii)} of Theorem \ref{thm:HJM} takes the form
	\begin{align}\label{eq:aff_proof} 
	&\bar{\varphi}(t,T,\delta)^{\top}\biggl(\beta_0 (t)  + \sum_{i=1}^d X^i_{t-} \beta_i(t)\biggr)= \bar{\varphi}(t,T,\delta)^{\top}\biggl(\alpha_0 (t)  + \sum_{i=1}^d X^i_{t-} \alpha_i(t)\biggr)\left(\frac{1}{2}\bar{\varphi}(t,T,\delta)-\psi^\delta_t\right) \nonumber\\
	&\quad
	+ \int_{\R^d\setminus\{0\}}	\left(e^{(\psi^\delta_t)^{\top}x}\bigl(e^{-\bar{\varphi}(t,T,\delta)^{\top}x}-1\bigr)+\bar{\varphi}(t,T,\delta)^{\top}x\right)\Bigl(\mu_0(t,dx)+\sum_{i=1}^d X^i_{t-}\mu_i(t,dx)\Bigr).
	\end{align}
	Clearly, condition {\em (ii)} of the proposition is sufficient for \eqref{eq:aff_proof} to hold, for every $T\in\R_+$ and a.e. $t\in[0,T]$.
	In the present setting, conditions {\em (iii)}, {\em (iv)} of Theorem \ref{thm:HJM} can be together rewritten as follows, for every $\delta\in\cD_0$, $n\in\N$ and $T\geq T_n$,
	\begin{align*}
	e^{-f(T_n-,T_n,\delta)}
	&= \E^{\Q}\left[\frac{1+\Delta A^{\delta}_{T_n}}{1+\Delta B_{T_n}}e^{-\int_{(T_n,T]}\varphi(T_n,u,\delta)^{\top}\Delta X_{T_n}\,\eta(du)}\biggl|\cF_{T_n-}\right] \\
	&= \E^{\Q}\left[\exp\left(\biggl(\psi^\delta_{T_n}-\psi_{T_n}-\int_{(T_n,T]}\varphi(T_n,u,\delta)\eta(du)\biggr)^{\top}\Delta X_{T_n}\right)\biggl|\cF_{T_n-}\right],
	\end{align*}
	from which condition {\em (iii)} of the proposition follows by making use of \eqref{affsemchar}.
	Finally, in the present setting the integrability condition \eqref{eq:jumps_int} appearing in Theorem \ref{thm:HJM} reduces to condition \eqref{eq:aff_integrability}. In view of Theorem \ref{thm:HJM}, we can conclude that $\Q$ is a risk-neutral with respect to $X^0$.
\end{proof}

\begin{remark}
	Condition {\em (ii)} is only sufficient for the necessary condition \eqref{eq:aff_proof}. Only if the coordinates of $X^i$ are linearly independent, then this condition is also necessary.
\end{remark}

The following examples illustrate the conditions of Proposition \ref{prop:affine}.

\begin{example}[A single-curve Vasi\v cek specification]\label{ex:Vasicek}
	As  first example we study a classical single-curve (i.e., $\cD=\emptyset$) model without jumps, driven by a one-dimensional Gaussian Ornstein-Uhlenbeck process. 
	Let  $\xi$ be the solution of 
	$$ 
	d\xi_t = \kappa (\theta - \xi_t )dt + \sigma dW_t, 
	$$
	where $W$ is a Brownian motion and $\kappa, \theta, \sigma$ are positive constants. 
	As driving process in \eqref{eq:fS} we choose the three-dimensional affine process 
	$$
	X_t=\left(t,\int_0^t \xi_sds, \xi_t\right)^\top, \ t \ge 0.
	$$ 
	The coefficients in the affine semimartingale representation \eqref{affsemchar} are time-homogeneous, i.e.~$\alpha_i(t)=\alpha_i$ and $\beta_i(t)=\beta_i$, $i=0,\dots,3,$ given by
	$$ \beta_0 = 
	\left( \begin{array}{c} 
	1 \\ 0 \\ \kappa \theta 
	\end{array}\right), 
	\ \beta_1 = \left( \begin{array}{c}
	0 \\0 \\ 0
	\end{array}\right), 
	\ \beta_2 = \left( \begin{array}{c}
	0 \\0 \\ 0
	\end{array}\right), 
	\ \beta_3 = \left( \begin{array}{c}
	0 \\1 \\ -\kappa
	\end{array}\right), 
	\ \alpha_0 = \left( \begin{array}{ccc}
	0 & 0 & 0 \\ 0 & 0 & 0 \\ 0 & 0 & \sigma^2
	\end{array}\right), 
	$$
	and $\alpha_1=\alpha_2=\alpha_3=0$. 
	The drift condition \eqref{eq:affineI}  implies  
	\begin{align*}
	\begin{aligned}
	\bar \varphi_1(t,T,0) &= \frac{\sigma^2}{2} \big( \bar \varphi_3(t,T,0)\big)^2 - \kappa \theta \bar \varphi_3(t,T,0), \\
	\bar \varphi_2(t,T,0) &= \kappa \bar \varphi_3(t,T,0).  
	\end{aligned}
	\end{align*}
	We are free to specify $\varphi_3(t,T,0)$ and choose
	\begin{align*} 
	\bar \varphi_3(t,T,0) = \frac{1}{ \kappa} \Big(1- e^{-\kappa  (T-t) }\Big).
	\end{align*}
	This in turn implies that
	\begin{align*}
	\varphi_1(t,T,0) &= \frac{\sigma^2}{\kappa} \Big( e^{-\kappa  (T-t) }- e^{-2\kappa  (T-t) }\Big)-\kappa \theta   e^{-\kappa  (T-t) }, \\
	\varphi_2(t,T,0) &= \kappa  e^{-\kappa  (T-t) },\\
	\varphi_3(t,T,0) &=   e^{-\kappa  (T-t) }.
	\end{align*}
	It can be easily verified that this corresponds to the Vasi\v cek model, see Section 10.3.2.1 in \cite{Filipovic2009}. Note that this also implies  $f(t,t,0)=\xi_t$. Choosing  $r_t=f(t,t,0)$ leads to
	the num\'eraire  $X^0=\exp(\int_0^\cdot f(s,s,0) ds)$. Hence, all conditions in  Proposition \ref{prop:affine} are satisfied and  the model is free of arbitrage. An extension to the multi-curve setting is presented in Example \ref{ex:VasicekMulti}.
\end{example}

\begin{example}[A single-curve Vasi\v cek specification with discontinuity]\label{ex:VasicekJump}
	As  next step, we extend the previous example by introducing a discontinuity at time $1$. Our goal is to provide a simple, illustrative example with jump size depending on the driving process $\xi$ and we therefore remain in the single-curve framework.
	
	We assume that there is a multiplicative jump in the num\'eraire at time $T_1=1$ depending on $\exp(a\xi_1 + \epsilon)$, where $a\in\R$ and $\epsilon\sim \cN(0,b^2)$ is an independent normally distributed random variable with variance $b^2$. As driving process in \eqref{eq:fS} we consider the five-dimensional affine process
	$$ 
	X_t = \left(\int_0^t \eta(ds), \int_0^t \xi_s ds, \xi_t, \ind{t \ge 1} \xi_1,  \ind{t \ge 1} \epsilon\right)^\top, 
	$$
	where $\eta(ds)=ds+\delta_1(ds)$. The size of the jump in $X^0$ is specified by  
	$$ 
	\psi_t^\top\Delta X_t=\ind{t=1}(a \xi_1+\epsilon), 
	$$
	which can be achieved by $\psi_1^\top=(0,0,0,a,1)$. 
	The coefficients in the affine semimartingale representation \eqref{affsemchar} $\alpha_i,\ \beta_i,$ $i=0,\dots,3$, are as in Example~\ref{ex:Vasicek},  with zeros in the additional rows and columns. In addition we have $\beta_4=\beta_5=0$ and $\alpha_4=\alpha_5=0$. Moreover, 
	$$ 
	\int e^{\langle u,x \rangle}\nu^X(\{t\},dx) = \ind{t=1}\exp\left( u_1 + u_4  X_1^3 + \frac{u_5^2b^2}2 \right), \quad u \in \R^5. 
	$$
	Finally, we choose for $t\leq T$ 
	\[
	\varphi_3(t,T,0) = 
	\begin{cases}
	0 &\text{ for } t=1\leq T,\\
	a e^{-\kappa (1-t)}  & \text{ for } t< 1 = T,\\
	e^{-\kappa (T-t)}  & \text{ otherwise,}
	\end{cases}
	\]
	$\varphi_1(1,1,0) = b^2/2$, $\varphi_4(t,T,0)= (1-a)\ind{t=T=1}$, and $\varphi_5(t,T,0)=0$. $\varphi_1(t,T,0)$ for $(t,T)\neq (1,1)$ and $\varphi_2(t,T,0)$ for $t \leq T$ can be derived from $\varphi_3(t,T,0)$ as in the previous example by means of the drift condition \eqref{eq:affineI}. Condition {\em (iii)} is the interesting condition for this example. 
	This condition is equivalent to
	\begin{equation}\label{eq:affjumpcond}
	a X_1^3 - \frac{b^2}{2} = f(1-,1,0), 
	\end{equation}
	which can be satisfied by choosing $f(0,1,0)=-b^2/2$.
	Equation \eqref{eq:affjumpcond}, together with the specification of $\varphi_i(t,T,0)$ for $i=1,\dots,5$ ensures that $f(t,t,0) = \xi_t$. Choosing $r_t=f(t,t,0)$ we obtain that the model is free of arbitrage and 
	the term structure is fully specified: indeed, we recover for $1 \le t \le T$ and $0 \le t \le T <1$ the bond pricing formula from the previous example
	$$
	P(t,T,0) = \exp\Big( - A(T-t,0)- B(T-t,0) X_t^3 \Big), 
	$$
	while, for $0 \leq t < 1 \leq T$,
	\begin{align*}
	P(t,T,0) 
	=  \exp\Big(& - A(T-1,0)- A\bigl(1-t,-B(T-1,0)-a\bigr)\\
	& -B\bigl(1-t,-B(T-1,0)-a\bigr)X_t^3 +\frac{b^2}{2}\Big).	
	\end{align*}	
	The coefficients $A(\tau,u)$ and $B(\tau,u)$ are the well-known solutions of the Riccati equations, such that
	$$ 
	\E^{\Q}\left[e^{-\int_0^\tau \xi_s ds + u \xi_\tau}\right] = e^{-A(\tau,u)-B(\tau,u)\xi_0}, 
	\qquad\text{ for }\tau\geq0,
	$$
	see Section 10.3.2.1 and Corollary 10.2 in  \cite{Filipovic2009} for details and explicit formulae. 
	The example presented here extends Example 6.15 of \cite{KellerResselSchmidtWardenga2018} to a fully specified term-structure model.
\end{example}

\begin{example}[A simple multi-curve Vasi\v cek specification]\label{ex:VasicekMulti}
	We extend Example \ref{ex:Vasicek} to the multi-curve setting and consider $\cD=\{\delta\}$. For simplicity, we choose as driving diffusive part a two-dimensional Gaussian Ornstein-Uhlenbeck process:
	$$ d\xi^i_t = \kappa_i ( \theta_i - \xi^i_t) dt + \sigma_i dW^i_t, \quad i=1,\ 2,$$
	where $(W^1,W^2)^\top$ is a two-dimensional Brownian motion with correlation $\rho$. The driving process $X$ in \eqref{eq:fS} is specified as
	$$ X_t = \left(t, \int_0^t \xi^1_s ds, \xi^1_t, \int_0^t \xi^2_s ds, \xi^2 _t\right)^{\top}. $$
	The coefficients $\alpha_i$ and $\beta_i$, $i=0,\dots,5$ are time-homogeneous and obtained similarly as in Example \ref{ex:Vasicek} from \eqref{affsemchar}. Note that
	$$  \alpha_0 = \left( \begin{array}{ccccc}
	0 & 0 & 0 & 0 & 0\\ 0 & 0 & 0 & 0 & 0 \\ 0 & 0 & \sigma_1^2 & 0 & \rho\sigma_1\sigma_2 \\
	0 & 0 & 0 & 0 & 0 \\ 
	0 & 0 &  \rho\sigma_1\sigma_2 & 0 & \sigma_2^2
	\end{array}\right). $$
	The coefficients $\varphi_1(t,T,0), \dots, \varphi_3(t,T,0)$  are chosen as in Example \ref{ex:Vasicek}, while $\varphi_4(t,T,0)=\varphi_5(t,T,0)=0$.  We note that $f(t,t,0)=\xi^1_t$ and set $r_t = f(t,t,0)$. Moreover, we choose $\varphi_2(t,T,\delta)=\varphi_3(t,T,\delta)=0$ and 
	$$ \bar{\varphi}_5(t,T,\delta)  = \frac{1}{ \kappa_2} \Big(1- e^{-\kappa_2  (T-t) }\Big). $$
	Now, choose $(\psi^\delta_t)^\top=(0,1,0,-1,0)$, so that $\varphi_1(t,T,\delta)$ and $\varphi_4(t,T,\delta)$ can be calculated from ${\bar{\varphi}_5}(t,T,\delta)$ by means of the drift condition \eqref{eq:affineI}. At this stage, the model is fully specified. It is not difficult to verify that  we are in the affine framework computed in detail in Section 4.2 of \cite{BrigoMercurio01}, where explicit expressions for bond prices may be found. Moreover, we obtain $f(t,t,\delta)=\xi^2_t=X_t^5 $ and condition {\em (ii)} (and {\em (iii)}, trivially) from Proposition \ref{prop:affine} is satisfied. Condition {\em (i)} also holds: in this regard, note that
	\begin{align*}
	(\psi^\delta_t)^\top\left( \beta_0+\sum_{i=1}^5 X^i_t \beta_i \right) &= 
	(\psi^\delta_t)^\top\left( \begin{array}{c} 1 \\ X^3_t \\ \kappa_1 \theta_1 - \kappa_1 X^3_t \\ X^5_t \\ \kappa_2 \theta_2 - \kappa_2 X^5_t 		  
	\end{array}\right)    = f(t,t,0)- f(t,t,\delta).
	\end{align*} 
	Since all conditions of Proposition \ref{prop:affine} are now satisfied, we can conclude that the model is free of arbitrage. 
\end{example}

\begin{example}[A multi-curve Vasi\v cek specification with discontinuities]\label{ex:VasicekMultiJump}
	We extend the previous example by allowing for discontinuities, which can be of type I as well as of type II (see Section \ref{sec:discontinuities}) and can have a different impact on the OIS and on the Ibor curves.
	As in Example \ref{ex:VasicekMulti}, we consider a two-dimensional Gaussian Ornstein-Uhlenbeck process:
	$$ d\xi^i_t = \kappa_i ( \theta_i - \xi^i_t) dt + \sigma_i dW^i_t, \qquad i=1,\ 2.$$
	The driving process $X$ in \eqref{eq:fS} is enlarged as follows:
	$$
	X_t = \left( \int_0^t \eta(ds), \int_0^t \xi^1_s ds, \xi^1_t, \int_0^t \xi^2_s ds, \xi^2 _t, \int_0^t J_s ds, J_t \right)^\top,
	$$
	where the process $J$ is defined as
	$$
	J_t = \sum_{T_i \leq t}\epsilon_i e^{-\kappa_3 (t - T_i)}, \quad t\geq 0,
	$$
	for some $\kappa_3\geq0$. A large value of $\kappa_3$ corresponds to a high speed of mean-reversion in $J$ and generates a spiky behavior, corresponding to discontinuities of type II (recall Figure~\ref{fig:spikes}). On the contrary, a small value of $\kappa_3$ generates long-lasting jumps, which are in line with discontinuities of type I.
	For simplicity, the random variables $(\epsilon_i)_{i \ge 1}$ are i.i.d. standard normal, independent of $\xi^1$ and $\xi^2$. The set of stochastic discontinuities is described by the time points $(T_n)_{n\in\N}$ and the measure $\eta(du)$ is defined as in \eqref{eq:eta}.
	The coefficients $\alpha_i$ and $\beta_i$ are time-homogeneous and
	$$ 
	\beta_0 =  
	\left( \begin{array}{c}
	1 \\ 0 \\ \kappa_1\theta_1 \\ 0 \\ \kappa_2\theta_2\\ 0\\0
	\end{array}\right),
	\
	\beta_3 =  
	\left( \begin{array}{c}
	0 \\ 1 \\ -\kappa_1 \\ 0 \\ 0 \\ 0 \\ 0
	\end{array}\right),	
	\ 
	\beta_5 =  
	\left( \begin{array}{c}
	0 \\ 0 \\ 0 \\ 1 \\ -\kappa_2 \\0  \\ 0
	\end{array}\right),	
	\
	\beta_7 =  
	\left( \begin{array}{c}
	0 \\ 0 \\ 0 \\ 0 \\ 0 \\1\\-\kappa_3 
	\end{array}\right),	
	$$
	$\beta_1 = \beta_2 = \beta_4= \beta_6=0$,
	$$
	\alpha_0 = \left( \begin{array}{ccccccc}
	0 & 0 & 0 & 0 & 0 & 0 & 0\\ 0 & 0 & 0 & 0 & 0 & 0 & 0 \\ 0 & 0 & \sigma_1^2 & 0 & \rho\sigma_1\sigma_2 & 0 & 0\\
	0 & 0 & 0 & 0 & 0 & 0 & 0 \\ 
	0 & 0 &  \rho\sigma_1\sigma_2 & 0 & \sigma_2^2 & 0 & 0\\
	0 & 0 & 0 & 0 & 0 & 0 & 0\\
	0 & 0 & 0 & 0 & 0 & 0 & 0
	\end{array}\right),
	$$
	and $\alpha_i= 0$ for $i = 1,...,7$. Moreover, 
	\begin{equation*}
	\int_{\R^7} e^{\langle u,x \rangle} \nu^X(\{t\},dx) = \sum_{n\in\N} \ind{t = T_n} \exp\left(u_1 + \frac{u_7^2}{2}\right), \quad u \in \R^7,
	\end{equation*}
	so that 
	$$
	\gamma_0 (T_n,u)
	= u_1 + \frac{u_7^2}{2}, \quad u \in \R^7 
	$$
	and $\gamma_j(T_n,u) = 0$ for all $j = 1,\dots, 7$ and $n\in\N$. 
	
	We assume that jumps in $X^0$ and in the spread occur at the stochastic discontinuities $(T_n)_{n\in\N}$ and are specified by
	\begin{align*}
	\psi^{\top}_t \Delta X_t = \sum_{n\in \N} \ind{t = T_n} c \epsilon_n,
	&&(\psi^\delta_t)^\top \Delta X_t = \sum_{n\in \N} \ind{t = T_n} a \epsilon_n,
	\end{align*}
	which can be achieved by choosing 
	\[
	\psi_t^\top = (0,0,0,0,0,0,c)
	\qquad\text{ and }\qquad
	(\psi^\delta_t)^\top = (0,0,0,1,0,0,a).
	\]
	From this specification, it follows that the spread is given by
	$$
	S^{\delta}_t=S^{\delta}_0\exp\left(\int_0^t\xi^2_s ds+aJ_t\right).
	$$
	In line with Remark \ref{rem:disc_HJM}, the parameters $c$ and $a$ control the different impact of the stochastic discontinuities on the num\'eraire (and, hence, on the OIS curve) and on the spread (and, hence, on the Ibor curve).
	The functions $\varphi_i(t,T,0)$, for $i =1,...,7$ and $t \leq T$, are chosen as 
	\begin{align*}
	\varphi_1(t,T,0) = 
	\begin{cases}
	-\theta_1 \kappa_1 e^{-\kappa_1(T-t)} - \frac{\sigma_1^2}{\kappa_1}(e^{-2\kappa_1(T-t)}-e^{-\kappa_1(T-t)}), 
	&\text{for } t,T\notin\bbT,\\
	c e^{-\kappa_3(T-t)} -\frac{1}{\kappa_3}(e^{-2\kappa_3(T-t)}-e^{-\kappa_3(T-t)}),
	&\text{for } t\in\bbT\not\ni T,\\	
	\frac{c^2}{2},& \text{for } t=T\in\bbT,\\
	0,   &\text{otherwise},
	\end{cases}
	\end{align*}
	\begin{align*}
	\varphi_3(t,T,0) &= \varphi_3(t,T,\delta) =
	\begin{cases}
	e^{-\kappa_1 (T-t)},  & \text{for } t,T\notin\bbT,\\
	0, & \text{otherwise},
	\end{cases}\\
	\varphi_6(t,T,0) &= 
	\begin{cases}
	\kappa_3 e^{-\kappa_3 (T-t)},  & \text{for } t,T\notin\bbT,\\
	0, & \text{otherwise},
	\end{cases}\\
	\varphi_7(t,T,0) &= 
	\begin{cases}
	e^{-\kappa_3 (T-t)},  & \text{for } T\notin\bbT,\\
	0, & \text{otherwise},
	\end{cases}
	\end{align*}
	$\varphi_2(t,T,0) = \varphi_2(t,T,\delta) =
	\kappa_1 \varphi_3(t,T,0)$ and $\varphi_4(t,T,0) = \varphi_5(t,T,0) = 0$. For $\varphi(t,T,\delta)$ we choose 
	\begin{align*}
	\varphi_1(t,T,\delta) &=
	\begin{cases}
	\begin{aligned}
	&	-\theta_1 \kappa_1 e^{-\kappa_1(T-t)} - \frac{\sigma_1^2}{\kappa_1}\Big(e^{-2\kappa_1(T-t)}-e^{-\kappa_1(T-t)}\Big)\\[1mm]
	&\quad+\theta_2 \kappa_2 e^{-\kappa_2(T-t)} - \frac{\sigma_2^2}{\kappa_2}\Big(e^{-2\kappa_2(T-t)}-e^{-\kappa_2(T-t)}\Big)\\[1mm]
	&\quad +\frac{\rho \sigma_1 \sigma_2}{\kappa_1\kappa_2}
	\Big(-\kappa_1e^{-\kappa_1(T-t)}-\kappa_2e^{-\kappa_2(T-t)}\\
	& \hphantom{\quad+\frac{\rho \sigma_1 \sigma_2}{\kappa_1\kappa_2}
		\Big(\big)}+(\kappa_1+\kappa_2)e^{-(\kappa_1+\kappa_2)(T-t)}\Big),
	\end{aligned}	
	&\text{for } t,T\notin\mathbb{T},\\
	&\\
	\begin{aligned}
	&\frac{(1+a\kappa_3)}{\kappa_3}\Big((1+c\kappa_3)e^{-\kappa_3(T -t)}-(1+a\kappa_3)e^{-2\kappa_3(T -t)}\Big),
	& 	 	\end{aligned}
	& 
	\text{for } t\in\mathbb{T}\not\ni T,\\
	&\\
	\half(a-c)^2,&  \text{for } t=T\in \mathbb{T},\\
	&\\
	0,   &\text{otherwise},
	\end{cases}
	\end{align*}
	\begin{align*}
	\varphi_5(t,T,\delta) &= 
	\begin{cases}
	-e^{-\kappa_2 (T-t)},  & \text{for } t,T\notin\bbT,\\
	0, & \text{otherwise},
	\end{cases}\\[2mm]
	\varphi_6(t,T,\delta) &= 
	\begin{cases}
	\kappa_3 (1+a\kappa_3)e^{-\kappa_3 (T-t)},  & \text{for } t,T\notin\bbT,\\
	0, & \text{otherwise},
	\end{cases}\\[2mm]
	\varphi_7(t,T,\delta) &= 
	\begin{cases}
	(1+a\kappa_3)e^{-\kappa_3 (T-t)},  & \text{for } T\notin\bbT,\\
	0, & \text{otherwise},
	\end{cases}
	\end{align*}
	and $\varphi_4(t,T,\delta) = \kappa_2 \varphi_5(t,T,\delta)$. With this specification, it can be checked that condition {\em (ii)} of Proposition \ref{prop:affine} is satisfied. Furthermore, it can be verified that 
	\[
	f(t,t,0) = \xi^1_t + J_t
	\qquad\text{ and }\qquad
	f(t,t,\delta) = \xi^1_t-\xi^2_t + (1+a\kappa_3)J_t.
	\]
	Therefore, condition {\em (i)} of Proposition \ref{prop:affine} is satisfied by setting $r_t = \xi^1_t + J_t$.
	By choosing $f(0,T_n,0) = -c^2/2$ and $f(0,T_n,\delta) = -\half(a-c)^2$ and calculating 
	\begin{align*}
	\int_{(T_n,T]}\varphi_1(T_n,u,0)\eta(du) 
	&= -\frac{c}{\kappa_3}\big(e^{-\kappa_3 (T-T_n)}-1\big) + \frac{1}{2\kappa_3^2}\big(e^{-\kappa_3 (T-T_n)}-1\big)^2,\\[1mm]
	\int_{(T_n,T]}\varphi_7(T_n,u,0)\eta(du) 
	&=-\frac{1}{\kappa_3}\big(e^{-\kappa_3 (T-T_n)}-1\big),\\[1mm]
	\int_{(T_n,T]}\varphi_1(T_n,u,\delta)\eta(du) 
	&= \frac{(a-c)(1+a\kappa_3)}{\kappa_3}\big(e^{-\kappa_3 (T-T_n)}-1\big) \\
	&\quad+ \frac{(1+a\kappa_3)^2}{2\kappa_3^2}\big(e^{-\kappa_3 (T-T_n)}-1\big)^2,\\[1mm]
	\int_{(T_n,T]}\varphi_7(T_n,u,\delta)\eta(du) 
	&=-\frac{(1+a\kappa_3)}{\kappa_3}\big(e^{-\kappa_3 (T-T_n)}-1\big),
	\end{align*}
	we can see that condition {\em (iii)}
	\begin{eqnarray*}
		- f(T_n-,T_n,0)
		&=& -\int_{(T_n,T]}\varphi_1(T_n,u,0)\eta(du) + \half \left(-c -\int_{(T_n,T]}\varphi_7(T_n,u,0)\eta(du)\right)^2,\\
		- f(T_n-,T_n,\delta)
		&=&-\int_{(T_n,T]}\varphi_1(T_n,u,\delta)\eta(du)+\half \left(a-c-\int_{(T_n,T]}\varphi_7(T_n,u,\delta)\eta(du)\right)^2,		
	\end{eqnarray*}
	is satisfied for all $n\in\N$ and $T \geq T_n$. We can conclude that the term structure is fully specified and, by Proposition \ref{prop:affine}, the model is free of arbitrage.
\end{example}

\section{An FTAP for multiple curve financial markets} \label{sec:NAFLVR}

In this section, we characterize absence of arbitrage in a multiple curve financial market. At the present level of generality, this represents the first rigorous analysis of absence of arbitrage in post-crisis fixed-income markets.

As introduced in Definition \ref{def:market}, a multiple curve financial market is a {\em large financial market} containing uncountably many securities. An economically convincing notion of no-arbitrage for large financial markets has been introduced in \cite{CuchieroKleinTeichmann} under the name of {\em no asymptotic free lunch with vanishing risk} (NAFLVR), generalizing the classic requirement of NFLVR for finite-dimensional markets (see \cite{DelbaenSchachermayer1994} and \cite{CuchieroTeichmann:Emery}). In this section, we extend the main result of \cite{CuchieroKleinTeichmann} to an infinite time horizon and apply it to a general multiple curve financial market.

Let $(\Omega, \cF, \bbF, \P)$ be a filtered probability space satisfying the usual conditions of right-continuity and $\P$-completeness, with $\cF:=\bigvee_{t\geq0}\cF_t$.
Let us recall that a process $Z=(Z_t)_{t\geq0}$ is said to be a {\em semimartingale up to infinity} if there exists a process $\overline{Z}=(\overline{Z}_t)_{t\in[0,1]}$ satisfying $\overline{Z}_t = Z_{t/(1-t)}$, for all $t<1$, and such that $\overline{Z}$ is a semimartingale with respect to the filtration $\overline{\bbF}=(\overline{\ccF}_t)_{t\in[0,1]}$ defined by
\[
\overline{\ccF}_t = \begin{cases}
\ccF_{\frac{t}{1-t}}, &\text{ for }t<1,\\
\ccF, &\text{ for }t=1,
\end{cases}
\]
see Definition 2.1 in \cite{CS05}.
We denote by $\bbS$ the space of real-valued semimartingales up to infinity equipped with the Emery topology, see \cite{Stricker81}.
For a set $C\subset \bbS$, we denote by $\overline{C}^{\,\bbS}$ its closure with respect to the Emery topology.

We denote by $\cI:=\R_+ \times  \cD_0\times \R$ the parameter space characterizing the traded assets included in Definition \ref{def:market}. We furthermore assume the existence of a tradable num\'eraire with strictly positive adapted price process $X^0$.
For notational convenience, we represent OIS zero-coupon bonds by setting $\pifra(t,T,0,K):=P(t\wedge T,T)$, for all $(t,T)\in\R^2_+$ and $K\in\R$. 
We also set $\pifra(t,T,\delta,K)=\pifra(T+\delta,T,\delta,K)$ for all  $\delta\in\cD$, $K\in\R$ and $t\geq T+\delta$.

For $n\in\N$, we denote by $\cI^n$ the family of all subsets $A\subset\cI$ containing $n$ elements. For each  $A=((T_1,\delta_1,K_1),\dots,(T_n,\delta_n,K_n)) \in \cI^n$, we define the collection of $X^0$-discounted prices $\bS^A=(S^1,\dots,S^n)$ by 
\[ 
S^i := (X^0)^{-1}\pifra(\cdot,T_i,\delta_i,K_i), 
\qquad\text{ for } i=1,\ldots,n.
\]
For each $A\in\cI^n$, $n\in\N$, we assume that $\bS^A$ is a semimartingale on $(\Omega,\bbF,\P)$ and we denote by $L_{\infty}(\bS^A)$ the set of all $\R^{|A|}$-valued, predictable processes $\btheta=(\theta^1,\ldots,\theta^{|A|})$ which are integrable up to infinity with respect to $\bS^A$, in the sense of Definition 4.1 in \cite{CS05}.
We assume that trading occurs in a self-financing way and we say that a process $\btheta\in L_{\infty}(\bS^A)$ is a {\em $1$-admissible trading strategy} if $\btheta_0=0$ and $(\btheta\cdot\bS^A)_t\geq-1$ a.s. for all $t\geq0$. The set $\cX^A_1$ of wealth processes generated by $1$-admissible trading strategies with respect to $\bS^A$ is defined as
\begin{equation*}
\cX^A_1 
:= \bigl\{ \btheta \cdot \bS^A: \btheta\in L_{\infty}(\bS^A) \text{ and } \btheta \text{ is $1$-admissible}\bigr\} 
\subset \bbS.
\end{equation*}
The set of wealth processes generated by trading in at most $n$ arbitrary assets is given by $ \cX^n_1 = \medcup_{A \in \cI^n} \cX^A_1 $. 
By allowing to trade in arbitrary finitely many assets and letting the number of assets increase to infinity, we arrive at generalized portfolio wealth processes. 
The corresponding set of $1$-admissible wealth processes is given by $ \cX_1 := \overline{\medcup_{n \in \mathbb{N}} \cX^n_1}^{\,\bbS}$, so that all admissible generalized portfolio wealth processes in the multiple curve financial market are finally given by
\[ 
\cX := \bigcup_{\lambda >0} \lambda \cX_1.
\]

\begin{remark}	\label{rem:single_strike}
	The set $\cX$ can be equivalently described as the set of all admissible generalized portfolio wealth processes which can be constructed in the financial market consisting of the following two subsets of assets:
	\begin{enumerate}
		\item[(i)] 
		OIS zero-coupon bonds, for all maturities $T\in\R_+$,
		\item[(ii)] 
		FRAs, for all tenors $\delta\in\cD$, all settlement dates $T\in\R_+$ and strike $K'$,
	\end{enumerate}
	for some {\em fixed arbitrary} strike $K'\in\R$.
	This follows from our standing assumption of linear valuation of FRAs together with the associativity of the stochastic integral.
\end{remark}

Since each element $X\in\cX$ is a semimartingale up to infinity,  the limit $X_{\infty}$ exists pathwise and is finite.
We can therefore define $K_0:=\{X_{\infty}: X \in \cX\}$ 
and $C:=(K_0-L^0_+)\medcap L^\infty$, the convex cone of bounded claims super-replicable with zero initial capital.

\begin{definition}  \label{def:NAFLVR}
	We say that the multiple curve financial market satisfies no asymptotic free lunch with vanishing risk (NAFLVR) if 
	\[ 
	\overline C \medcap L^\infty_+ = \{0\},
	\]
	where $\overline C$ denotes the norm closure in $L^\infty$ of the set $C$.
\end{definition}

The following result provides a general formulation of the fundamental theorem of asset pricing for multiple curve financial markets. 

\begin{theorem}	\label{thm:FTAP}
	The multiple curve financial market satisfies NAFLVR if and only if there exists an equivalent separating measure $\Q$, i.e., a probability measure $\Q\sim \P$ on $(\Omega,\cF)$ such that $\E^\Q[X_{\infty}] \le 0$ for all $X \in \cX$.
\end{theorem}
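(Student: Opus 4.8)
The plan is to prove the two implications separately, treating the substantive content as a reduction to the large-financial-market fundamental theorem of \cite{CuchieroKleinTeichmann}, which is formulated over a finite horizon and which I would import as a black box once its hypotheses are verified for our market. The only genuinely new ingredient is the passage to an infinite time horizon, carried out through the time substitution $t\mapsto t/(1-t)$ and the semimartingale-up-to-infinity framework of \cite{CS05} already recalled above; the multiple curve features enter only through the size of the index set $\cI$.

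For the direction assuming an equivalent separating measure $\Q$, I would argue directly. By definition $\E^\Q[X_\infty]\le 0$ for every $X\in\cX$; since every $X\in\cX$ satisfies $X\ge-\lambda$ for some $\lambda>0$ (the lower bound being inherited by Emery limits in the closure defining $\cX_1$), the terminal value $X_\infty$ is bounded below, so $\E^\Q[X_\infty]$ is well defined in $(-\infty,+\infty]$ and, by hypothesis, lies in $(-\infty,0]$. If $g\in C$, then $g=X_\infty-h$ with $X\in\cX$, $h\in L^0_+$ and $g\in L^\infty$, whence $g\le X_\infty$ and $\E^\Q[g]\le\E^\Q[X_\infty]\le 0$. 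As $g\mapsto\E^\Q[g]$ is $\|\cdot\|_\infty$-continuous, the inequality $\E^\Q[g]\le 0$ persists on the norm closure $\overline{C}$. Consequently any $g\in\overline{C}\medcap L^\infty_+$ satisfies $g\ge 0$ and $\E^\Q[g]\le 0$; since $\Q\sim\P$ this forces $g=0$ a.s., establishing NAFLVR.

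For the converse I would first reparametrize time. Setting $\overline{X}_t:=X_{t/(1-t)}$ for $t<1$ and $\overline{X}_1:=X_\infty$, every semimartingale up to infinity becomes, by construction, an ordinary $\overline{\bbF}$-semimartingale on $[0,1]$, and the Emery topology on $\bbS$ coincides with the Emery topology of the reparametrized processes; $1$-admissibility, the constraint $\btheta_0=0$ and the lower bound $-1$ are preserved, being properties of paths, and integrability up to infinity in the sense of Definition~4.1 of \cite{CS05} becomes ordinary $\overline{\bS}^A$-integrability on $[0,1]$. The multiple curve market is thereby identified with a large financial market on the finite horizon $[0,1]$ over $(\Omega,\overline{\ccF}_1,\overline{\bbF},\P)$, whose traded discounted prices are the reparametrized $\overline{\bS}^A$, $A\in\cI^n$, $n\in\N$ (and, by Remark~\ref{rem:single_strike}, it suffices to retain OIS bonds and FRAs at a single fixed strike). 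Under this identification NAFLVR for the original market is exactly NAFLVR for the time-changed market, and the cone $C$ is literally the same set of random variables. I would then apply the FTAP of \cite{CuchieroKleinTeichmann} to obtain an equivalent separating measure $\overline{\Q}\sim\P$ on $\overline{\ccF}_1$ with $\E^{\overline{\Q}}[\overline{X}_1]\le 0$ for all admissible $\overline{X}$. Since $\overline{\ccF}_1=\cF$ and $\overline{X}_1=X_\infty$, the measure $\overline{\Q}$ is the desired separating measure $\Q$ on $(\Omega,\cF)$.

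The main obstacle I anticipate is precisely the legitimacy of this reduction: one must check that the two constructions, namely the Emery closure defining $\cX_1=\overline{\medcup_{n\in\N}\cX^n_1}^{\,\bbS}$ and the extraction of a separating functional in \cite{CuchieroKleinTeichmann}, transfer faithfully under the time change, i.e. that the key Emery-closedness underlying the bipolar and Hahn--Banach step in \cite{CuchieroKleinTeichmann} holds for the reparametrized market on $[0,1]$. This hinges on the compatibility of the semimartingale-up-to-infinity topology of \cite{CS05} with the large-market closure operation, together with the stability of NAFLVR under the reparametrization. By contrast, the distinctive multiple curve ingredients, the larger asset family $\cI$ and the countable set of stochastic discontinuities $(T_n)_{n\in\N}$, enter only through the index set and do not affect the abstract argument, the reduction to a single strike in Remark~\ref{rem:single_strike} keeping the span of attainable claims unchanged.
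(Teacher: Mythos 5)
Your strategy is the same as the paper's: prove the easy implication directly (the paper actually gets it for free from the if-and-only-if statement of \cite{CuchieroKleinTeichmann}, but your hands-on argument via $\E^\Q[g]\le\E^\Q[X_\infty]\le 0$ on $C$ and norm-continuity is correct), and for the converse perform the time change $t\mapsto t/(1-t)$ and invoke the FTAP of \cite{CuchieroKleinTeichmann}. However, in the converse direction — the substantive one — your proposal has genuine gaps: it is a plan whose key technical steps are either asserted without proof or explicitly deferred.

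First, the identification of the time-changed market with a finite-horizon large financial market does not get off the ground as stated. The price collections $\bS^A$ are only assumed to be semimartingales on $[0,+\infty)$, \emph{not} semimartingales up to infinity; only the wealth processes $\btheta\cdot\bS^A$ are in $\bbS$, by the definition of $L_{\infty}(\bS^A)$. Consequently the reparametrized $\overline{\bS}^A_t=\bS^A_{t/(1-t)}$ need not be $\overline{\bbF}$-semimartingales on $[0,1]$, and your phrase ``ordinary $\overline{\bS}^A$-integrability on $[0,1]$'' is not even defined. The paper repairs exactly this point (its step 2): each $S^i$ is replaced by $Y^i=(K^i)^{-1}\cdot S^i\in\bbS$ for a suitable deterministic function $K^i>0$, as in the proof of Theorem 5.5 of \cite{CS05}, and the associativity of the stochastic integral together with Theorem 4.2 of \cite{CS05} shows that this substitution leaves $\cX^A_1$ unchanged. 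This normalization is a necessary ingredient, not a cosmetic one, and it is absent from your proposal.

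Second, you say you will import the theorem of \cite{CuchieroKleinTeichmann} ``once its hypotheses are verified for our market'', but you never verify them. Two things must actually be checked. (a) The sets $\overline{\cX}^A_1$ must satisfy Definition 2.1 of \cite{CuchieroKleinTeichmann}: convexity, the uniform lower bound, monotonicity in $A$, and above all the concatenation property — if $H^1,H^2\ge0$ are bounded predictable with $H^1H^2=0$ and $Z=H^1\cdot\overline{X}^1+H^2\cdot\overline{X}^2\ge-1$, then $Z\in\overline{\cX}^A_1$ — which the paper verifies by writing $Z=(H^1\overline{\btheta}^1+H^2\overline{\btheta}^2)\cdot\overline{\bS}^A$ (its step 5). (b) The equality $K_0=\overline{K}_0$, i.e. that NAFLVR transfers between the two markets, requires pushing Emery limits through the reparametrization; the paper does this by invoking the stability of the semimartingale topology under time changes (Proposition 1.3 of \cite{Stricker81}) to show that a sequence converging in $\bbS$ on $\bbF$ has time-changed versions converging in the semimartingale topology on $\overline{\bbF}$ (its step 4). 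You assert this compatibility in one sentence and then, in your closing paragraph, label the very same point ``the main obstacle I anticipate'', leaving it unresolved. In short: right route, correct easy direction, but the reduction that carries all the content of the theorem is not carried out.
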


\begin{proof}
	We divide the proof into several steps, with the goal of reducing our general multiple curve financial market to the setting considered in \cite{CuchieroKleinTeichmann}.
	
	1) In view of Remark \ref{rem:single_strike}, it suffices to consider FRA contracts with fixed strike $K=0$, for all tenors $\delta\in\cD$ and settlement dates $T\in\R_+$. Consequently, the parameter space $\cI=\R_+\times\cD_0\times\R$ can be reduced to $\cI':=\R_+\times\{0,1,\ldots,m\}$, which can be further transformed into a subset of $\R_+$ via
	$\cI'\owns(T,i)\mapsto i+T/(1+T)\in[0,m+1)=:\mathcal{J}$.
	
	2) Without loss of generality, we can assume that $(X^0)^{-1}\pifra(\cdot,T,\delta,0)$ is a semimartingale up to infinity, for every $T\in\R_+$ and $\delta\in\cD_0$.
	Indeed, let $n\in\N$ and $A\in\mathcal{J}^n$. Similarly as in the proof of \cite[Theorem 5.5]{CS05}, 
	for each $i=1,\ldots,n$, there exists a deterministic function $K^i>0$ such that $(K^i)^{-1}\in L(S^i)$ and $Y^i:=(K^i)^{-1}\cdot S^i\in\bbS$. 
	Setting ${\bm Y}^A=(Y^1,\ldots,Y^n)$, the associativity of the stochastic integral together with \cite[Theorem 4.2]{CS05} allows to prove that
	\[
	\cX^A_1 = \bigl\{\bphi\cdot {\bm Y}^A : \bphi\in L_{\infty}({\bm Y}^A), \bphi_0=0 \text{ and }(\bphi\cdot{\bm Y}^A)_t\geq -1\text{ a.s. for all }t\geq0\bigr\}.
	\]
	Henceforth, we shall assume that $\bS^A\in\bbS$, for all $A\in\mathcal{J}^n$ and $n\in\N$.
	
	3) For $t\in[0,1)$ and $u\in[0,+\infty)$, let $\alpha(t):=t/(1-t)$ and $\beta(u):=u/(1+u)$. The functions $\alpha$ and $\beta$ are two inverse isomorphisms between $[0,1)$ and $[0,+\infty)$ and can be extended to $[0,1]$ and $[0,+\infty]$.
	For $A\in\mathcal{J}^n$, $n\in\N$, let us define the process $\overline{\bS}^A=(\overline{\bS}^A_t)_{t\in[0,1]}$ by $\overline{\bS}^A_t:=\bS^A_{\alpha(t)}$, for all $t\in[0,1]$. Since $\bS^A\in\bbS$, the process $\overline{\bS}^A$ is a semimartingale on $(\Omega,\overline{\bbF},\P)$.
	Let $\btheta\in L_{\infty}(\bS^A)$. We define the process $\overline{\btheta}=(\overline{\btheta}_t)_{t\in[0,1]}$ by $\overline{\btheta}_t:=\btheta_{\alpha(t)}$, for all $t<1$, and $\overline{\btheta}_1:=0$. As in the proof of \cite[Theorem 4.2]{CS05}, it holds that $\overline{\btheta}\in L(\overline{\bS}^A)$.
	Moreover, it can be shown that 
	\beq	\label{eq:bar_processes}
	(\overline{\btheta}\cdot\overline{\bS}^A)_t=(\btheta\cdot\bS^A)_{\alpha(t)},
	\eeq
	for all $t\in[0,1]$.
	Conversely, if $\overline{\btheta}\in L(\overline{\bS}^A)$, the process $\btheta=(\btheta_t)_{t\geq0}$ defined by $\btheta_t:=\overline{\btheta}_{\beta(t)}$, for $t\geq0$, belongs to $L_{\infty}(\bS^A)$ and it holds that
	\begin{equation*}
	(\btheta\cdot{\bm S}^A)_t=(\overline{\btheta}\cdot\overline{{\bm S}}^A)_{\beta(t)},
	\end{equation*}
	for all $t\geq0$.
	Furthermore, $(\btheta\cdot{\bm S}^A)_{\infty}=(\overline{\btheta}\cdot\overline{{\bm S}}^A)_1$ holds if $\overline{\btheta}_1=0$.
	
	4) In view of step 3), we can consider an equivalent financial market indexed over $[0,1]$ in the filtration $\overline{\bbF}$. To this effect, for each $A\in\mathcal{J}^n$, $n\in\N$, let us define
	\[
	\overline{\cX}^A_1 := \bigl\{\overline{\btheta}\cdot\overline{{\bm S}}^A : \overline{\btheta}\in L(\overline{{\bm S}}^A), \, 
	\overline{\btheta}_0=\overline{\btheta}_1=0 
	\text{ and }
	(\overline{\btheta}\cdot\overline{{\bm S}}^A)_t\geq -1\text{ a.s. } \forall t\in[0,1]\bigr\}
	\] 
	and the sets
	\[
	\overline{\cX}^n_1:=\bigcup_{A\in\mathcal{I}^n}\overline{\cX}^A_1,
	\qquad
	\overline{\cX}_1:=\overline{\bigcup_{n\in\N}\overline{\cX}^n_1}^{\,\bbS},
	\qquad
	\overline{\cX} := \bigcup_{\lambda>0}\lambda\overline{\cX}_1
	\]
	and $\overline{K}_0:=\{\overline{X}_1 : \overline{X}\in\overline{\cX}\}$, where the closure in the definition of $\overline{\cX}_1$ is taken in the semimartingale topology on the filtration $\overline{\bbF}$.
	Let $(X^k)_{k\in\N}\subseteq\medcup_{n\in\N}\cX^n_1$ be a sequence converging to $X$ in the topology of $\bbS$ (on the filtration $\bbF$). By definition, for each $k\in\N$, there exists a set $A_k$ such that $X^k=\btheta^k\cdot{\bm S}^{A_k}$ for some $1$-admissible strategy $\btheta^k\in L_{\infty}({\bm S}^{A_k})$. 
	In view of \eqref{eq:bar_processes}, it holds that
	$$ 
	X^k_{\alpha(t)}=(\overline{\btheta}^k\cdot\overline{{\bm S}}^{A_k})_t=:\overline{X}^k_t, $$ 
	for all $t\in[0,1]$.
	Since the topology of $\bbS$ is stable with respect to changes of time (see Proposition 1.3 in \cite{Stricker81}), the sequence $(\overline{X}^k)_{k\in\N}$ converges in the semimartingale topology (on the filtration $\overline{\bbF}$) to $\overline{X}=X_{\alpha(\cdot)}\in\overline{\cX}_1$. This implies that $K_0 \subseteq \overline{K}_0$.
	An analogous argument allows to show the converse inclusion, thus proving that 
	$
	K_0 = \overline{K}_0.
	$
	In view of Definition \ref{def:NAFLVR}, this implies that NAFLVR holds for the original financial market if and only if it holds for the equivalent financial market indexed over $[0,1]$ on the filtration $\overline{\bbF}$.
	
	5) It remains to show that, for every $A\in\mathcal{J}^n$, $n\in\N$, the set $\overline{\cX}^A_1$ satisfies the requirements of \cite[Definition 2.1]{CuchieroKleinTeichmann}.
	First, $\overline{\cX}^A_1$ is convex and, by definition, each element $\overline{X}\in\overline{\cX}^A_1$ starts at $0$ and is uniformly bounded from below by $-1$.
	Second, let $\overline{X}^1,\overline{X}^2 \in \overline{\cX}^A_1$ and two bounded $\overline{\bbF}$-predictable processes $H^1,H^2 \ge 0$ such that $H^1 H^2=0$. By definition, there exist processes $\overline{\btheta}^1$ and $\overline{\btheta}^2$ such that $\overline{X}^i=\overline{\btheta}^i \cdot \overline{\bS}^A$, for $i=1,2$. If $Z:=H^1 \cdot \overline{X}^1+H^2\cdot \overline{X} ^2 \ge -1$, then 
	$$ 
	Z =(H^1 \overline{\btheta}^1 + H^2 \overline{\btheta}^2) \cdot \overline{\bS}^A\in \overline{\cX}^A_1,
	$$  
	so that the required concatenation property holds. Moreover, $\overline{\cX}^{A^1} \subset \overline{\cX}^{A^2}$ if $A^1 \subset A^2$.
	The theorem finally follows from \cite[Theorem 3.2]{CuchieroKleinTeichmann}.
\end{proof}

\begin{remark}   \label{rem:ELMM}
	An {\em equivalent local martingale measure} (ELMM) is a probability measure $\Q\sim\P$ on $(\Omega,\cF)$ such that $(X^0)^{-1}\pifra(\cdot,T,\delta,K)$ is a $\Q$-local martingale, for all $T\in\R_+$, $\delta\in\cD_0$ and $K\in\R$.
	Under additional conditions (namely of locally bounded discounted price processes, see \cite[Section 3.3]{CuchieroKleinTeichmann}), it can be shown that NAFLVR is equivalent to the existence of an ELMM. In general, one cannot replace in Theorem \ref{thm:FTAP} a separating measure with an ELMM, as shown by an explicit counterexample in \cite{CuchieroKleinTeichmann}. However, as a consequence of Fatou's lemma, the existence of an ELMM always represents a sufficient condition for NAFLVR. Assuming that the num\'eraire $X^0$ is tradable, an ELMM corresponds to a {\em risk-neutral measure} (see Section \ref{sec:TSM}), which has been precisely characterized in the previous sections of the paper.
\end{remark}

\begin{remark} 
	Absence of arbitrage in large financial markets has also been studied by \cite{KK98} in the sense of {\em no asymptotic arbitrage of the first kind} (NAA1), which is a weaker requirement than NAFLVR, see \cite[Section 4]{CuchieroKleinTeichmann}. Differently from \cite{KK98}, we work on a fixed filtered probability space $(\Omega,\cF,\bbF,\mathbb{P})$ and not on a sequence of probability spaces. On the other hand, we allow for uncountably many traded assets (see Definition \ref{def:market}).
\end{remark}

\section{Conclusions}

The aim of this paper has been to introduce stochastic discontinuities into term structure modeling in a multi-curve setup.  Stochastic discontinuities are a key feature in interest rate markets and we introduced two types for the classification of these jumps. To this end, we provided a general analysis of post-crisis multiple curve markets under minimal assumptions. 
	
	Three key results have been developed in our work: first, we provide a characterization of absence of arbitrage in an extended HJM setting. Second, we provide a similar characterization for market models. Both results rely on a fundamental theorem of asset pricing for multiple curve financial markets. Third, we provide a flexible class of multi-curve models based on affine semimartingales, a setup  allowing for stochastic discontinuities. 
	
	While the focus of our analysis is a fundamental treatment of pricing in multiple curve markets, it is worth emphasizing that this framework has a large potential for many other applications such as risk management, requiring further studies. In particular for the latter, a proper modeling  of the market price of risk and taking macro-economic variables into account are equally important.

\appendix
\section{Technical results}\label{app:proofs}

The following technical result on ratios and products of stochastic exponentials easily follows from Yor's formula, see \cite[\textsection~II.8.19]{JacodShiryaev}.

\begin{corollary} \label{cor:stoch_exp_property}For any semimartingales $X$, $Y$ and $Z$ with $\Delta Z > -1$, it holds that
	\begin{align*}
	\frac{\mathcal{E}(X)\mathcal{E}(Y)}{\mathcal{E}(Z)} = \mathcal{E}\Bigg(  & X + Y - Z + \langle X^{\textup{c}},Y^{\textup{c}}\rangle- \langle Y^{\textup{c}},Z^{\textup{c}}\rangle- \langle X^{\textup{c}},Z^{\textup{c}}\rangle +\langle Z^{\textup{c}},Z^{\textup{c}}\rangle \\
	&  +\sum_{0<s\leq \cdot} \bigg(\frac{\Delta Z_s(-\Delta X_s - \Delta Y_s + \Delta Z_s) + \Delta X_s \Delta Y_s}{1+\Delta Z_s}\bigg)\Bigg).
	\end{align*}
\end{corollary}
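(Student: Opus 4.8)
The plan is to obtain the identity by applying Yor's formula \cite[\textsection~II.8.19]{JacodShiryaev} --- which asserts that $\mathcal{E}(U)\mathcal{E}(V)=\mathcal{E}(U+V+[U,V])$ for any two semimartingales $U,V$ --- three times in succession. First I would collapse the numerator: a single application gives $\mathcal{E}(X)\mathcal{E}(Y)=\mathcal{E}(N)$ with $N:=X+Y+[X,Y]$, where $[X,Y]=\langle X^{c},Y^{c}\rangle+\sum_{0<s\leq\cdot}\Delta X_s\Delta Y_s$. Next I would rewrite the reciprocal of the denominator as a stochastic exponential. Because the hypothesis $\Delta Z>-1$ ensures that $\mathcal{E}(Z)$ never vanishes, one has $1/\mathcal{E}(Z)=\mathcal{E}(\hat Z)$ with $\hat Z:=-Z+\langle Z^{c},Z^{c}\rangle+\sum_{0<s\leq\cdot}(\Delta Z_s)^2/(1+\Delta Z_s)$; this reciprocal identity again follows from Yor's formula, since a direct check (using $\hat Z^{c}=-Z^{c}$ and $\Delta\hat Z_s=-\Delta Z_s/(1+\Delta Z_s)$) shows $Z+\hat Z+[Z,\hat Z]=0$, i.e.\ $\mathcal{E}(Z)\mathcal{E}(\hat Z)=\mathcal{E}(0)=1$.

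A third and final application of Yor's formula to the product $\mathcal{E}(N)\mathcal{E}(\hat Z)$ then yields $\mathcal{E}(X)\mathcal{E}(Y)/\mathcal{E}(Z)=\mathcal{E}(W)$ with $W:=N+\hat Z+[N,\hat Z]$, and the remaining task is to expand $W$ and identify it with the exponent in the statement. The first-order terms of $N$ and $\hat Z$ immediately produce the $X+Y-Z$ appearing there. For the second-order terms I would split each quadratic covariation into its continuous and purely discontinuous parts: using $N^{c}=X^{c}+Y^{c}$ and $\hat Z^{c}=-Z^{c}$, the continuous contributions $\langle X^{c},Y^{c}\rangle$ (from $N$), $\langle Z^{c},Z^{c}\rangle$ (from $\hat Z$), and $-\langle X^{c},Z^{c}\rangle-\langle Y^{c},Z^{c}\rangle$ (from $[N,\hat Z]$) assemble into exactly the four angle-bracket terms of the claimed exponent.

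The jump contributions collect into $\sum_{0<s\leq\cdot}\big(\Delta X_s\Delta Y_s+(\Delta Z_s)^2/(1+\Delta Z_s)-\Delta Z_s\,\Delta N_s/(1+\Delta Z_s)\big)$, where $\Delta N_s=\Delta X_s+\Delta Y_s+\Delta X_s\Delta Y_s$. The one step that requires genuine care --- and the place I would expect a sign or cancellation slip --- is the algebraic reduction of this summand. Placing it over the common denominator $1+\Delta Z_s$ and cancelling the two $\Delta X_s\Delta Y_s\Delta Z_s$ terms that appear with opposite signs leaves precisely $(\Delta Z_s(-\Delta X_s-\Delta Y_s+\Delta Z_s)+\Delta X_s\Delta Y_s)/(1+\Delta Z_s)$, matching the sum in the statement and completing the verification. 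Everything else is routine bookkeeping, consistent with the remark that the result follows easily from Yor's formula.
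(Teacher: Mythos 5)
Your proposal is correct and is exactly the argument the paper intends: the paper states the corollary with only the remark that it follows from Yor's formula \cite[\textsection~II.8.19]{JacodShiryaev}, and your three applications of that formula (product, reciprocal via $\mathcal{E}(Z)\mathcal{E}(\hat Z)=1$, then product again) together with the jump-term algebra are the standard way to carry this out. The cancellation of the $\Delta X_s\Delta Y_s\Delta Z_s$ terms and the assembly of the four angle brackets check out, so nothing is missing.
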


\begin{proof}\textit{of Lemma \ref{lem:bond_prices}}
	Due to Assumption \ref{ass} it can be verified by means of Minkowski's integral inequality and H\"older's inequality that the stochastic integrals appearing in \eqref{eq:dec_PtT} are well-defined, for every $T\in\R_+$ and $\delta\in\cD_0$.
	Let $F(t, T,\delta) : = \int_{(t, T]}f(t, u, \delta)\eta(du)$, for all $0\leq t\leq T<+\infty$. For $t<T$, equation \eqref{eq:fwd_rate} implies that 
	\begin{align*}
	F(t,T,\delta)
	& = \int_{(t, T]}\left( f(0, u, \delta) + \int_0^t a(s,u,\delta) ds 
	+V(t,u,\delta) + \int_0^t b(s,u,\delta) dW_s \right.\\
	& \left.   
	+ \int_0^t \int_E g(s, x, u,\delta) \bigl(\mu(ds,dx) - \nu(ds,dx)\bigr) \right)\eta(du)\\
	& = \int_0^T f(0, u, \delta)\eta(du) + \int_0^T \int_0^t a(s, u, \delta) ds\eta(du)+ \int_0^T V(t,u,\delta) \eta(du) \\
	&  
	+ \int_0^T \int_0^t b(s, u, \delta) dW_s\eta(du) + \int_0^T \int_0^t \int_E g(s, x, u,\delta) \bigl(\mu(ds,dx) - \nu(ds,dx)\bigr) \eta(du) \\
	&  - \int_0^t f(0, u, \delta)\eta(du) 
	- \int_0^t \int_0^u a(s, u, \delta) ds\eta(du) - \int_0^tV(u,u,\delta) \eta(du)  \\
	&  - \int_0^t \int_0^u b(s, u, \delta) dW_s\eta(du) - \int_0^t \int_0^u \int_E g(s, x, u,\delta) \bigl(\mu(ds,dx) - \nu(ds,dx)\bigr) \eta(du).
	\end{align*}
	Due to Assumption \ref{ass}, we can apply ordinary and stochastic Fubini theorems, in the versions of Theorem 2.2 in \cite{Veraar12} for the stochastic integral with respect to $W$ and in the version of Proposition A.2 in \cite{Bjoerk1997} for the stochastic integral with respect to the compensated random measure $\mu-\nu$. We therefore obtain
	\begin{align} \label{eq:485}
	F(t,T,\delta)
	& = \int_0^T f(0, u, \delta)\eta(du)  - \int_0^t f(u, u, \delta)\eta(du) + \int_0^t \int_{[s,T]}\! a(s, u, \delta) \eta(du)ds \nonumber\\
	&\quad + \int_0^TV(t,u,\delta)\eta(du) 
	+ \int_0^t \int_{[s,T]} b(s, u, \delta) \eta(du) dW_s \nonumber\\
	& \quad + \int_0^t \int_E \int_{[s,T]} g(s, x, u,\delta) \eta(du) \bigl(\mu(ds,dx) - \nu(ds,dx)\bigr)  \nonumber\\
	& = \int_0^T f(0, u, \delta)\eta(du) + \int_0^t \bar{a}(s, T, \delta) ds + \sum_{n\in\N} \bar{V}(T_n, T, \delta)\ind{T_n \leq t} + \int_0^t \bar{b}(s, T, \delta) dW_s\nonumber\\
	&  \quad
	 	+ \int_0^t \int_E \bar{g}(s, x, T,\delta) \bigl(\mu(ds,dx) - \nu(ds,dx)\bigr) - \int_0^t f(u, u, \delta)\eta(du)\nonumber\\
	&=: G(t,T,\delta).
	\end{align}
	In \eqref{eq:485}, the finiteness of $\int_0^{\cdot}f(u,u,\delta)\eta(du)$ follows by Assumption \ref{ass} together with an analogous application of ordinary and stochastic Fubini theorems.
	
	To complete the proof, it remains to establish \eqref{eq:dec_PtT} for $t = T\in\R_+$. To this effect, it suffices to show that $\Delta G(T,T,\delta)=\Delta F(T,T,\delta)$ for all $T\in\R_+$, where $\Delta G(T,T,\delta):=G(T,T,\delta)-G(T-,T,\delta)$, and similarly for $\Delta F(T,T,\delta)$.  
	By \cite[Proposition II.1.17]{JacodShiryaev},  $\nu(\{T\}\times E)=0$ implies that, for every $T\in\R_+$, $\Q[\mu(\{T\}\times E)\neq0]=0$. Therefore, it holds that $\Q[\Delta G(T,T,\delta)\neq0]>0$ only if $T=T_n$, for some $n\in\N$.
	For $T=T_1$, equations \eqref{eq:485} and \eqref{eq:fwd_rate} together imply that
	\begin{align*}
	\Delta G(T_1,T_1,\delta)
	&= \bar{V}(T_1,T_1,\delta) - f(T_1,T_1,\delta) 
	= -f(T_1-,T_1,\delta)= -F(T_1-,T_1,\delta)	\\
	& = \Delta F(T_1,T_1,\delta),
	\end{align*}
	where the last equality follows from the convention $F(T_1,T_1,\delta)=0$. By induction over $n$, the same reasoning yields that 
	$$
	\Delta G(T_n,T_n,\delta)=\Delta F(T_n,T_n,\delta),
	$$ for all $n\in\N$.
	Finally, the semimartingale property of $\delta$-tenor bond prices $(P(t,T,\delta))_{0\leq t\leq T}$ follows from \eqref{eq:485}.
\end{proof}

\section{Embedding of market models into the HJM framework} \label{app:MM}

The general market model considered in Section \ref{sec:market models}, as specified by equation \eqref{def:Ldyn}, can be embedded into the extended HJM framework of Section \ref{sec:TSM}. For simplicity of presentation, let us consider a market model for a single tenor (i.e., $\cD=\{\delta\}$) and suppose that the forward Ibor rate $L(\cdot,T,\delta)$ is given by \eqref{def:Ldyn}, for all $T\in\cT^{\delta}=\{T_1,\ldots,T_N\}$, with $T_{i+1}-T_i=\delta$ for all $i=1,\ldots,N-1$.
Always for simplicity, let us assume that there is a fixed number $N+1$ of discontinuity dates, coinciding with the set of dates $\cT^0:=\cT^{\delta}\medcup\{T_{N+1}\}$, with $T_{N+1}:=T_N+\delta$.
We say that $\{L(\cdot,T,\delta):T\in\cT^{\delta}\}$ can be {\em embedded into an extended HJM model} if there exists a sigma-finite measure $\eta$ on $\R_+$, a spread process $S^{\delta}$ and a family of forward rates $\{f(\cdot,T,\delta):T\in\cT^{\delta}\}$ such that
\beq	\label{eq:embed}
L(t,T,\delta) = \frac{1}{\delta}\left(S^{\delta}_t\frac{P(t,T,\delta)}{P(t,T+\delta)}-1\right),
\qquad\text{ for all }0\leq t\leq T\in \cT^{\delta},
\eeq
where $P(t,T,\delta)$ is given by \eqref{eq:PtT}, for all $0\leq t\leq T\in\cT^{\delta}$. In other words, in view of equation \eqref{eq:FRA}, the HJM model generates the same forward Ibor rates as the original market model, for every date $T\in\cT^{\delta}$.

We remark that, since a market model involves OIS bonds only for maturities $\cT^0=\{T_1,\ldots,T_{N+1}\}$, there is no loss of generality in taking the measure $\eta$ in \eqref{eq:PtT} as a purely atomic measure of the form
\beq	\label{eq:meas_embed}
\eta(du) = \sum_{i=1}^{N+1}\delta_{T_i}(du).
\eeq
More specifically, if OIS bonds for maturities $\cT^0$ are defined through \eqref{eq:PtT} via a generic measure of the form \eqref{eq:eta}, then there always exists a measure $\eta$ as in \eqref{eq:meas_embed} generating the same bond prices, up to a suitable specification of the forward rate process.

The following proposition explicitly shows how a general market model  can be embedded into an HJM model. 
For $t\in[0,T_N]$, we define 
$$
i(t):=\min\{j\in\{1,\ldots,N\}:T_j\geq t\},
$$
so that $T_{i(t)}$ is the smallest  $T\in\cT^{\delta}$ such that $T\geq t$.

\begin{proposition}	\label{prop:embed}
	Suppose that all the conditions of Theorem \ref{thm:Libor} are satisfied, with respect to the measure $\eta$ given in \eqref{eq:meas_embed}, and assume furthermore that $L(t,T,\delta)>-1/\delta$ a.s. for all $t\in[0,T]$ and $T\in\cT^{\delta}$. 
	Then, under the above assumptions, the market model $\{L(\cdot,T,\delta):T\in\cT^{\delta}\}$ can be embedded into an HJM model by choosing
	\begin{enumerate}
		\item[(i)]
		a family of forward rates $\{f(\cdot,T,\delta):T\in\cT^{\delta}\}$ with initial values
		\[
		f(0,T_i,\delta) = f(0,T_{i+1},0) - \log\left(\frac{1+\delta L(0,T_i,\delta)}{1+\delta L(0,T_{i-1},\delta)}\right),
		\qquad\text{ for }i=1,\ldots,N,
		\]
		and satisfying equation \eqref{eq:fwd_rate} where, for all $i=1,\ldots,N$, the volatility process $b(\cdot,T_i,\delta)$, the jump function $g(\cdot,\cdot,T_i,\delta)$ and $(\Delta V(T_n,T_i,\delta))_{n=1,\ldots,N}$ are respectively given by
		\[
		b(t,T_i,\delta) = 
		\begin{cases}
		b(t,T_i,0) + b(t,T_{i+1},0) - \delta\frac{b^L(t,T_i,\delta)}{1+\delta L(t-,T_i,\delta)}, 
		&\text{if }i=i(t),\\
		b(t,T_{i+1},0) - \delta\left(\frac{b^L(t,T_i,\delta)}{1+\delta L(t-,T_i,\delta)}-\frac{b^L(t,T_{i-1},\delta)}{1+\delta L(t-,T_{i-1},\delta)}\right), 
		&\text{if }i>i(t),
		\end{cases}
		\]
		\[
		g(t,x,T_i,\delta) = 
		\begin{cases}
		\begin{aligned}
		&g(t,x,T_{i+1},0) + g(t,x,T_i,0)  \\
		&- \log\left(1+\frac{\delta g^L(t,x,T_i,\delta)}{1+\delta L(t-,T_i,\delta)}\right),
		\end{aligned}
		& \text{ if }i=i(t),\\
		&\\
		g(t,x,T_{i+1},0) 
		- \log\left(\frac{1+\frac{\delta g^L(t,x,T_i,\delta)}{1+\delta L(t-,T_i,\delta)}}{1+\frac{\delta g^L(t,x,T_{i-1},\delta)}{1+\delta L(t-,T_{i-1},\delta)}}\right),
		& \text{ if }i>i(t),
		\end{cases}	
		\]
		\[
		\Delta V(T_n,T_i,\delta) = \Delta V(T_n,T_{i+1},0) - \log\left(\frac{\frac{1+\delta L(T_n,T_i,\delta)}{1+\delta L(T_n-,T_i,\delta)}}{\frac{1+ \delta L(T_n,T_{i-1},\delta)}{1+\delta L(T_n-,T_{i-1},\delta)}}\right),
		\quad\text{ for }i\geq n+1,
		\]
		and the process $a(\cdot,T_i,\delta)$ is determined by condition (ii) of Theorem \ref{thm:HJM};
		\item[(ii)]
		a spread process $S^{\delta}$ with initial value
		$
		S^{\delta}_0 = \bigl(1+\delta L(0,0,\delta)\bigr)P(0,\delta)
		$
		and satisfying \eqref{eq:spread}, \eqref{eq:spread_FV}, where the processes $\alpha^{\delta}$, $H^{\delta}$, the function $L^{\delta}$ and the random variables $(\Delta A^{\delta}_{T_n})_{n=1,\ldots,N}$ are respectively given by
		\begin{align*}
		\alpha^{\delta}_t &= 0,	\qquad\qquad\qquad
		H^{\delta}_t = 0,	\qquad\qquad\qquad
		L^{\delta}(t,x) = 0,\\
		\Delta A^{\delta}_{T_n} &= \left(\frac{1+\delta L(T_n,T_n,\delta)}{1+\delta L(T_n-,T_n,\delta)}\right)e^{f(T_n-,T_n,0)-f(T_n-,T_n,\delta)-\Delta V(T_n,T_{n+1},0)}-1.
		\end{align*}
	\end{enumerate}
	Moreover, the resulting HJM model satisfies all the conditions of Theorem \ref{thm:HJM}.
\end{proposition}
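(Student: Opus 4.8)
The plan is to establish the proposition in two stages: first I would verify that the constructed forward rates and spread process define an HJM model fulfilling every condition of Theorem~\ref{thm:HJM}, and then I would check that this model reproduces the prescribed Ibor rates, i.e.\ the embedding identity~\eqref{eq:embed}. Write $\Phi_i(t):=S^\delta_t\,P(t,T_i,\delta)/P(t,T_{i+1},0)$, so that, in view of~\eqref{eq:FRA}, the identity~\eqref{eq:embed} for the settlement date $T_i\in\cT^\delta$ is the pathwise equality $\Phi_i=1+\delta L(\cdot,T_i,\delta)$ on $[0,T_i]$. The hypothesis $L(t,T,\delta)>-1/\delta$ guarantees that $1+\delta L(\cdot,T_i,\delta)$, and hence $\Phi_i$, are strictly positive semimartingales, so that comparison through stochastic logarithms is available.

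For the first stage I would use that, since $\eta$ is purely atomic as in~\eqref{eq:meas_embed}, all barred quantities reduce to finite sums over grid points. Because $\alpha^\delta\equiv H^\delta\equiv L^\delta\equiv0$, condition~(i) of Theorem~\ref{thm:HJM} (in the form~\eqref{eq:HJM_s}) collapses to $f(t,t,\delta)=f(t,t,0)$ for a.e.\ $t$, which I arrange through the immaterial off-grid values of the forward rate; condition~(ii) holds by construction, as $a(\cdot,T_i,\delta)$ is defined to satisfy it; and the conditions for $\delta=0$ are inherited from the standing hypotheses of Theorem~\ref{thm:Libor}. The genuine point is condition~(iii): inserting the stated $\Delta A^\delta_{T_n}$ and conditioning on $\ccF_{T_n-}$, the $\ccF_{T_n-}$-measurable factors come out of the conditional expectation, the term carrying $\Delta L(T_n,T_n,\delta)$ vanishes by condition~(ii) of Theorem~\ref{thm:Libor} at $T=T_n$, and the residual expectation $\E^{\Q}[(1+\Delta B_{T_n})^{-1}e^{-\Delta V(T_n,T_{n+1},0)}\mid\ccF_{T_n-}]$ equals $e^{-f(T_n-,T_n,0)}$ by conditions~(iii)--(iv) of Theorem~\ref{thm:HJM} for the OIS bonds ($\delta=0$); the factors then recombine to $e^{-f(T_n-,T_n,\delta)}$. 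Condition~(iv) and the integrability requirements~\eqref{eq:jumps_int}, \eqref{eq:jump_sigma_int} follow analogously from~\eqref{eq:jumps_int_lib}, \eqref{eq:jump_sigma_int_lib} and condition~(ii) of Theorem~\ref{thm:Libor}.

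For the embedding I would match $\Phi_i$ and $1+\delta L(\cdot,T_i,\delta)$ term by term. At $t=0$, substituting the prescribed initial forward rates into $\log P(0,T_i,\delta)$ and $\log P(0,T_{i+1},0)$ and telescoping the logarithmic increments reduces $\Phi_i(0)$ to $P(0,\delta)\,e^{f(0,T_1,0)}\,(1+\delta L(0,T_i,\delta))$; since $P(0,\delta)=e^{-f(0,T_1,0)}$ by~\eqref{eq:PtT} and~\eqref{eq:meas_embed}, this equals $1+\delta L(0,T_i,\delta)$. For the dynamics I would represent $P(\cdot,T_i,\delta)$ and $P(\cdot,T_{i+1},0)$ as stochastic exponentials via Corollary~\ref{cor:bond_stoch_exp}, combine them with $S^\delta=S^\delta_0\mathcal{E}(A^\delta)$ through Yor's formula (Corollary~\ref{cor:stoch_exp_property}), and read off from~\eqref{eq:dec_PtT} the continuous-martingale, random-measure and predictable-time jump parts of $\log\Phi_i$. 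Comparing with the corresponding parts of $\log(1+\delta L(\cdot,T_i,\delta))$ taken from~\eqref{def:Ldyn} shows agreement exactly when $\bar b(t,T_i,\delta)$, $\bar g(t,x,T_i,\delta)$ and $\bar V(T_n,T_i,\delta)$ assume the cumulative forms dictated by the construction; the logarithmic structure produces the $\log(1+\cdots)$ corrections appearing in the formulas for $g$ and $\Delta V$. Differencing these cumulative quantities across consecutive maturities yields the individual coefficients $b(\cdot,T_i,\delta)$, $g(\cdot,\cdot,T_i,\delta)$ and $\Delta V(T_n,T_i,\delta)$, the split between $i=i(t)$ and $i>i(t)$ arising precisely because the closed interval $[t,T_i]$ contains the boundary atom $T_{i(t)}$ only in the first case.

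Having matched the initial values and all martingale parts, the difference $D_i:=\Phi_i-(1+\delta L(\cdot,T_i,\delta))$ is a predictable, continuous, finite-variation process vanishing at $0$. To conclude $D_i\equiv0$ without recomputing drifts, I would invoke the local martingale properties already secured: by the first stage $S^\delta P(\cdot,T_i,\delta)/X^0$ is a $\Q$-local martingale, while by Theorem~\ref{thm:Libor} both $P(\cdot,T_{i+1},0)/X^0$ and $L(\cdot,T_i,\delta)P(\cdot,T_{i+1},0)/X^0$ are $\Q$-local martingales, hence so is $(1+\delta L(\cdot,T_i,\delta))P(\cdot,T_{i+1},0)/X^0$. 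Writing $Y:=P(\cdot,T_{i+1},0)/X^0>0$, the product $D_iY$ is therefore a local martingale; since $D_i$ is continuous of finite variation we have $[D_i,Y]=0$, so integration by parts shows that $Y_-\cdot D_i$ is a predictable finite-variation local martingale, hence null by~\cite[Corollary~I.3.16]{JacodShiryaev}, and $Y>0$ forces $D_i\equiv0$. The principal obstacle I anticipate is the atomic-measure bookkeeping in the dynamics-matching step---keeping track of which grid atoms lie in $[t,T_i]$, telescoping the barred quantities correctly into individual coefficients, and propagating the boundary atom into the two-case formulas---together with the delicate conditioning at the discontinuity dates that reduces condition~(iii) of Theorem~\ref{thm:HJM} to the market-model jump condition.
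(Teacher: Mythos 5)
Your overall strategy is the same as the paper's: you derive the coefficient formulas by matching Brownian, totally inaccessible jump and predictable-time jump parts, you fix $a(\cdot,T_i,\delta)$ and $\alpha^{\delta}=0$ through conditions \emph{(i)}--\emph{(ii)} of Theorem~\ref{thm:HJM}, and your reduction of condition \emph{(iii)} to condition \emph{(ii)} of Theorem~\ref{thm:Libor} at $T=T_n$ combined with conditions \emph{(iii)}--\emph{(iv)} for $\delta=0$ is exactly the computation the paper's sketch alludes to, and it is correct. (One remark: for the purely atomic measure \eqref{eq:meas_embed} the diagonal terms $f(t,t,\delta)$ and $f(t,t,0)$ do not appear in condition \emph{(i)} of Theorem~\ref{thm:HJM} and in \eqref{eq:HJM_r} at all, since they originate from the Lebesgue part of $\eta$; this is how the paper gets $\alpha^{\delta}=0$, and it makes your off-grid ``arrangement'' $f(t,t,\delta)=f(t,t,0)$ unnecessary --- and not quite the coherent reading of the hypotheses, which must be the conditions re-derived for the atomic measure.)

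The genuine gap is in your concluding step. Matching the continuous-martingale, random-measure and predictable-jump parts of $\log\Phi_i$ with those of $\log(1+\delta L(\cdot,T_i,\delta))$ yields that the \emph{logarithmic} difference $C_i:=\log\Phi_i-\log\bigl(1+\delta L(\cdot,T_i,\delta)\bigr)$ is a continuous, predictable, finite-variation process with $C_{i,0}=0$, i.e. $\Phi_i=\bigl(1+\delta L(\cdot,T_i,\delta)\bigr)e^{C_i}$. It does \emph{not} follow that the difference $D_i=\Phi_i-\bigl(1+\delta L(\cdot,T_i,\delta)\bigr)=\bigl(1+\delta L(\cdot,T_i,\delta)\bigr)\bigl(e^{C_i}-1\bigr)$ is continuous of finite variation: since $C_i$ is continuous, $\Delta D_{i,t}=\delta\,\Delta L(t,T_i,\delta)\bigl(e^{C_{i,t}}-1\bigr)$, which vanishes only where $e^{C_i}=1$ --- essentially the identity you are trying to prove. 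Hence $D_i$ is in general neither continuous nor predictable, $[D_i,Y]\neq0$, and your integration-by-parts argument is circular. The repair is to run the same argument multiplicatively: with $U:=\Phi_iY=S^{\delta}P(\cdot,T_i,\delta)/X^0$ and $V:=\bigl(1+\delta L(\cdot,T_i,\delta)\bigr)Y$, both $\Q$-local martingales by your stage 1 and by Theorem~\ref{thm:Libor}, one has $U=e^{C_i}V$ with $e^{C_i}$ continuous, predictable and locally bounded, so $U-U_0-\int_0^{\cdot}e^{C_{i,s}}dV_s=\int_0^{\cdot}V_{s-}e^{C_{i,s}}dC_{i,s}$ is a predictable finite-variation local martingale null at $0$, hence identically zero by \cite[Corollary I.3.16]{JacodShiryaev}; strict positivity of $V_-e^{C_i}$ then forces $C_i\equiv0$, i.e. \eqref{eq:embed}. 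This is, in substance, how the paper concludes as well: it compares the two local martingales $S^{\delta}P(\cdot,T,\delta)/X^0$ and $\bigl(1+\delta L(\cdot,T,\delta)\bigr)P(\cdot,T+\delta)/X^0$ directly, noting they share initial values, continuous martingale parts and jumps, and invokes \cite[Theorem I.4.18 and Corollary I.4.19]{JacodShiryaev}.
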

\begin{proof}
	Since the proof involves rather lengthy computations, we shall only provide a sketch. 
	For $T\in\cT^{\delta}$, by means of Theorem \ref{thm:Libor} and the assumption $L(t,T,\delta)>-1/\delta$ a.s. for all $t\in[0,T]$, the process $(1+\delta L(\cdot,T,\delta))P(\cdot,T+\delta)/X^0$ is a strictly positive $\Q$-local martingale, so that $L(t-,T,\delta)>-1/\delta$  a.s. for all $t\in[0,T]$ and $T\in\cT^{\delta}$. 
	Let us define the process $Y(T,\delta)=(Y_t(T,\delta))_{0\leq t\leq T}$ by $Y_t(T,\delta):=S^{\delta}_tP(t,T,\delta)/P(t,T+\delta)$. An application of Corollary \ref{cor:stoch_exp_property}, together with equation \eqref{eq:spread} and Corollary \ref{cor:bond_stoch_exp}, 
	yields a stochastic exponential representation and a semimartingale decomposition of the process $Y(T,\delta)$. 
	
	For the spread process $S^{\delta}$ given in \eqref{eq:spread}, we start by imposing $H^{\delta}=0$ and $L^{\delta}=0$. 
	We then proceed to determine the processes describing the forward rates $\{f(\cdot,T,\delta):T\in\cT^{\delta}\}$ satisfying \eqref{eq:fwd_rate}.
	In view of \eqref{eq:embed}, for each $T\in\cT^{\delta}$, we determine the process $b(\cdot,T,\delta)$ by matching the Brownian part of  $Y(T,\delta)$ with the Brownian part of $\delta L(\cdot,T,\delta)$, while the jump function $g(\cdot,\cdot,T,\delta)$ is obtained in a similar way by matching the totally inaccessible jumps of $Y(T,\delta)$ with the totally inaccessible jumps of $\delta L(\cdot,T,\delta)$. 
	The drift process $a(\cdot,T,\delta)$ is then univocally determined by imposing condition {\em (ii)} of Theorem \ref{thm:HJM}.
	As a next step, for each $n=1,\ldots,N$, the random variable $\Delta A^{\delta}_{T_n}$ appearing in \eqref{eq:spread}, \eqref{eq:spread_FV} is determined by requiring that
	\beq	\label{eq:DeltaA}
	\Delta Y_{T_n}(T_n,\delta) = \delta\Delta L(T_n,T_n,\delta).
	\eeq
	Then, for each $n=1,\ldots,N-1$ and $T\in\{T_{n+1},\ldots,T_N\}$, the random variable $\Delta V(T_n,T,\delta)$ is determined by requiring that
	\beq	\label{eq:DeltaV}
	\Delta Y_{T_n}(T,\delta) = \delta\Delta L(T_n,T,\delta),
	\eeq
	while $\Delta V(T_n,T,\delta):=0$ for $T\leq T_n$. 
	Note that $\Delta V(T_n,T_{N+1},\delta) = 0$ for $\delta \neq 0$ and $n= 1,\ldots, N+1$.
	At this stage, the forward rates $\{f(\cdot,T,\delta):T\in\cT^{\delta}\}$ are completely specified.
	With this specification, it can be verified that conditions \eqref{eq:jumps_int_lib} and \eqref{eq:jump_sigma_int_lib} respectively imply that conditions \eqref{eq:jumps_int} and \eqref{eq:jump_sigma_int} of Theorem \ref{thm:HJM} are satisfied, using the fact that Assumption \ref{ass} as well as conditions \eqref{eq:jumps_int}, \eqref{eq:jump_sigma_int} are satisfied for $\delta=0$ and $T\in\cT^0$ by assumption.
	Moreover, it can be checked that, if condition {\em (ii)} of Theorem \ref{thm:Libor} is satisfied, then the random variables $\Delta A^{\delta}_{T_n}$ and $\Delta V(T_n,T,\delta)$ resulting from \eqref{eq:DeltaA}, \eqref{eq:DeltaV} satisfy conditions {\em (iii)}, {\em (iv)} of Theorem \ref{thm:HJM}, for every $n\in\N$ and $T\in\cT^{\delta}$.
	It remains to specify the process $\alpha^{\delta}$ appearing in \eqref{eq:spread_FV}. To this effect, an inspection of Lemma \ref{lem:bond_prices} and Corollary \ref{cor:bond_stoch_exp} reveals that, since the measure $\eta$ is purely atomic, the terms $f(t,t,\delta)$ and $f(t,t,0)$ do not appear in condition {\em (i)} of Theorem \ref{thm:HJM} and in condition \eqref{eq:HJM_r}, respectively. Since \eqref{eq:HJM_r} holds by assumption, $\alpha^{\delta}=0$ follows by imposing condition {\em (i)} of Theorem \ref{thm:HJM}.
	We have thus obtained that the two processes
	\[
	\bigl(1+\delta L(\cdot,T,\delta)\bigr)\frac{P(\cdot,T+\delta)}{X^0}
	\qquad\text{ and }\qquad
	\frac{S^{\delta}P(\cdot,T,\delta)}{X^0}
	\]
	are two local martingales starting from the same initial values, with the same continuous local martingale parts and with identical jumps. By means of \cite[Theorem I.4.18 and Corollary I.4.19]{JacodShiryaev}, we conclude that \eqref{eq:embed} holds for all $0\leq t\leq T\in\cT^{\delta}$.
\end{proof}

We want to point out that the specification described in Proposition \ref{prop:embed} is not the unique HJM model which allows embedding a given market model $\{L(\cdot,T,\delta):T\in\cT^{\delta}\}$. Indeed, $b(t,T_{i(t)},\delta)$ and $H^{\delta}_t$ can be arbitrarily specified as long as they satisfy 
\[
b(t,T_{i(t)},\delta) - H^{\delta}_t 
= b(t,T_{i(t)},0) + b(t,T_{i(t)+1},0) - \delta\frac{b^L(t,T_{i(t)},\delta)}{1+\delta L(t-,T_{i(t)},\delta)},
\]
together with suitable integrability requirements. An analogous degree of freedom exists concerning the specification of the functions $g(t,x,T_{i(t)},\delta)$ and $L^{\delta}(t,x)$.
Note also that the random variable $\Delta A^{\delta}_{T_n}$ given in Proposition \ref{prop:embed} can be equivalently expressed as
\[
\Delta A^{\delta}_{T_n}
= \frac{1+\delta L(T_n,T_n,\delta)}{1+\delta L(T_{n-1},T_{n-1},\delta)}\frac{P(T_n,T_{n+1})}{P(T_{n-1},T_n)}-1,
\qquad\text{ for }n=1,\ldots,N.
\]

\vspace{2cm}
\addresseshere

\end{document}